\documentclass[aps,prb,reprint,amsmath,amssymb,amsfonts,superscriptaddress,longbibliography,nofootinbib]{revtex4-2}
\makeatletter
\newcommand*{\rom}[1]{\expandafter\@slowromancap\romannumeral #1@}
\makeatother

\usepackage[T1]{fontenc}
\usepackage[colorlinks=true, allcolors=blue]{hyperref}
\usepackage{lipsum}
\usepackage{verbatim}
\usepackage{enumitem}
\usepackage{stmaryrd}
\usepackage{xspace}

\usepackage{bm}
\usepackage{amsthm}
\usepackage{array}
\usepackage{mathtools}
\DeclarePairedDelimiter\ceil{\lceil}{\rceil}

\DeclareMathOperator{\diag}{diag}

\usepackage{physics}
\usepackage{siunitx}

\usepackage[table,svgnames,dvipsnames]{xcolor}
\usepackage{graphicx}
\usepackage{tikz}
\usepackage{quantikz}
\usetikzlibrary{quantikz2}
\usepackage{color}

\usepackage{dcolumn}
\usepackage{tcolorbox}
\usepackage{booktabs}
\usepackage{multirow}
\usepackage{makecell}

\usepackage{subfigure}
\usepackage{epsfig}

\usepackage{orcidlink}

\usepackage{cleveref}
\crefname{equation}{Eq.}{Eqs.}
\crefname{section}{Sec.}{Secs.}
\crefname{appendix}{Appendix}{Appendices}
\crefname{figure}{Fig.}{Figs.}
\crefname{table}{Table}{Tables}
\crefname{theorem}{Theorem}{Theorems}

\newtheorem{theorem}{Theorem}
\newtheorem{lemma}[theorem]{Lemma}
\newtheorem{definition}{Definition}

\newtheorem{corollary}[theorem]{Corollary}

\newcommand{\nocontentsline}[3]{}
\newcommand{\tocless}[2]{\vspace{2ex}\bgroup\let\addcontentsline=\nocontentsline#1{#2}\egroup}
\newcommand{\toclesslabel}[3]{\vspace{2ex}\bgroup\let\addcontentsline=\nocontentsline#1{#2\label{#3}}\egroup}

\newlength\replength
\newcommand\repfrac{.33}

\newcommand\rulewidth{.6pt}
\newcommand\tdashfill[1][\repfrac]{\cleaders\hbox to \replength{%
  \smash{\rule[\arraystretch\ht\strutbox]{\repfrac\replength}{\rulewidth}}}\hfill}

\newcommand\tdotfill[1][\repfrac]{\cleaders\hbox to \replength{%
  \smash{\raisebox{\arraystretch\dimexpr\ht\strutbox-.1ex\relax}{.}}}\hfill}

\begin{document}

\title{Digital Simulation of Non-Hermitian Knotted Bands on Quantum Hardware}

\author{Truman Yu \surname{Ng}\,\orcidlink{0009-0006-5036-2298}}
\thanks{These two authors contributed equally.}
\affiliation{Department of Physics, National University of Singapore, Singapore 117551, Republic of Singapore}
\affiliation{A*STAR Quantum Innovation Centre (Q.InC), Agency for Science, Technology and Research (A*STAR), 2 Fusionopolis Way, Innovis \#08-03, Singapore 138634, Republic of Singapore\looseness=-1}

\author{Yuzhu Wang \orcidlink{0000-0003-4151-5097}}
\thanks{These two authors contributed equally.}
\affiliation{School of Physical and Mathematical Sciences, Nanyang Technological University, Singapore 639798, Republic of Singapore}
\email{yuzhu.wang@ntu.edu.sg}

\author{Wei Jie Chan}
\affiliation{Department of Physics, National University of Singapore, Singapore 117551, Republic of Singapore}

\author{Ruizhe Shen}
\affiliation{Department of Physics, National University of Singapore, Singapore 117551, Republic of Singapore}

\author{Tianqi Chen}
\affiliation{Bioinformatics Institute, Agency for Science, Technology and Research (A*STAR), 30 Biopolis Street, No.~07-01 Matrix, Singapore 138671, Republic of Singapore}
\affiliation{A*STAR Quantum Innovation Centre (Q.InC), Agency for Science, Technology and Research (A*STAR), 2 Fusionopolis Way, Innovis \#08-03, Singapore 138634, Republic of Singapore\looseness=-1}

\affiliation{Centre for Quantum Technologies, National University of Singapore, Singapore 117543, Republic of Singapore}

\author{Ching Hua Lee}
\email{phylch@nus.edu.sg}
\affiliation{Department of Physics, National University of Singapore, Singapore 117551, Republic of Singapore}

\date{\today}

\begin{abstract}
Knots and links represent a fundamental motif of non-local connectivity that permeates the physical sciences from string theory to protein folds. While spectral braiding has been explored in two-band non-Hermitian models across various platforms, its direct simulation and characterization on programmable quantum hardware, particularly beyond two strands,  remains a formidable challenge due to the limitations of variational optimization in these systems. Here, we introduce a family of non-Hermitian multi-band twister models and implement a non-variational protocol to characterize their complex braided band structures on a programmable superconducting quantum processor. By mapping the winding of eigenstates to the spectral topology, we devise an efficient measurement strategy that extracts braid information—including braid words and knot invariants like the Alexander and Jones polynomials—without requiring full spectral tomography or repeated optimization. We experimentally demonstrate the reconstruction of complicated knots and links such as the Hopf chain and Solomon’s knot. Our approach provides a general framework for investigating exotic non-Hermitian topology on near-term quantum devices, opening a route to simulate more sophisticated topological structures in knot theory.
\end{abstract}

\pacs{}  
\maketitle

\tocless\section{Introduction}
\label{sec:intro}

Knots -- loops in 3D space which cannot be continuously deformed into trivial rings~\cite{rolfsen2003knots,murasugi2007knot,crowell2012introduction,lickorish1997introduction} -- have garnered tremendous interest in the physical sciences. They constitute an exotic and unifying topological motif that permeates a remarkable range of natural and engineered systems. In biology and chemistry, knots emerge in contexts from protein folding~\cite{Mallam2009does,Faisca2015knotted} to molecular polymers~\cite{Forgan2011chemical,Fielden2017molecular,Zhang2024mechanical,Sun2025,Yang2025c}. In the realm of physics, knot topology plays an equally central role, underpinning phenomena across string theory and quantum gravity~\cite{witten1989quantum}, topological semimetals~\cite{PhysRevB.96.041103,PhysRevB.96.201305,RevModPhys.93.025002,li2018realistic}, non-Hermitian band structures~\cite{wang2021topological,PhysRevB.106.195425}, as well as artificial platforms such as liquid crystals~\cite{hall2025fusion,bupathy2025knots}, magnetic systems~\cite{kasai2025controlling,li2026electrically}, and photonic systems~\cite{PhysRevLett.133.230603,sun2025reconfigurable,song2025shortcuts,kim2026programmable,Wang2026observation}. The braids which form knots upon closure underpin topological quantum computation \cite{simon2023topological,Kitaev2002,Nayak2008,Andersen2023,Minev2025} where anyons encode quantum information in their worldlines, and their non-abelian exchange statistics implement quantum gates that are topologically protected against local noise and decoherence. This was experimentally demonstrated in recent seminal works on non-abelian anyons on a superconducting processor \cite{Iqbal2024,Xu2024a}. 

In condensed matter, knotted band structures are particularly interesting, capturing band permutations unique to non-local band connectivity. Non-Hermitian multi-band Hamiltonians are a natural setting for such behavior because their two-dimensional complex spectra admit branch-point singularities, notably exceptional points (EPs) where energies become degenerate and eigenstates coalesce, enforcing band permutations under parameter loops~\cite{Heiss2012,Dembowski2001,El-Ganainy2017,Xu2016a,Hodaei2017,Ruter2010, yang2026beyond} and enabling non-trivial braiding in contrast with Hermitian band structures. This motivation is reinforced by rapid progress in non-Hermitian topology, including unconventional non-local skin response described by generalized Brillouin zones~\cite{yokomizo2019non,ashida2020non,lin2023topological,meng2025generalized} and the enriched classification of topological phases~\cite{PhysRevX.9.041015,Delplace2021,Xiao2022,Jiang2018,Zhang2021c,Borgnia2020,Ding2022,Kunst2018}. Experimentally, non-Hermitian braids and related phenomena have been realized on various platforms such as solid-state simulators~\cite{PhysRevLett.127.090501,Wang2025e}, optical cavities~\cite{patil2022measuring}, optical resonators~\cite{wang2021topological}, trapped ion simulators~\cite{PhysRevLett.130.163001}, electric circuits~\cite{PhysRevB.109.165128,Cao2025}, superconducting circuits~\cite{PRXQuantum.6.020328,Han2024}, static mechanical lattices~\cite{wang2025observing}, magnonics ~\cite{Rao2024}, photonic simulators ~\cite{Wang2026observation} and acoustic lattices~\cite{zhang2022observation,Zhang2023f,Tong2025a,Tang2022}. In parallel, machine-learning-based algorithms have been developed to characterize non-Hermitian braids computationally \cite{chen2024machine,long2024unsupervised}, motivating complementary protocols that extract braid information from feasible measurements in multi-band experiments.

Quantum processors provide a timely and programmable route to access eigenstate information relevant to non-Hermitian band topology, enabled by recent progress in quantum simulations of non-Hermitian Hamiltonians~\cite{naghiloo2019quantum,wu2019observation,li2019observation,partanen2019exceptional,ding2021experimental,Wang2021observation,Liu2021dynamically,ren2022chiral,Chen2023AKLT,PhysRevLett.127.090501,Jebraeilli2025nonHermitianqubit,Zheng2018DualityPT,Huang2019WeakMeasurementPT,koh2022stabilizing,Liu2023PracticalNH,An2023LCHSNonunitary,Zheng2013FastEvolutionPT,Wen2019DigitalPTSimulation,chen2026robust, Dogra2021PTBreakingSuperconducting,koh2024realization, Lin2022NHDynamicalTopologySSE,Bian2023AntiPTTrappedIon,Kivela2024PseudoHermitianLZSM,Shen2024FockSkinScars,Koh2025EdgeClusterBursts,Shen2025CircuitMitigation,PhysRevLett.127.090501,yu2021experimental,shen2025observation,Wu2023KnotTopologySingleSpin,shen2026observation}. However, extracting topological information in multi-band non-Hermitian models remains experimentally demanding on present-day devices~\cite{Preskill2018,Wu2023KnotTopologySingleSpin,yu2021experimental,Zhang2025WeylExceptionalRings,Chen2025ThirdOrderEPTomography,Zhang2025EP3BraidingStateTransfer,Han2024HigherOrderEPInvariants,Wang2025NHNonAbelianNV,shen2025observation,Shen2024FockSkinScars,Koh2025EdgeClusterBursts}. Existing quantum-processor demonstrations largely focus on two-band models via non-unitary time evolution~\cite{PhysRevLett.127.090501,PhysRevLett.130.163001,yu2021experimental} where non-local signatures require deep circuits and many shots~\cite{laakkonen2025less,Preskill2018,Takagi2022}. For multi-band Hamiltonians, time-evolution-based schemes are further constrained because they preferentially project onto eigenstates associated with extremal imaginary energies~\cite{PhysRevLett.127.090501,PhysRevLett.130.163001,yu2021experimental}, while approaches that aim to access multiple bands typically rely on repeated, band-by-band variational optimization~\cite{zhao2023universal,xie2024variational,Cerezo2021} or non-unitary primitives such as quantum singular value decomposition~\cite{Rebentrost2018qSVD,Wang2021variationalquantum,Zhang2025exponential}. The resulting circuit depth and sampling overhead grow rapidly with the number of bands, precisely where more intricate multi-strand knots and links become accessible, thus motivating measurement protocols that extract spectral topology on quantum hardware without full tomography or repeated optimization.

In this work, we develop and implement a non-variational protocol to characterize braided band structures in generic quantum circuits, and concretely demonstrate how it can map out knotted band structures of up to four strands on IBM quantum processors. To this end, we introduce a family of multi-band Hamiltonians specifically designed to realize arbitrary torus knots and links. By preparing and reconstructing their eigenstates, we extract the spectral braiding information encoded in these models. Our approach hinges on measuring an experimentally accessible winding matrix that encodes the projected eigenstate trajectories. From the winding number matrix, we can recover braid words that characterize the multi-strand braid, ostensibly avoiding full spectral tomography. Our proposed measurement-and-reconstruction strategy can, in principle, be extended to higher-spin models and more sophisticated braiding topologies, with overheads that increase linearly with the number of reconstructed eigenstates and winding number computations. 

\tocless\section{Results}
\label{sec:results}

\tocless\subsection{Knots and Braids}

Braids formalize how worldline trajectories of objects (``strands'') intertwine as they evolve. 
In our context, the evolution parameter is labeled $k$, and the evolving objects are the $N$ complex energy eigenvalues $\{E_i(k)\}$, $i=1,...,N$ of an $N$-band Hamiltonian $\hat{H}(k)$. Although the eigenvalues (``bands'') are periodic as an unordered set, individual eigenvalue branches may be permuted after one cycle of $k$. This band permutation reflects the monodromy of the multi-sheeted energy surface and can occur when the physical loop winds around branch-point singularities in the parameter space. Thus, the bands may be ordered differently after one period. One essential feature is that the order of successive interchanges matters non-trivially: different orderings of the braid crossings constitute topologically different braids, i.e., do not commute, even if they ultimately correspond to the same permutation of the strands.

Mathematically, we describe braids as isotopy classes described by the braid group $B_N$, which is generated by $N-1$ elementary exchanges $\sigma_1, \ldots, \sigma_{N-1}$ of the $N$ strands\footnote{See Fig.~\ref{fig:braid_closure} for illustrative examples with $N=2$ and $4$}. Its presentation is~\cite{Artin1947}
\begin{equation} 
B_N = \expval{\sigma_1, \ldots, \sigma_{N-1}\,|\, \sigma_{i} \sigma_{i+1} \sigma_{i} = \sigma_{i+1} \sigma_{i} \sigma_{i+1}, \sigma_i\sigma_j = \sigma_j\sigma_i}
\label{eq:results/knots_and_braids/braid_group_presentation}
\end{equation}
where $i \leq i \leq N-2$ and $\abs{i-j} \geq 2$. Here, $\sigma_i$ represents the crossing of strand $i$ over strand $i+1$, and its inverse $\sigma^{-1}_i$ deonotes the corresponding under-crossing. 
The first and second Artin relations $ \sigma_{i} \sigma_{i+1} \sigma_{i} = \sigma_{i+1} \sigma_{i} \sigma_{i+1}$ and $\sigma_i\sigma_j = \sigma_j\sigma_i$ respectively encode the associativity of exchanges among three adjacent strands, and the independence of exchanges acting on non-adjacent strand pairs. A knot or link is obtained from a braid by connecting the endpoints of each strand to form one or more closed components, as illustrated in \cref{fig:braid_closure}a and \cref{fig:braid_closure}b for braid words $\sigma_1^2$ and $\sigma_1\sigma_3\sigma_2\sigma_1\sigma_3\sigma_2$ corresponding to a two-strand Hopf link and a four-strand Solomon's knot. Alexander’s theorem guarantees that every knot or link admits such a braid representation~\cite{Alexander1928}. In physical terms, this precisely maps band permutations and their braid words to knotted or linked spectral structures. As different braids can yield the same closed knot or link under Markov equivalence \cite{kassel2008braid}, experimentally resolving the braid up to these moves is sufficient to identify the underlying knot or link type.

\
The topological richness of braids and their corresponding knots is reflected in the complexity of their topological invariants: there are infinitely many knot types -- and the number of prime knots increases exponentially with minimal crossing number, with tabulations already exceeding $10^9$ classes at twenty crossings ~\cite{Ernst1987GrowthPrimeKnots,Thistlethwaite2025Prime20Crossing}. 
No single conventional invariant is complete since distinct knots may share the same Alexander, Jones, and even HOMFLY-PT polynomials, as exemplified by illustrative mutant knots and Kanenobu families ~\cite{MortonCromwell1996Mutants,Kanenobu1986SamePolynomial}. Fortunately, for knots presented as braid closures, polynomial invariants remain algorithmically accessible; notably, the Jones polynomial follows directly from the reduced Burau representation of the braid word (see \cref{app:knot_polynomials/computation_of_the_Jones_polynomials,appxeqn:Jonespoly_final} in the Supplementary Material) \cite{Conway2018BurauAlexander}. 

\begin{figure}
\centering
\includegraphics[width=\linewidth]{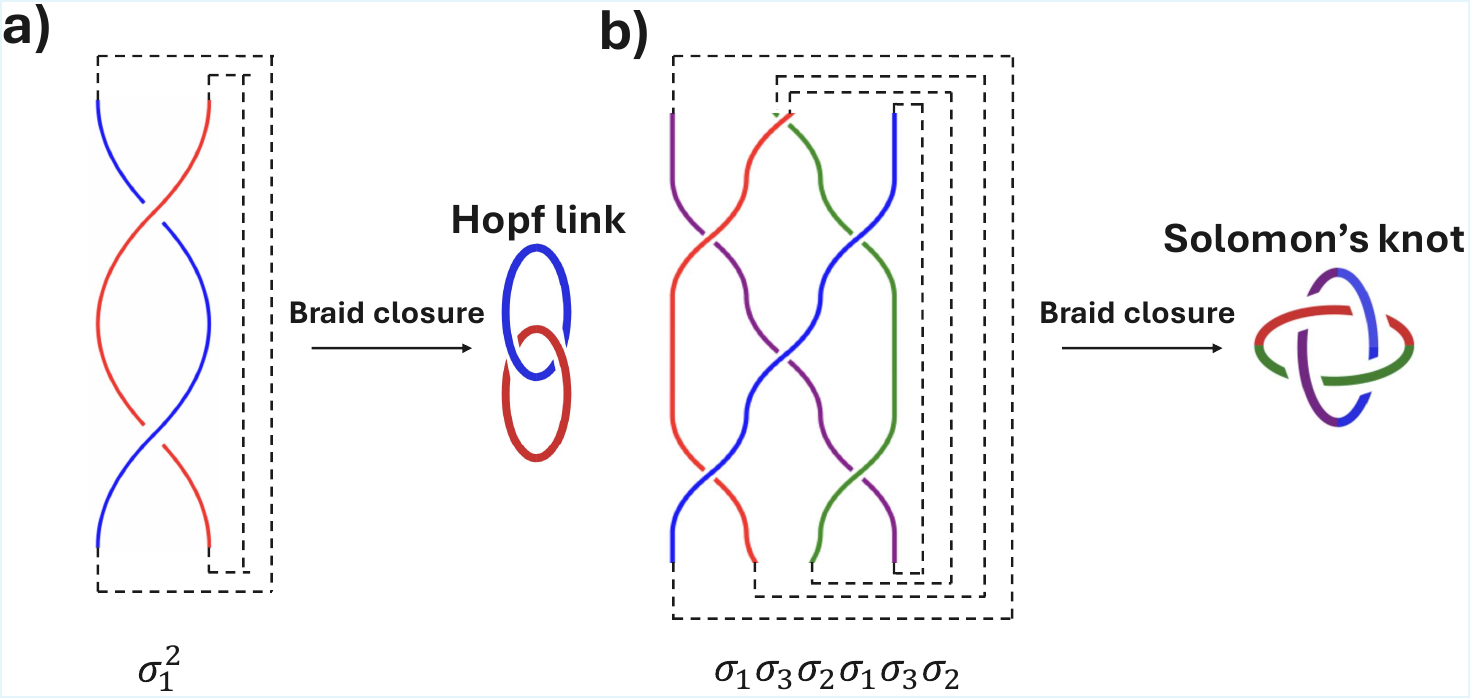}
\caption{\textbf{Examples of braids and their corresponding knots/links from braid closure.} \textbf{a)} The Hopf link is formed from two strands. \textbf{b)} Solomon's knot is formed from four strands. $\sigma_i$ denotes strand $i$ crossing over strand $i+1$. The braid word is formed by recording the crossings from the bottom to the top. The twister models that host these braids are described in their respective sections in the Measured Knotted Bands on Quantum Hardware section.
} 
\label{fig:braid_closure}
\end{figure}

\tocless\subsubsection{Braiding of Complex Energies}

\begin{figure*}[htbp!]
\centering
\includegraphics[width=\linewidth]{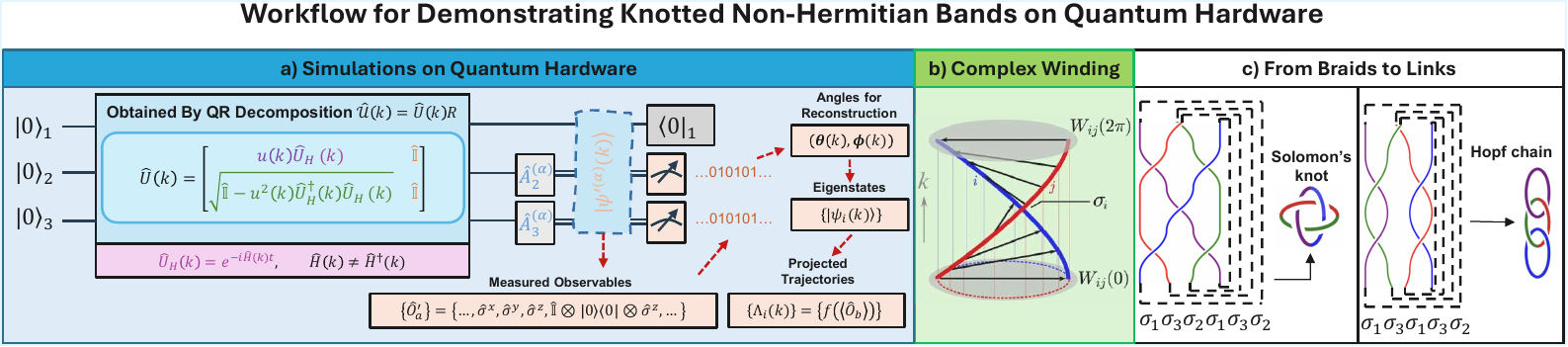}
\caption{\textbf{Our workflow from quantum circuit implementation to the measurement and topological characterization of non-Hermitian braiding.} \textbf{a)} Given any non-Hermitian $N$-band model, a $M$-qubit non-unitary operator $e^{-i\hat{H}^{(N)}(k)t}$ is embedded in a $(M+1)$-qubit operator $\hat{\mathcal{U}}(k)$ where $M = \ceil{\log_2 N}$. The first qubit is the ancilla and the remaining qubits encode the system with the evolved state $\ket*{\psi^{'(\alpha)}(k)}$ given in \cref{eq:results/quantum_circuit_hardware_implementation/reconstruction/psi_prime}. We perform a QR decomposition of $\hat{\mathcal{U}}(k)$ \cref{eq:results/quantum_circuit_hardware_implementation/obtaining/block-embedded_U} to get the desired unitary operator $\hat{U}(k)$ corresponding to the target eigenstate that is transpiled on hardware. Local rotation operators \cref{eq:results/quantum_circuit_hardware_implementation/reconstruction/A} are applied to measure $\hat{\sigma}^x$ and $\hat{\sigma}^y$ in the computational basis. The measured observables $\{O'_a\}$ are chosen to reproduce an eigenstate's components which allows one to compute the Bloch angles $(\vb*{\theta}(k), \vb*{\phi}(k))$ that parameterize the eigenstates $\{\ket{\psi_i(k)}\}$ and the measurements are postselected on the ancilla with $\ket{0}$. Since the eigenstates encode energies, we extract the energies via $\{\hat{O}_b\}$, thus yielding the projected eigenstate trajectories \cref{eq:results/quantum_circuit_hardware_implementation/characterizing_braids_and_knotted_structures/Lambda}. \textbf{b)} Classically, we compute the winding numbers $W_{ij}(k)$ defined in \cref{eq:results/quantum_circuit_hardware_implementation/characterizing_braids_and_knotted_structures/partial_winding} which quantifies the phase difference between bands $i$ and $j$. Here, $W_{ij}(2\pi) = 1/2$ since the band indices are permuted at $k = 2\pi$. $\sigma_i$ denotes band $i$ crossing over band $j$. \textbf{c)} Upon braid closure, we get a knot or link. The parameters used for Solomon's knot and the Hopf chain are $(m_0, m_1) = (-0.5, -0.4)$ and $(m_0, m_1) = (2, 1.1)$ respectively in \cref{eq:quantum_hardware_simulation/four-band_twister_model/H4_twister_model}. In Solomon's knot, the band indices (blue, red, green, purple) $(1, 2, 3, 4)$ at $k = 0$ are permuted to become (purple, green, red, blue) $(4, 3, 2, 1)$ at $k = 2\pi$ after one period. A second period permutes the band indices as $(4, 3, 2, 1) \to (1, 2, 3, 4)$ thus $n = 2$ in \cref{eq:results/quantum_circuit_hardware_implementation/characterizing_braids_and_knotted_structures/partial_winding}, hence $W_{ij}(k)$ is well-defined since the band indices are permuted back to $(1, 2, 3, 4)$. 
}
\label{fig:methodology_illustration}
\end{figure*}

For concreteness, we base our spectral band braiding on a family of $N$-band non-Hermitian twister Hamiltonians which encompass the $N=2$-band twister model \cite{PhysRevLett.126.010401} as a special case. The family is defined by 

\begin{equation}
\hat{H}^{(N)}(k) 
= m_0 \hat{\Sigma}^{(N)} + \sum_{v=1}^V m_v \hat{T}_v^{(N)}(k),
\label{eq:results/knots_and_braids/braiding/twister_model}
\end{equation}
where $\hat{\Sigma}^{(N)}_{pq} = \left(1 - \frac{2(p-1)}{N - 1}\right) \delta_{pq}$, with $1\leq p, q \leq N$, are the elements of the diagonal matrix that ascribes a linear energy shift to the bands, and
\begin{equation}
\hat{T}_v^{(N)}(k) = 
\mqty[
0 & 0  & \cdots & 0 & e^{i v k} \\
1 & 0  & \cdots & 0 & 0 \\
0 & 1  & \cdots & 0 & 0 \\
\vdots & \vdots & \ddots & 0 & 0 \\
0 & 0  & \cdots & 1 & 0
], 
\label{eq:results/knots_and_braids/braiding/Tn_twister_model}
\end{equation}
with $m_0, m_1, \ldots, m_V \in \mathbb{C}$ that contains the $v$-th harmonic $e^{i\frac{v k}{N}}$. Higher harmonics increase the complexity of the $k$-winding trajectories and result in more sophisticated braiding possibilities -- see \cref{app:generalized_twister_models} in the Supplementary Material for details. Our family of twister models hosts braids whose closures correspond to torus knots or links, whose exact shapes can be tuned by the coefficients $m_v$ \cite{rolfsen2003knots,zhang2021tidal,bode2017knotted}. In this work, we consider only cases with positive $v$ such that the braids formed can be described by $\{\sigma_i\}$ generators only. 

\tocless\subsection{Quantum Circuit Implementation}

Our overall task is to perform digital quantum simulations of the complex energy bands of \cref{eq:results/knots_and_braids/braiding/twister_model} (specifically, \cref{eq:quantum_hardware_simulation/two-band_twister_model/H2_twister_model,eq:quantum_hardware_simulation/four-band_twister_model/H4_twister_model} for $N = 2$ and $4$) and measure how they braid using  
$ M=\ceil{\log_2 N}$ qubits with $\ceil{\cdot}$ the ceiling function. 
A programmable quantum processor can simulate any desired model of interest with well-defined state preparation and measurement protocols, as shown in \cref{fig:methodology_illustration}. 

While it is also possible to encode a braid with $\ceil{N/2}$ qubits with each qubit representing two adjacent strands, encoding it optimally in $\ceil{\log_2 N}$ qubits incurs less noise. Besides, although $\ceil{N/2}$ qubits give more control over each qubit since we can treat them individually, this representation is unable to encode braids where crossings occur between strands encoded by different qubits, thus reducing the ability to realize certain braids. 

\tocless\subsubsection{Obtaining Desired Braiding States Through Non-Unitary Evolution} 

The first step is to obtain the  Hamiltonian eigenstates that undergo the braiding.  Consider a generic $N$-band non-Hermitian Hamiltonian $\hat{H}(k)$ parameterized by a periodic parameter $k\in[0, 2\pi]$. To converge to the target eigenstate $\ket{\varphi_{\max}(k)}$ with $\max[\Im(E(k))]$, we realize a non-unitary evolution generated by 
\begin{equation}
\hat{U}_H(k) = e^{-i\hat{H}(k)t}, 
\label{eq:results/quantum_circuit_hardware_implementation/obtaining/U_H}
\end{equation}
where $t$ is the evolved time, to be implemented as Trotterized evolutions. We remark that it may be necessary to rotate $\hat{H}(k)$ by a suitable additional phase factor to converge to $\ket{\varphi_{\max}(k)}$ (c.f. \cref{eq:methods/obtaining_eigenstates/U_H}). Through the evolution, $\hat{U}_H(k)$ amplifies the components in the evolved state such that $\ket{\varphi_{\max}(k)} = \hat{U}_H(k)\ket{0}^{\otimes M} = \sum_j e^{-iE_j(k)t}c_j\ket{\varphi_j(k)} \approx \ket{\psi_{\max}(k)}$ where $\{\ket{\varphi(k)}\}$ are the right eigenstates of $\hat{H}(k)$ and $\ket{\psi_{\max}(k)}$ is the reconstructed eigenstate (c.f. \cref{eq:results/quantum_circuit_hardware_implementation/reconstruction/psi}).

As $\hat H(k)$ is non-Hermitian, $\hat U_H(k)$ as defined above is non-unitary and cannot be physically realized with native quantum gates. A tried-and-tested way forward is to introduce an additional ancilla qubit such that $\hat{U}_H(k)$ is subsumed by a larger unitary evolution $\hat U$ that is unitary: to recover the action of $\hat{U}_H(k)$, we perform post-selection on the ancilla qubit after the evolution as depicted in \cref{fig:methodology_illustration}a. 

To effect the non-unitary evolution $\hat{U}_H(k)$ by a unitary $\hat{U}(k)$, we construct $\hat{U}(k)$ by defining an intermediary matrix $\hat{\mathcal{U}}(k)$ \cite{PRXQuantum.2.010342} 
\begin{equation}
\hat{\mathcal{U}}(k) = \mqty[
u(k)\hat{U}_H(k) & \hat{\mathbb{I}} \\
\hat{C}(k) & \hat{\mathbb{I}}
] = \hat{U}(k)R, 
\label{eq:results/quantum_circuit_hardware_implementation/obtaining/block-embedded_U}
\end{equation}
where $u(k) \geq 0$, $u(k)^{-2}$ is the maximum eigenvalue of $\hat{U}_H^\dagger(k) \hat{U}_H(k)$, and $\hat{C}(k) = \sqrt{\hat{\mathbb{I}} - u(k)^2\hat{U}_H^\dagger(k)\hat{U}_H(k)}$. 
To recover the desired unitary $\hat{U}(k)$, one performs the QR decomposition to decompose $\hat{\mathcal{U}}(k)$ into $\hat{U}(k)$ and $R$, an unimportant upper triangular matrix. This can be shown to work because $u(k)\hat{\mathcal{U}}(k)$ and $\hat{C}(k)$ form the first $2^M$ orthonormal columns. Finally, the desired unitary $\hat{U}(k)$ is transpiled to the hardware's native gate set. 

\tocless\subsubsection{Reconstruction of Eigenstates}

As illustrated in \cref{fig:methodology_illustration}a and detailed below, we reconstruct the eigenstates and extract their relative windings by measuring the expectations of appropriately chosen operators $\{O'_a\}$ -- a generic $N$-component state contains $2(N - 1)$ independent real parameters $(\vb*{\theta}(k),\vb*{\phi}(k))$ up to normalization and a global phase, and determining it thus requires at least $2(N - 1)$ independent observables. We remark that most aspects of our methodology remain applicable for generic $N$-band models even though we illustrate it specifically for our twister models. 

Specifically, we measure the observables $\{\hat{O}'_a\}$ (see \cref{eq:quantum_hardware_simulation/two-band/measured_observables_expectation,eq:quantum_hardware_simulation/four-band/measured_observables_expectation}) with respect to the evolved state 

\begin{equation}
\ket*{\psi^{'(\alpha)}(k)} = (\hat{\mathbb{I}}_1\otimes\hat{A}^{(\alpha)}_2 \otimes \cdots \otimes \hat{A}^{(\alpha)}_{M+1}) \hat{U}(k)\ket{0}^{\otimes(M+1)},  
\label{eq:results/quantum_circuit_hardware_implementation/reconstruction/psi_prime}
\end{equation}
where the first qubit is the ancilla, the remaining $M$ qubits encode the system, and the local basis rotations are given by 

\begin{equation}
\hat{A}_i^{(\alpha)} \in \{\hat{R}^y_i(-\pi/2), \hat{R}^x_i(\pi/2), \hat{\mathbb{I}}_i\}, 
\label{eq:results/quantum_circuit_hardware_implementation/reconstruction/A}
\end{equation}
where $\alpha\in\{x, y, z\}$, $\hat{R}^x_i(\theta) = \exp(-i\hat{\sigma}^x_i\theta/2)$, and $\hat{R}^y_i(\theta) = \exp(-i\hat{\sigma}^y_i\theta/2)$ \footnote{We seek to measure $\expval{\hat{\sigma}_i^x}$ and $\expval{\hat{\sigma}_i^y}$. $\hat{A}^{(x)}_i$ and $\hat{A}^{(y)}_i$ satisfy $\hat{\sigma_i^x} = \hat{A}^{(x)\dagger}_i\hat{\sigma}^z\hat{A}^{(x)}_i$ and $\hat{\sigma_i^y} = \hat{A}^{(x)\dagger}_i\hat{\sigma}^z\hat{A}^{(x)}_i$ respectively.}; these rotations are necessary to measure $\hat{\sigma}^x$ and $\hat{\sigma}^y$. 

Having obtained the evolved state $\ket*{\psi^{'(\alpha)}(k)}$, we measure it, and the shots are postselected on the ancilla outcome $\ket{0}$ to obtain the non-unitary dynamics. To see this, observe that the action of $\hat{U}(k)$ yields 

\begin{equation}
\begin{aligned}
\hat{U}(k)(\ket{0}\otimes\ket{0}^{\otimes M}) &= \ket{0}\otimes u(k)\hat{U}_H(k)\ket{0}^{\otimes M} \\
&\qquad + \ket{1} \otimes\hat{C}(k)\ket{0}^{\otimes M}. 
\label{eq:results/quantum_circuit_hardware_implementation/reconstruction/U_action}
\end{aligned}
\end{equation}
The bitstrings that correspond to $\ket{1}$ for the ancilla qubit are discarded, and the distribution over the retained bitstrings, with proper normalization, gives the measured expectation values which are necessary for the next step: reconstruction of the eigenstates $\{\ket{\psi_i(k)}\}$. 

For each eigenstate at some $k$, up to normalization and a global phase, we parameterize it in terms of the Bloch angles $(\vb*{\theta}(k), \vb*{\phi}(k))$

\begin{equation}
\ket{\psi(k)} = \frac{1}{\mathcal{N}(k)}\mqty[
g_1(\vb*{\theta}(k), \vb*{\phi}(k)) \\
g_2(\vb*{\theta}(k), \vb*{\phi}(k)) \\
\vdots \\
g_N(\vb*{\theta}(k), \vb*{\phi}(k))
], 
\label{eq:results/quantum_circuit_hardware_implementation/reconstruction/psi}
\end{equation}
where the functions $g_i(\vb*{\theta}(k), \vb*{\phi}(k))\in\mathbb{C}$ except $g_N(\vb*{\theta}(k), \vb*{\phi}(k))\in\mathbb{R}$. Choosing the measured observables $\{\hat{O}'_a\}$ depend on the parameterization given in \cref{eq:results/quantum_circuit_hardware_implementation/reconstruction/psi}. Since a single qubit encodes a two-band model, only two angles are required. Conversely, the additional degrees of freedom in a four-band twister model require six angles. The measured observables and the Bloch angles will be explicitly defined when we specialize to the two- and four-band twister models in their respective sections. 

\tocless\subsubsection{Characterizing the Braids and Knotted Structures}

From the measured state, the corresponding eigenvalue and its braiding can be extracted, at least for the twister models considered and their generalizations.
This is done by performing a stereographic projection from the Bloch hypersphere to a hyperplane. 
In practice, one chooses the component that has the simplest dependence on $E(k)$ among others, and the projection is computed by evaluating the expectation values of a set of observables $\{\hat{O}_b\}$ to extract $E(k)$ from $\ket{\psi(k)}$. Each trajectory $\Lambda(k)$ is defined as

\begin{equation}
\Lambda(k) := f(\{\!\mel{\psi(k)}{\hat{O}_b}{\psi(k)}\}), 
\label{eq:results/quantum_circuit_hardware_implementation/characterizing_braids_and_knotted_structures/Lambda}
\end{equation}
with $\ket{\psi(k)}$ given in \cref{eq:results/quantum_circuit_hardware_implementation/reconstruction/psi}. The precise form of $f(\cdot)$ depends on the choice of $\{\hat{O}_b\}$, which should be homeomorphic to the energy eigenvalues, e.g., $\{\Lambda_i(k)\}\propto\{E_i(k)\}$, to ensure that the trajectories $\{\Lambda_i(k)\}$ reproduce the spectral braiding information of $\{E_i(k)\}$ as desired. For the two- and four-band twister models, $\{\Lambda_i(k)\}$ are explicitly defined in their respective sections (see \cref{eq:quantum_hardware_simulation/two-band/Lambda,eq:quantum_hardware_simulation/four-band/Lambda}). Note that full eigenstate reconstruction is not required experimentally; the eigenstates primarily act as an intermediate construct for determining the mapping $f(\cdot)$ to give $\{\Lambda_i(k)\}$ that connects to $\{E_i(k)\}$.

With the trajectories $\{\Lambda_i(k)\}$, a straightforward measurable quantity is the pairwise $k$-dependent winding number $W_{ij}(k)$ which represents the relative winding between the $i$-th and $j$-th eigenstates shown in \cref{fig:methodology_illustration}b. $W(k)$ is symmetric with elements defined as 

\begin{equation}
W_{ij}(k) := \frac{1}{2\pi i}\int_{0}^k dk' \, \partial_{k'}\ln[\Lambda_i(k')-\Lambda_j(k')]. 
\label{eq:results/quantum_circuit_hardware_implementation/characterizing_braids_and_knotted_structures/partial_winding}
\end{equation}
Geometrically, $W_{ij}(k)$ quantifies the net phase difference between $\Lambda_i(k)$ and $\Lambda_j(k)$, and hence how band $i$ winds around band $j$. Importantly, the partial winding information up to $k$ allows us to identify the braid crossings. 

In the experiment, an $N$-band Hamiltonian is periodic after one cycle but the individual band labels may be permuted. Consequently, the spectrum returns to itself as an unordered set after one period although a given strand may return to its initial label only after $n$ periods. To obtain a label-independent and thus topological winding matrix for a closed braid, we first compute the one-period winding matrix $\bar W := W(2\pi)$ and then combine the contributions from the remaining periods by applying the band permutation matrix $P$ extracted from the reconstructed eigenstates

\begin{equation}
\mathcal{W} := \frac{1}{n} \sum_{a=0}^{n-1} (P^{-1})^a \bar{W} P^a,
\label{eq:results/quantum_circuit_hardware_implementation/characterizing_braids_and_knotted_structures/winding_number_matrix}
\end{equation}
where $P^n = \mathbb{I}$. This procedure converts the measured one-period trajectories into the winding number matrix of the closed multi-strand braid. The formal distinction between the $k$-dependent winding $W_{ij}(k)$, the one-period winding $\bar{W}_{ij}$, and the topological winding $\mathcal{W}_{ij}$ is given in Methods.

As the complex energies/eigenstates vary in $k$, they trace out ``worldline'' trajectories in three dimensions as depicted in \cref{fig:methodology_illustration}b. Projecting them onto a chosen plane (typically the plane of the paper) defines a braid diagram where the trajectories become the braid strands that either cross over or under each other, as in \cref{fig:methodology_illustration}c, where each crossing point is defined as 

\begin{equation}
W_{ij}(k_{\mathrm{crossing}}) := \frac{1}{2\pi}\bigg(\frac{\pi}{2} + r\pi\bigg) = \frac{1}{4} + \frac{r}{2}, 
\label{eq:results/quantum_circuit_implementation/characterizing_braids_and_knotted_structures/W_crossing_condition}
\end{equation}
with $r\in\mathbb{Z}$. At each crossing in \cref{fig:methodology_illustration}c, there is a corresponding $W_{ij}(k_{\mathrm{crossing}})$ where $r$ is inferred. The intuition behind \cref{eq:results/quantum_circuit_implementation/characterizing_braids_and_knotted_structures/W_crossing_condition} is as follows. Whenever there is a crossing point in \cref{fig:methodology_illustration}c, the black vector that captures the winding in \cref{fig:methodology_illustration}b either points into or out of the paper which is quantified by the $\pi/2$ term in \cref{eq:results/quantum_circuit_implementation/characterizing_braids_and_knotted_structures/W_crossing_condition}; the two possible orientations of the normal vector are captured by $r\pi$. 

Since there is a one-to-one correspondence between $W_{ij}(k_{\mathrm{crossing}})$ and the crossing points, we can extract the braid word by recording how the band indices are permuted after each crossing. This is notated as $\tau_{\pi(i)\pi(i+1)}^{(-1)^r}$, where $\pi(\cdot)$ is the permutation operator that maps a band index to another index after a crossing. Depending on $r$ and the permuted band indices, a braid generator corresponding to a crossing point is written as follows

\begin{equation}
\tau_{\pi(i)\pi(i+1)}^{\pm 1} = \tau_{\pi(i+1)\pi(i)}^{\mp 1} \Longrightarrow \sigma_{\pi(i)}^{\pm 1\cdot (-1)^{\delta_{4N}}}.
\label{eq:results/quantum_circuit_implementation/characterizing_braids_and_knotted_structures/crossing_to_braid_generator}
\end{equation}
The equality is due to the two possible orientations of the normal vector. Importantly, the same normal vector must be referenced when forming the braid word to ensure consistency. Repeating this analysis for all crossing points yields the braid word. See \cref{eq:extraction_of_braid_words/phase_shifted_W_crossing_condition,eq:extraction_of_braid_words/braid_word_extraction} in Methods for the geometric intuition behind \cref{eq:results/quantum_circuit_implementation/characterizing_braids_and_knotted_structures/W_crossing_condition} and the procedure to form the braid word from $W_{ij}(k)$, respectively. 

\tocless\subsection{Measured Knotted Bands on Quantum Hardware}

Having discussed the workflow for the quantum circuit implementation of knotted bands, we move on to discuss specific implementation results on the IBM quantum computer. Each unitary $\hat{U}(k)$ and local rotation operators $\hat{A}_i^{(\alpha)} \in \{\hat{R}^y_i(-\pi/2), \hat{R}^x_i(\pi/2), \hat{\mathbb{I}}_i\}$ thus define a circuit which is transpiled to the native gate set of \texttt{ibm\textunderscore marrakesh}. All simulations were executed with 40000 shots per circuit on a uniform grid of 100 $k$-points. 

\tocless\subsubsection{Measuring Braids with $N = 2$ Strands}

\begin{figure*}[htbp!]
\centering
\includegraphics[width=\linewidth]{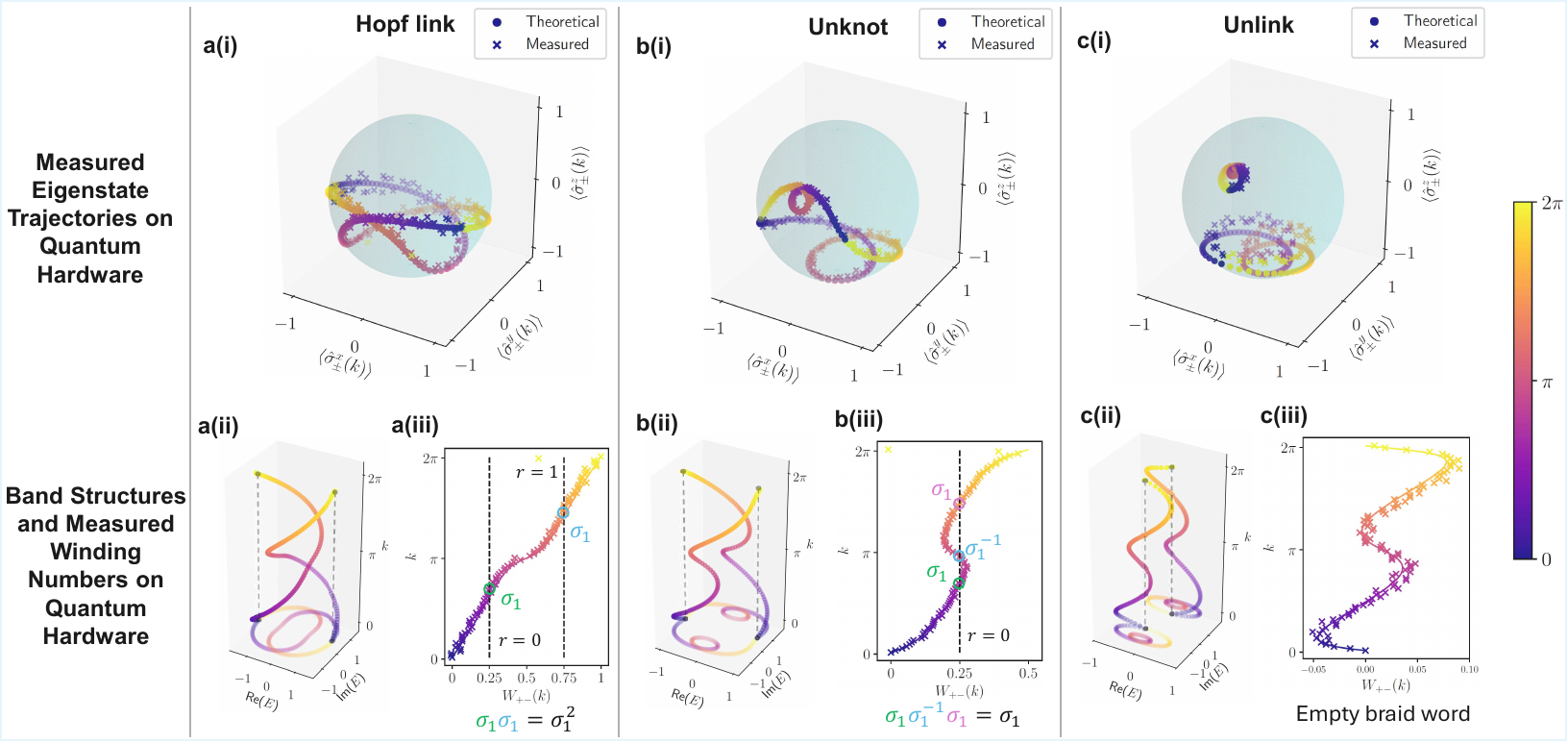}
\caption{\textbf{Characterization of braids from the measured winding number $W_{+-}(k)$ for the two-band twister model \cref{eq:quantum_hardware_simulation/two-band_twister_model/H2_twister_model}.} Its eigenvalues $E_\pm(k)$ are defined below \cref{eq:quantum_hardware_simulation/two-band_twister_model/H2_twister_model}. In the upper row, the measured and theoretical values of $\expval{\hat{\sigma}_\pm^\alpha(k)}$ are given in \cref{eq:quantum_hardware_simulation/two-band/measured_observables_expectation,eq:quantum_hardware_simulation/two-band/theoretical_observables_expectation} respectively. In all plots, the color gradient indicates the evolution of $k$ for $k\in[0, 2\pi]$ and the cross markers are measured data on \texttt{ibm\textunderscore marrakesh}. The winding number computation is given in \cref{eq:quantum_hardware_simulation/two-band_twister_model/winding_number_discretized}, and the black dashed lines in the bottom row correspond to \cref{eq:results/quantum_circuit_implementation/characterizing_braids_and_knotted_structures/W_crossing_condition}, which denote the crossings on the braid diagram and the projection plane for each knot/link. The chosen parameters agree with Ref.~\cite{yu2021experimental}. \textbf{a(i)} The reconstructed eigenstate trajectories $\ket{\psi_\pm(k)}$ \cref{eq:quantum_hardware_simulation/two-band/reconstructed_eigenstate} \cite{PhysRevLett.126.010401} for the Hopf link with parameters $(m_0, m_1) = (0.5338, 0.6)$. \textbf{a(ii)} The band structure, which forms a braid when projected onto a plane; a Hopf link is formed when flattened onto the complex energy plane. \textbf{a(iii)} The measured winding number $W_{+-}(k)$ \cref{eq:quantum_hardware_simulation/two-band_twister_model/winding_number_discretized} on \texttt{ibm\textunderscore marrakesh} with two crossings (c.f. \cref{fig:braid_closure}a) at $W_{+-}(k_{\mathrm{crossing}}) = 1/4$ and $W_{+-}(k_{\mathrm{crossing}}) = 3/4$ corresponding to $r = 0$ and $r = 1$ respectively in \cref{eq:results/quantum_circuit_implementation/characterizing_braids_and_knotted_structures/W_crossing_condition}. Hence, the braid word is $\sigma_1^2$. \textbf{b(i)} The unknot with parameters $(m_0, m_1) = (1.273, 0.6)$. \textbf{b(ii)} An unknot is formed. \textbf{b(iii)} The winding number $W_{+-}(k)$ with three crossings, but two of them cancel out hence the braid word is $\sigma_1$. \textbf{c(i)} The unlink with parameters $(m_0, m_1) = (1.8889, 0.6)$. \textbf{c(ii)} An unlink is formed. \textbf{c(iii)} There is no winding thus the braid word is empty.
}
\label{fig:H2_twister_band_structure_bloch_sphere_W}
\end{figure*}

As a warm-up, we consider the case with $N=2$ strands which requires two qubits: one physical qubit that encodes the model and an ancilla qubit to postselect on non-unitary dynamics. By specializing \cref{eq:results/knots_and_braids/braiding/twister_model,eq:results/knots_and_braids/braiding/Tn_twister_model} to $N = 2$, the two-band twister model is given by

\begin{equation}
\begin{aligned}
\hat{H}^{(2)}(k) &= im_0\hat{\Sigma}^{(2)} + m_1 \hat{T}_1^{(2)}(k) + \hat{T}_2^{(2)}(k) \\
&= \mqty[
im_0 & e^{2ik} + e^{ik}m_1 \\
1 + m_1 & -im_0
],
\label{eq:quantum_hardware_simulation/two-band_twister_model/H2_twister_model}
\end{aligned}
\end{equation}
where $m_0, m_1\in\mathbb{R}$, $\hat{\Sigma}^{(2)} = \hat{\sigma}^z$, and $E_\pm(k) = \pm\sqrt{e^{2ik}(m_1+1) + m_1e^{ik}(m_1+1)-m_0^2}$. The band structures of \cref{eq:quantum_hardware_simulation/two-band_twister_model/H2_twister_model} can braid in three topologically distinct ways corresponding to the Hopf link (see \cref{fig:braid_closure}a), unknot, and unlink upon closure. For concreteness, we specialize to the parameters $(m_0, m_1) = (0.5338, 0.6)$, $(1.273, 0.6)$, $(1.8889, 0.6)$ for realizing the Hopf link, unknot, and unlink respectively (see \cref{fig:H2_phase_diagram} for the regions that host these knots/links in Methods). 

Using the eigenstate-preparation protocol in \cref{fig:methodology_illustration}a, both eigenstates $\ket{\psi_\pm(k)}$ of $\hat{H}^{(2)}(k)$ are
prepared on the quantum processor via block embedding and postselection. As such, \cref{eq:results/quantum_circuit_hardware_implementation/reconstruction/psi_prime} reduces to $\ket*{\psi^{'(\alpha)}_\pm(k)} = (\hat{\mathbb{I}}_1 \otimes \hat{A}_2^{(\alpha)})\hat{U}_\pm(k)\ket{00}$ where $\hat{U}_\pm(k)$ is obtained in the discussion regarding \cref{eq:results/quantum_circuit_hardware_implementation/obtaining/block-embedded_U}, and $\hat{A}_2^{(\alpha)} \in \{\hat{R}^y_2(-\pi/2), \hat{R}^x_2(\pi/2), \hat{\mathbb{I}}_2\}$ with $\alpha \in \{x, y, z\}$. The measured expectation values are defined as 

\begin{equation}
\begin{aligned}
\expval{\hat{\sigma}_\pm^\alpha(k)} &:= \mel*{\psi^{'(\alpha)}_\pm(k)}{(\hat{\mathbb{I}}\otimes\hat{\sigma}^z)}{\psi^{'(\alpha)}_\pm(k)} \\
&\xrightarrow[]{\text{PS}} \abs*{\braket*{00}{\psi^{'(\alpha)}_\pm(k)}}^2 - \abs*{\braket*{01}{\psi^{'(\alpha)}_\pm(k)}}^2.  
\label{eq:quantum_hardware_simulation/two-band/measured_observables_expectation}
\end{aligned}
\end{equation}
Since there is only one qubit used for each $\hat{H}^{(2)}(k)$, the measured observables are $\{\hat{O}'_a\} = \{\hat{\mathbb{I}} \otimes \hat{\sigma}^x, \hat{\mathbb{I}} \otimes \hat{\sigma}^y, \hat{\mathbb{I}} \otimes \hat{\sigma}^z\}$ with the measurements of $\hat{\mathbb{I}} \otimes \hat{\sigma}^x$ and $\hat{\mathbb{I}} \otimes \hat{\sigma}^y$ facilitated by the rotation operator $\hat{A}_2^{(x)}$ and $\hat{A}_2^{(y)}$ respectively. Conversely, the theoretical expectation values are defined as 

\begin{equation}
\expval{\hat{\sigma}_\pm^\alpha(k)} := \mel{\varphi_\pm(k)}{\hat{\sigma}^\alpha}{\varphi_\pm(k)}, 
\label{eq:quantum_hardware_simulation/two-band/theoretical_observables_expectation}
\end{equation}
where $\ket{\varphi_\pm(k)}$ are obtained by diagonalizing \cref{eq:quantum_hardware_simulation/two-band_twister_model/H2_twister_model}. 

In total, 600 circuits with $t = 20$ were simulated for a knot/link as there are three observables, 100 $k$ points, and two bands. Each circuit yields a distribution of bitstrings upon measurements in  \cref{eq:quantum_hardware_simulation/two-band/measured_observables_expectation}. Postselection is performed and the distribution is rescaled to account for unphysical bitstrings that are discarded. The Bloch angles are then computed using \cref{eq:quantum_hardware_simulation/two-band/Bloch_angles} to reconstruct the eigenstates \cref{eq:quantum_hardware_simulation/two-band/reconstructed_eigenstate}. 

With the measured observables  $\expval{\hat{\sigma}^x_\pm(k)}$, $\expval{\hat{\sigma}^y_\pm(k)}$, and $\expval{\hat{\sigma}^z_\pm(k)}$, we obtain the normalized reconstructed eigenstates 

\begin{equation}
\ket{\psi_\pm(k)} = \mqty[
\cos(\theta_\pm(k)/2) \\
\sin(\theta_\pm(k)/2)e^{i\phi_\pm(k)}
], 
\label{eq:quantum_hardware_simulation/two-band/reconstructed_eigenstate}
\end{equation}
which is a two-component specialization of \cref{eq:results/quantum_circuit_hardware_implementation/reconstruction/psi} with 

\begin{equation}
\theta_\pm(k) = \cos^{-1}\expval{\hat{\sigma}^z_\pm(k)}, \quad \text{and} \quad \phi_\pm(k) = \tan^{-1}\frac{\expval{\hat{\sigma}^y_\pm(k)}}{\expval{\hat{\sigma}^x_\pm(k)}}.
\label{eq:quantum_hardware_simulation/two-band/Bloch_angles}
\end{equation} 
For the two-band model, the two energies are inversion-symmetric about the origin, so $\Lambda_+(k) = -\Lambda_-(k)$. Thus, we have 

\begin{equation}
\Lambda(k) = \frac{1}{4}p_+(k)p_-(k) = \frac{-im_0}{(1+m_1)^2}E(k) 
\label{eq:quantum_hardware_simulation/two-band/Lambda}
\end{equation}
as a concrete expression of \cref{eq:results/quantum_circuit_hardware_implementation/characterizing_braids_and_knotted_structures/Lambda} with $p_+(k)p_-(k)$ the explicit form of $f(\cdot)$, and 
\begin{equation}
p_\pm(k) = \frac{\expval{\hat{\sigma}^x_+(k)} \pm i\expval{\hat{\sigma}^y_+(k)}}{1 - \expval{\hat{\sigma}^z_+(k)}} \pm \frac{\expval{\hat{\sigma}^x_-(k)} \pm i\expval{\hat{\sigma}^y_-(k)}}{1 - \expval{\hat{\sigma}^z_-(k)}}. 
\label{eq:quantum_hardware_simulation/two-band/p}
\end{equation}
In \cref{eq:quantum_hardware_simulation/two-band/p}, the observables $\{\hat{O}_b\} = \{\hat{\sigma}^x, \hat{\sigma}^y, \hat{\sigma}^z\}$ are chosen to extract $E(k)$ from $\ket{\psi(k)}$. As the reconstructed eigenstates $\ket{\psi_\pm(k)}$ evolve in $k$, they trace out trajectories in the space of spin expectation values. The upper row of \cref{fig:H2_twister_band_structure_bloch_sphere_W} shows the good agreement between $\ket{\psi_\pm(k)}$ \cref{eq:quantum_hardware_simulation/two-band/reconstructed_eigenstate} and the theoretical eigenstates $\ket{\varphi_\pm(k)}$ obtained by diagonalizing \cref{eq:quantum_hardware_simulation/two-band_twister_model/H2_twister_model}, thus validating our eigenstate-preparation protocol. 

Taking $N = 2$ braid strands in \cref{eq:results/quantum_circuit_hardware_implementation/characterizing_braids_and_knotted_structures/partial_winding}, the winding of $\Lambda_+(k)$ around $\Lambda_-(k)$ takes the form $W_{+-}(k) = \frac{1}{2\pi i}\int_{0}^k \, \partial_{k'}\ln\Lambda(k')dk'$ which for discretized computations can be written as   
\begin{equation}
W_{+-}(k) \approx \frac{1}{2\pi} \sum_{k'=0}^k \ln\bigg[\frac{p_+(k'+\Delta k')p_-(k'+\Delta k')}{p_+(k')p_-(k')}\bigg], 
\label{eq:quantum_hardware_simulation/two-band_twister_model/winding_number_discretized}
\end{equation}
where $\Delta k'$ is the difference between each point. See the Methods section for a detailed derivation of \cref{eq:quantum_hardware_simulation/two-band_twister_model/winding_number_discretized} and we show it in the bottom row of \cref{fig:H2_twister_band_structure_bloch_sphere_W}. Observe that the measured winding numbers agree well with theory and have hence captured the braiding information as desired. 

The winding number records the braid crossings and we now elaborate on how the braid word can be extracted from it for this simplest two-strand context. Since there are only two strands, the braid group $B_2$ ($N = 2$ in \cref{eq:results/knots_and_braids/braid_group_presentation}) is generated by a single braid generator $\sigma_1$. We begin with the Hopf link, which has two crossings as shown in the eigenstate trajectories and the band structure in \cref{fig:H2_twister_band_structure_bloch_sphere_W}a(i) and \cref{fig:H2_twister_band_structure_bloch_sphere_W}a(ii) respectively. Picking a projection plane parallel to $\Lambda_+(k) - \Lambda_-(k)$ (3-vector), the crossings (black dashed lines in \cref{fig:H2_twister_band_structure_bloch_sphere_W}a(iii)) occur at $W_{+-}(k_{\mathrm{crossing}}) = 1/4$ or $W_{+-}(k_{\mathrm{crossing}}) = 3/4$ with phase difference exactly $\pi/2$. At $W_{+-}(k_{\mathrm{crossing}}) = 1/4$, we have $\tau^{+1}_{12}\Longrightarrow\sigma_1$ since $r = 0$ in \cref{eq:results/quantum_circuit_implementation/characterizing_braids_and_knotted_structures/W_crossing_condition}. At $W_{+-}(k_{\mathrm{crossing}}) = 3/4$, we have $r = 1$. Since the bands have swapped at $W_{+-}(k_{\mathrm{crossing}}) = 1/4$, we have $\tau_{21}^{-1} = \tau_{12}^{+1} \Longrightarrow \sigma_1$ which yields the braid word $\sigma_1^2$. 

In contrast, the unknot has three crossings. Two crossings manifest in the eigenstate trajectories as loops and the band structure in \cref{fig:H2_twister_band_structure_bloch_sphere_W}b(i) and \cref{fig:H2_twister_band_structure_bloch_sphere_W}b(ii) respectively since bands $+$ and $-$ at $k = 0$ are mapped to bands $-$ and $+$ at $k = 2\pi$. As the eigenstate trajectory at $k = 0$ coincides with the other trajectory at $k = 2\pi$, these two trajectories join to form a single loop upon braid closure. A similar analysis follows and we have the braid word $\sigma_1\sigma_1^{-1}\sigma_1 = \sigma_1$ since the two crossings cancel out. 

Finally, the unlink is trivial in that there are no crossings as the bands do not wind around each other, and the eigenstate trajectories do not intersect as shown in \cref{fig:H2_twister_band_structure_bloch_sphere_W}c(i) and \cref{fig:H2_twister_band_structure_bloch_sphere_W}c(ii). Thus, $W_{+-}(k)$ in \cref{fig:H2_twister_band_structure_bloch_sphere_W}c(iii) is trivial and the braid word is empty. Here, the two strands do not join, and each forms an independent loop upon braid closure. 

For $N > 2$, it becomes more challenging to reconstruct the complete set of eigenstates and the $\Lambda(k)$ expression is also less obvious to obtain as the energies are not inversion-symmetric about the origin. Nevertheless, we show below that this can still be reliably performed.

\begin{figure*}[htbp!]
\centering
\includegraphics[width=\linewidth]{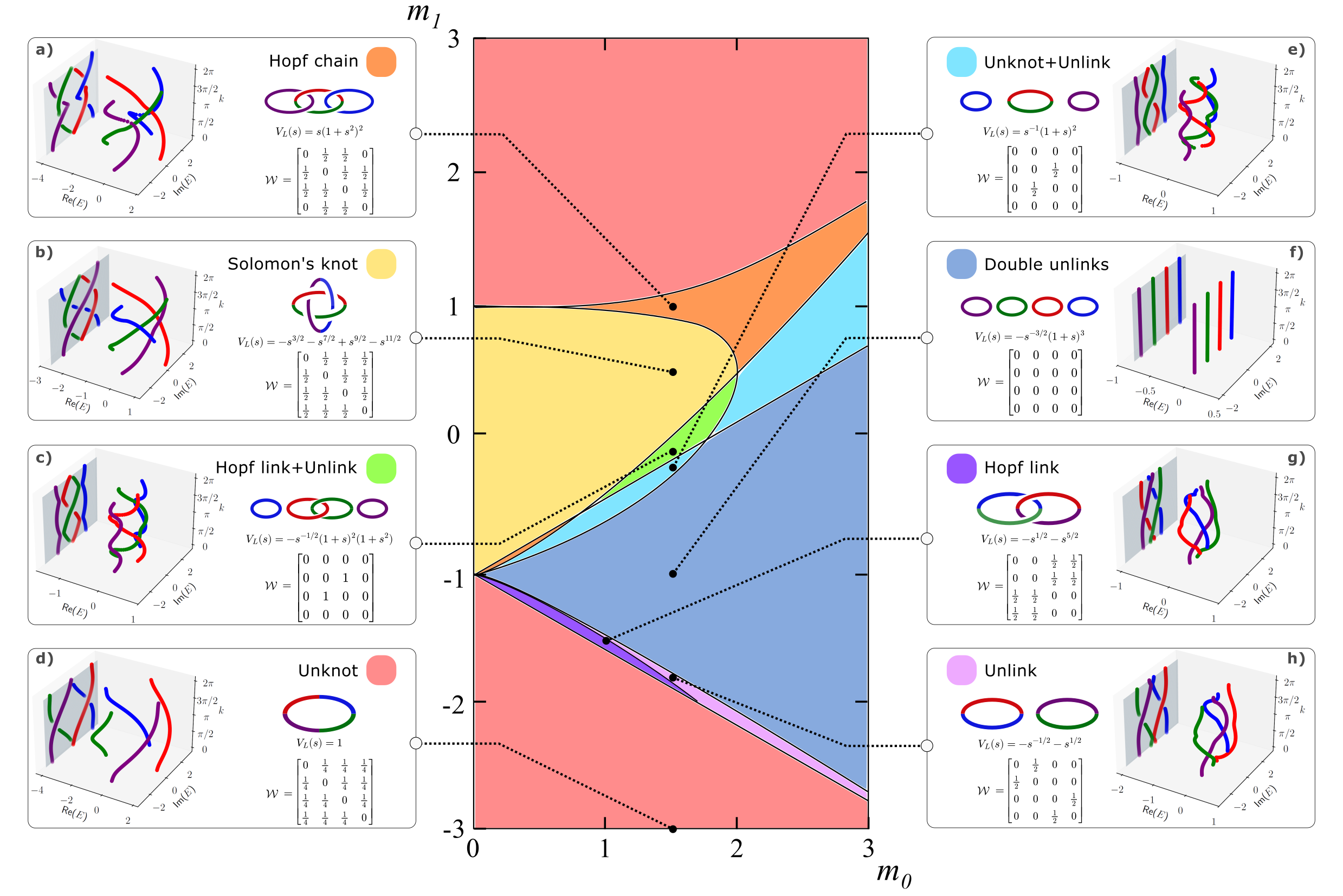}
\caption{\textbf{Variety of topologically inequivalent braids of the four-band twister model \cref{eq:quantum_hardware_simulation/four-band_twister_model/H4_twister_model}.} There are eight regions, each corresponding to a braid, and the boundaries are given in \cref{eq:methods/phase_diagrams/four-band/phase_boundaries}. The Jones polynomials $V_L(s)$ for a knot/link $L$, and their winding number matrices \cref{eq:results/quantum_circuit_hardware_implementation/characterizing_braids_and_knotted_structures/winding_number_matrix} are shown for each braid. Both serve as topological indices that uniquely distinguish the knots and links, although this property does not hold in general. The parameters used are $(m_0, m_1) = (1.5, 1)$, $(1.5, 0.5)$, $(1.5, -0.08)$, $(1.5, -3)$, $(1.5, -0.18)$, $(1.5, -1)$, $(1, -1.5)$, $(1.5, -1.8)$ for the \textbf{a)} Hopf chain, \textbf{b)} Solomon's knot, \textbf{c)} Hopf link + unlink, \textbf{d)} unknot, \textbf{e)} unknot + unlink, \textbf{f)} double unlinks, \textbf{g)} Hopf link, and \textbf{h)} unlink respectively.}
\label{fig:H4_phase_diagram}
\end{figure*}

\tocless\subsubsection{Measuring Braids with $N = 4$ Strands}

\begin{figure*}[htbp!]
\centering
\includegraphics[width=\linewidth]{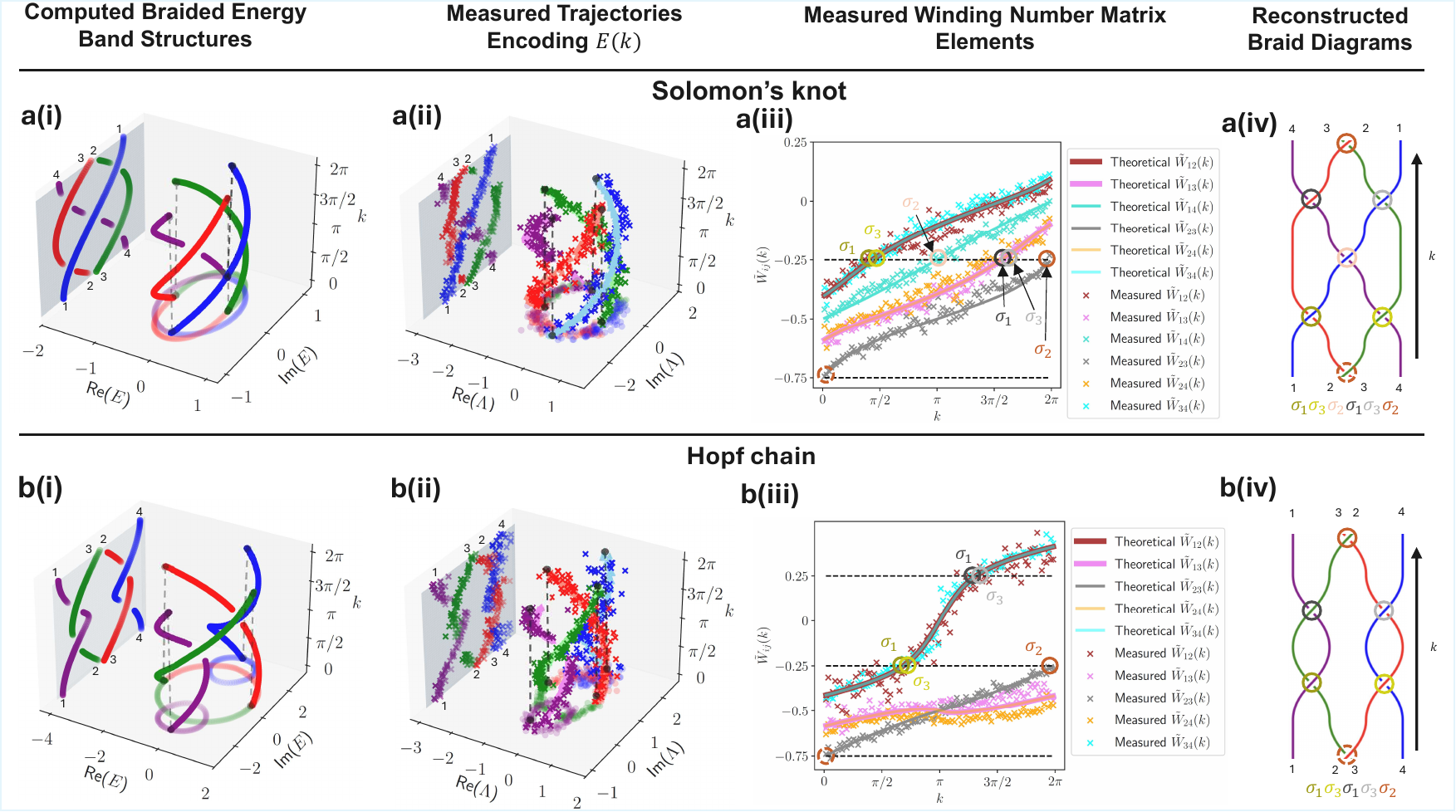}
\caption{\textbf{Characterization of braids via the elements of the phase-shifted measured winding numbers $\tilde{W}_{ij}(k)$ \cref{eq:quantum_hardware_simulation/four-band/phase_shifted_W} for the four-band twister model \cref{eq:quantum_hardware_simulation/four-band_twister_model/H4_twister_model}.} \textbf{a(i)} The band structure is computed by diagonalizing \cref{eq:quantum_hardware_simulation/four-band_twister_model/H4_twister_model} with parameters $(m_0, m_1) = (-0.5, -0.4)$. \textbf{a(ii)} The projected eigenstate trajectories \cref{eq:quantum_hardware_simulation/four-band/Lambda} are in good agreement with \textbf{a(i)}. \textbf{a(iii)} Measured winding number matrix elements $\tilde{W}_{ij}(k)$ after phase-shifting \cref{eq:quantum_hardware_simulation/four-band/winding_number_element_discretized} on \texttt{ibm\textunderscore marrakesh}. The six crossings (solid colored circles) in the evolution of $\tilde{W}_{ij}(k)$ are detected whenever $\tilde{W}_{ij}(k) = -1/4$ (black dashed lines); the black dashed lines correspond to \cref{eq:results/quantum_circuit_implementation/characterizing_braids_and_knotted_structures/W_crossing_condition}. \textbf{a(iv)} The braid diagram is reconstructed from the identified crossings. \textbf{b(i)} Same as \textbf{a(i)} with parameters $(m_0, m_1) = (2, 1.1)$. \textbf{b(ii)} The projected eigenstate trajectories are in good agreement with \textbf{b(i)}. \textbf{b(iii)} Same as \textbf{a(iii)} with five crossings instead. \textbf{b(iv)} Same as \textbf{a(iv)}. Note that the grey curve in both cases intersects the black dashed lines at $\tilde{W}_{ij}(k) = -1/4$ and $\tilde{W}_{ij}(k) = -3/4$. Due to periodicity, only one crossing should be chosen; in this case, we choose the second crossing near $k = 2\pi$. Importantly, note that the labels $i$ and $j$ in $\tilde{W}_{ij}(k)$ correspond to the initial labels at $k = 0$ before any permutation of the band indices.}
\label{fig:H4_twister_solomon_knot_hopf_chain_band_structure_braid_diagram_W}
\end{figure*}

We next demonstrate how four-strand braided band structures can be reconstructed on quantum hardware. Beyond requiring an additional qubit, having four eigenstates to measure presents a host of additional challenges. The most obvious challenge is that some target eigenstates are not extremal in $\Im(E)$, and thus cannot be obtained through naive $e^{\pm i\hat{H}(k)t}$ evolution. Additional measured observables are also necessary to compute the Bloch angles to parameterize an eigenstate. Furthermore, an analytic expression of a four-component eigenstate is complicated in general, thus extracting $E(k)$ via the expectation values of $\{\hat{O}_b\}$ to obtain $\Lambda(k)$ is non-trivial (see \cref{app:four-band/measurement_protocol} in the Supplementary Material). Below, we describe how these challenges can be overcome.

Importantly, having more strands confers far more sophisticated mathematical possibilities than  \cref{eq:quantum_hardware_simulation/two-band_twister_model/H2_twister_model}. With $N = 4$ strands (eigenstates), \cref{fig:H4_phase_diagram} reveals a rich variety of topologically distinct knotted band structures of the four-band twister model (c.f. \cref{eq:results/knots_and_braids/braiding/twister_model,eq:results/knots_and_braids/braiding/Tn_twister_model})

\begin{equation}
\begin{aligned}
\hat{H}^{(4)}(k) &= 
im_0\hat{\Sigma}^{(4)} + m_1 \hat{T}_1^{(4)}(k) + \hat{T}_2^{(4)}(k) \\
&= \mqty[
im_0 & 0 & 0 & e^{2ik} + e^{ik}m_1 \\
1 + m_1 & \frac{im_0}{3} & 0 & 0 \\
0 & 1 + m_1 & -\frac{im_0}{3} & 0 \\
0 & 0 & 1 + m_1 & -im_0
], 
\label{eq:quantum_hardware_simulation/four-band_twister_model/H4_twister_model}
\end{aligned}
\end{equation}
where $m_0, m_1\in\mathbb{R}$ and $\hat{\Sigma}^{(4)} = \diag(1, 1/3, -1/3, -1)$. The branch points in its energy dispersion $E_{a,b}(k) = \frac{a}{3}\sqrt{-5m_0^2 +b \sqrt{16m_0^4 + 81e^{ik}(m_1+1)^3(m_1+e^{ik})}},\ a, b \in\{+1,-1\}$ give rise to intricate multi-valuedness whose holonomy properties are exactly captured by the braid structure.

To highlight a few examples, \cref{eq:quantum_hardware_simulation/four-band_twister_model/H4_twister_model} hosts a braid that corresponds to the Hopf chain with five crossings in \cref{fig:H4_phase_diagram}a, with parameters $(m_0, m_1) = (1.5, 1)$, which generalizes the Hopf link to three interlocking rings. The most complicated example is shown in \cref{fig:H4_phase_diagram}b that corresponds to Solomon's knot with parameters $(m_0, m_1) = (1.5, 0.5)$, and it has six crossings. In both examples, $\mathcal{W}_{ij} = 1/2$ as a band does not return to its initial position after one period (see \cref{fig:methodology_illustration}b). There exist other variations that comprise different combinations of the Hopf link with the unlink and unknot, as shown in \cref{fig:H4_phase_diagram}. 

The boundaries (see \cref{eq:methods/phase_diagrams/four-band/phase_boundaries} in Methods) that demarcate topologically inequivalent braids are given by the loci of gap closure $E_{i, j}(k) = 0$. Each link can be characterized by the Alexander polynomial knot invariant $\Delta_L(s)$, which can be computed directly from the braid word via the Burau representation \cite{Alexander1928} \cref{eq:knot_polynomials/computation_of_knot_polynomials/Alexander_polynomial} (see \cref{app:knot_polynomials/computation_of_the_Alexander_polynomials} in the Supplementary Material for its definition and computational details). Note, however, that the Alexander polynomial (like any other single knot invariant) cannot uniquely identify all possible knots: here, in particular, it vanishes for disjoint links. But together with the winding number matrix \cref{eq:results/quantum_circuit_hardware_implementation/characterizing_braids_and_knotted_structures/winding_number_matrix} (also shown in \cref{fig:H4_phase_diagram}), all eight inequivalent knots/links shown can be distinctly identified. 

To reconstruct the eigenstates of \cref{eq:quantum_hardware_simulation/four-band_twister_model/H4_twister_model} on a quantum processor and compute the winding number matrix, note that the four strands are encoded in two qubits, together with an ancilla qubit to postselect for non-unitary dynamics. As aforementioned, reconstructing all eigenstates in a four-band model is challenging. Unlike the two-band twister model, we will need to reconstruct the eigenstate trajectories on the Bloch hypersphere (or construct Majorana stars \cite{majorana1932atomi}) with six independent angles. Instead of \cref{eq:results/quantum_circuit_hardware_implementation/obtaining/U_H}, we define $\hat{U}_H = e^{-i \hat{H}_\lambda^{(4)}(k)t}$ where $\hat{H}_\lambda^{(4)} = e^{i\lambda}\hat{H}^{(4)}(k)$ with $\lambda\in[0, 2\pi]$ (see the discussion around \cref{eq:methods/obtaining_eigenstates/rotated_Hamiltonian} in Methods) to access the complete set of eigenstates. In total, 4800 quantum circuits with $t = 20$ were simulated for each knot/link in \cref{fig:H4_phase_diagram} except the unknot + unlink with $t = 25$.

Among several equivalent possible choices, the expectation values of the most convenient set of measured observables $\{\hat{O}'_a\}$ for reconstructing an eigenstate is (see Methods for the derivation of \cref{eq:quantum_hardware_simulation/four-band/measured_observables_expectation})

\begingroup
\allowdisplaybreaks 
\begin{align}
\expval{\hat{\sigma}^\alpha_i(k)}_{+-,A} &= \mel*{\psi_i^{'(\alpha)}(k)}{(\hat{\mathbb{I}}\otimes\ket{0}\bra{0}\otimes\hat{\sigma}^z)}{\psi_i^{'(\alpha)}(k)} \notag \\
&\xrightarrow[]{\text{PS}} \abs*{\braket*{000}{\psi_i^{'(\alpha)}(k)}}^2 - \abs*{\braket*{001}{\psi_i^{'(\alpha)}(k)}}^2, \notag \\
\expval{\hat{\sigma}^\alpha_i(k)}_{+-,B} &= \mel*{\psi_i^{'(\alpha)}(k)}{(\hat{\mathbb{I}}\otimes\ket{1}\bra{1}\otimes\hat{\sigma}^z)}{\psi_i^{'(\alpha)}(k)} \notag \\
&\xrightarrow[]{\text{PS}} \abs*{\braket*{010}{\psi_i^{'(\alpha)}(k)}}^2 - \abs*{\braket*{011}{\psi_i^{'(\alpha)}(k)}}^2, \notag \\
\expval{\hat{\sigma}^\alpha_i(k)}_{-,AB} &= \mel*{\psi_i^{'(\alpha)}(k)}{(\hat{\mathbb{I}}\otimes\hat{\sigma}^z\otimes\ket{1}\bra{1})}{\psi_i^{'(\alpha)}(k)} \notag \\
&\xrightarrow[]{\text{PS}} \abs*{\braket*{001}{\psi_i^{'(\alpha)}(k)}}^2 - \abs*{\braket*{011}{\psi_i^{'(\alpha)}(k)}}^2, \notag \\
\expval{\hat{\sigma}^\alpha_i(k)}_{+-,AB} &= \mel*{\psi_i^{'(\alpha)}(k)}{(\hat{\mathbb{I}}\otimes\hat{\sigma}^z\otimes\hat{\mathbb{I}})}{\psi_i^{'(\alpha)}(k)} \notag \\
&\xrightarrow[]{\text{PS}} \abs*{\braket*{000}{\psi_i^{'(\alpha)}(k)}}^2 + \abs*{\braket*{001}{\psi_i^{'(\alpha)}(k)}}^2 \notag \\
&\quad -\abs*{\braket*{010}{\psi_i^{'(\alpha)}(k)}}^2 - \abs*{\braket*{011}{\psi_i^{'(\alpha)}(k)}}^2, \label{eq:quantum_hardware_simulation/four-band/measured_observables_expectation}
\end{align}
\endgroup
where $i$ labels each strand. $\ket*{\psi_i^{'(\alpha)}(k)} = (\hat{\mathbb{I}}_1\otimes\hat{A}_2^{(\alpha)} \otimes \hat{A}_3^{(\alpha)})\hat{U}_i(k)\ket{000}$ is a specialization of \cref{eq:results/quantum_circuit_hardware_implementation/reconstruction/psi_prime} to two qubits and shown in \cref{fig:methodology_illustration}a, and $\hat{A}_i^{(\alpha)} \in \{\hat{R}^y_i(-\pi/2), \hat{R}^x_i(\pi/2), \hat{\mathbb{I}}_i\}$ with $\alpha \in \{x, y, z\}$. After measuring these expectation values, we can obtain the six Bloch angles

\begin{equation}
\begin{aligned}
\theta_{\mu, i}(k) &= \cos^{-1}\expval{\hat{\sigma}_i^z(k)}_{+-,\mu}, \\
\phi_{\nu, i}(k) &= -\tan^{-1}\frac{\expval{\hat{\sigma}_i^y(k)}_{+-,\nu}}{\expval{\hat{\sigma}_i^x(k)}_{+-,\nu}}, \\
\phi_{AB, i}(k) &= -\tan^{-1}\frac{\expval{\hat{\sigma}_i^y(k)}_{-,AB}}{\expval{\hat{\sigma}_i^x(k)}_{-,AB}}, 
\label{eq:quantum_hardware_simulation/four-band/Bloch_angles}
\end{aligned}
\end{equation}
where $\mu\in\{A, B, AB\}$ and $\nu\in\{A, B\}$. Together, these quantities allow us to reconstruct a normalized eigenstate 

\begin{equation}
\ket{\psi_i(k)} = 
\mqty[
\cos\frac{\theta_{A, i}(k)}{2}\cos\frac{\theta_{AB, i}(k)}{2}e^{i\phi_{A, i}(k)}e^{i\phi_{AB, i}(k)} \\
\sin\frac{\theta_{A, i}(k)}{2}\cos\frac{\theta_{AB, i}(k)}{2}e^{i\phi_{AB, i}(k)} \\
\cos\frac{\theta_{B, i}(k)}{2}\sin\frac{\theta_{AB, i}(k)}{2}e^{i\phi_{B, i}(k)} \\
\sin\frac{\theta_{B, i}(k)}{2}\sin\frac{\theta_{AB, i}(k)}{2}
]. 
\label{eq:quantum_hardware_simulation/four-band/reconstructed_eigenstate}
\end{equation}
We remind the reader that the observables in \cref{eq:quantum_hardware_simulation/four-band/measured_observables_expectation} are precisely $\{\hat{O}'_a\}$ in \cref{fig:methodology_illustration}a, chosen to compute \cref{eq:quantum_hardware_simulation/four-band/Bloch_angles}. 

Now, we also need to examine the observables $\{\hat{O}_b\} = \{(\hat{\mathbb{I}} - \hat{\sigma}^z)\otimes\hat{\sigma}^x, (\hat{\sigma}^z - \hat{\mathbb{I}})\otimes\hat{\sigma}^y, (\hat{\mathbb{I}} - \hat{\sigma}^z)\otimes(\hat{\mathbb{I}} - \hat{\sigma}^z)\}$ for the purpose of extracting the braiding spectrum $E_i(k)$ from $\ket{\psi_i(k)}$. By computing $\mel{\psi_i(k)}{\hat{O}_b}{\psi_i(k)}$, we get the trajectories 

\begin{equation}
\Lambda_i(k) = \frac{X_i(k) + iY_i(k)}{Z_i(k)} = \frac{3c_3(k) + E_i(k)}{2c_0(k)}, 
\label{eq:quantum_hardware_simulation/four-band/Lambda}
\end{equation}
where $X_i(k) = \mel*{\psi_i(k)}{(\hat{\mathbb{I}} - \hat{\sigma}^z)\otimes\hat{\sigma}^x}{\psi_i(k)}$, $Y_i(k) = \mel*{\psi_i(k)}{(\hat{\sigma}^z - \hat{\mathbb{I}})\otimes\hat{\sigma}^y}{\psi_i(k)}$, $Z_i(k) = \mel*{\psi_i(k)}{(\hat{\mathbb{I}} - \hat{\sigma}^z)\otimes(\hat{\mathbb{I}} - \hat{\sigma}^z)}{\psi_i(k)}$, $c_0(k)\in i\mathbb{R}$, and $c_3(k)\in\mathbb{C}$. Importantly, as shown in \cref{app:four-band/measurement_protocol} of the Supplementary Material, our setup possesses the useful property

\begin{equation}
\Lambda_i(k) -\Lambda_j(k) \propto E_i(k)-E_j(k)
\end{equation}
for any pair of bands $i$ and $j$. Hence, the winding between $E_i(k)$ and $E_j(k)$ can be measured and quantified by measuring the eigenstate windings through the difference $\Lambda_i(k) - \Lambda_j(k)$. Since $\{\Lambda_i(k)\}$ are able to reproduce the braiding information in $\{E_i(k)\}$,  \cref{eq:results/quantum_circuit_hardware_implementation/characterizing_braids_and_knotted_structures/partial_winding} can be adapted to yield 

\begin{equation}
\begin{aligned}
W_{ij}(k) &\approx \frac{1}{n} \sum_{a = 1}^{n} \frac{1}{2\pi} \\
&\quad \times \sum_{k'=0}^k \ln\bigg[\frac{\Lambda_i^{(a)}(k'+\Delta k') - \Lambda_j^{(a)}(k'+\Delta k')}{\Lambda_i^{(a)}(k') - \Lambda_j^{(a)}(k')}\bigg], 
\label{eq:quantum_hardware_simulation/four-band/winding_number_element_discretized}
\end{aligned}
\end{equation}
which is a generalization of \cref{eq:quantum_hardware_simulation/two-band_twister_model/winding_number_discretized} as \cref{eq:quantum_hardware_simulation/four-band/winding_number_element_discretized} is the general form for $N$-band models. For $N = 2$ when the energies are inversion-symmetric about the origin, the difference $\Lambda_i(k) - \Lambda_j(k)$ is equivalent to the difference with respect to the origin. 

\cref{fig:H4_twister_solomon_knot_hopf_chain_band_structure_braid_diagram_W} illustrates how we characterize the braids via the elements of the phase-shifted measured winding numbers $\tilde{W}_{ij}(k)$ (c.f. \cref{eq:results/quantum_circuit_hardware_implementation/characterizing_braids_and_knotted_structures/partial_winding}) which we define as 

\begin{equation}
\tilde{W}_{ij}(k) := W_{ij}(k) - \frac{\pi/2 - \chi_{ij}(0)}{2\pi},  
\label{eq:quantum_hardware_simulation/four-band/phase_shifted_W}
\end{equation}
where $\chi_{ij}(0) = \arg(\Lambda_i(0) - \Lambda_j(0))$, and choosing $\pi/2$ corresponds to a plane that is parallel to the $\Im(\Lambda)$ plane. In general, the phase difference between each pair of bands is different. Importantly, to have well-defined winding numbers, the phase differences must be with respect to the same projection plane (braid diagram) so that the braid crossings are accurately detected by \cref{eq:results/quantum_circuit_implementation/characterizing_braids_and_knotted_structures/W_crossing_condition} (see the discussion regarding \cref{eq:extraction_of_braid_words/phase_shifted_W} in Methods for the slightly more general form of \cref{eq:quantum_hardware_simulation/four-band/phase_shifted_W}). 

We now consider Solomon's knot as an example. The band structure in \cref{fig:H4_twister_solomon_knot_hopf_chain_band_structure_braid_diagram_W}a(i) shows the braid diagram and the braid that forms Solomon's knot upon braid closure. Our measurement protocol reproduces the braiding information of $\{E_i(k)\}$ in the projected eigenstate trajectories $\{\Lambda_i(k)\}$ as shown in \cref{fig:H4_twister_solomon_knot_hopf_chain_band_structure_braid_diagram_W}a(ii). Observe that there is good agreement with \cref{fig:H4_twister_solomon_knot_hopf_chain_band_structure_braid_diagram_W}a(i) in that the braids are correctly encoded in $\{\Lambda_i(k)\}$ which results in an accurate reproduction of the crossings. This is vital as we can form the braid word either by reading off directly from the braid diagram in \cref{fig:H4_twister_solomon_knot_hopf_chain_band_structure_braid_diagram_W}a(ii) or from the evolution of $\tilde{W}_{ij}(k)$ in \cref{fig:H4_twister_solomon_knot_hopf_chain_band_structure_braid_diagram_W}a(iii). 

The braid group $B_4$ ($N = 4$ in \cref{eq:results/knots_and_braids/braid_group_presentation}) for Solomon's knot is generated by the three braid generators $\sigma_1$, $\sigma_2$, and $\sigma_3$. A crossing in \cref{fig:H4_twister_solomon_knot_hopf_chain_band_structure_braid_diagram_W}a(ii) is recorded in \cref{fig:H4_twister_solomon_knot_hopf_chain_band_structure_braid_diagram_W}a(iii) whenever $\tilde{W}_{ij}(k_{\mathrm{crossing}}) = -1/4$. As such, the six crossings in \cref{fig:H4_twister_solomon_knot_hopf_chain_band_structure_braid_diagram_W}a(ii) correspond to the six colored circles in \cref{fig:H4_twister_solomon_knot_hopf_chain_band_structure_braid_diagram_W}a(iii). Since a one-to-one correspondence is made, we can appropriately identify each crossing with a braid generator $\sigma_i$ to form a braid word, thereby facilitating the reconstruction of the braid diagram in \cref{fig:H4_twister_solomon_knot_hopf_chain_band_structure_braid_diagram_W}a(iv). 

The first crossing is between bands 1 and 2 (blue and red in \cref{fig:H4_twister_solomon_knot_hopf_chain_band_structure_braid_diagram_W}a(ii)) such that the band indices are permuted as $(1, 2) \to (2, 1)$. Since $\tilde{W}_{12}(k_{\mathrm{crossing}}) = -1/4$, we have $r = -1$ in \cref{eq:results/quantum_circuit_implementation/characterizing_braids_and_knotted_structures/W_crossing_condition}, and thus \cref{eq:results/quantum_circuit_implementation/characterizing_braids_and_knotted_structures/crossing_to_braid_generator} becomes $\tau_{12}^{-1} \Longrightarrow \sigma_1^{-1\cdot(-1)^{\delta_{44}}} = \sigma_1$. The second crossing swaps bands 3 and 4 (green and purple) such that $(3, 4) \to (4, 3)$, and we have $\sigma_3$. Relabeling after two crossings, the band indices for blue and purple become $(1, 4) \to (2, 3)$ at the third crossing, hence we have $\sigma_2$. At the fourth crossing, the red and purple band indices are $(2, 4) \to (1, 2)$ so we have $\sigma_1$. At the fifth crossing, the blue and green band indices are $(1, 3) \to (3, 4)$ so we have $\sigma_3$. At the final crossing, the red and green band indices become $(2, 3) \to (2, 3)$ so we have $\sigma_2$. Although it seems that there are two crossings at $\tilde{W}_{23}(k_{\mathrm{crossing}}) = -3/4$ and $\tilde{W}_{23}(k_{\mathrm{crossing}}) = -1/4$, these occur near to $k = 0$ and $k = 2\pi$. As such, they are identified together as $k$ has a period of $2\pi$ and we choose the crossing at $\tilde{W}_{ij}(2\pi) = -1/4$. Hence, the braid word is $\sigma_1\sigma_3\sigma_2\sigma_1\sigma_3\sigma_2$. 

A similar analysis follows for the Hopf chain, where we also observe good agreement between \cref{fig:H4_twister_solomon_knot_hopf_chain_band_structure_braid_diagram_W}b(i) and \cref{fig:H4_twister_solomon_knot_hopf_chain_band_structure_braid_diagram_W}b(ii). The braid word is extracted from \cref{fig:H4_twister_solomon_knot_hopf_chain_band_structure_braid_diagram_W}b(iii) and the correspondence for each crossing is shown in \cref{fig:H4_twister_solomon_knot_hopf_chain_band_structure_braid_diagram_W}b(iv). Repeating the same analysis gives $\sigma_1\sigma_3\sigma_1\sigma_3\sigma_2$ for the Hopf chain. Through the braid word, the exact Jones polynomial can be uniquely computed for its braid closure (see \cref{app:knot_polynomials/computation_of_the_Jones_polynomials} in the Supplementary Material for computational details.)

Several remarks are in order. For the twister models we consider, the winding numbers are sufficient to characterize the braids although this is \emph{not} true in general as it is not a knot invariant like the Jones polynomial. Furthermore, visualizing the band structure and the projected eigenstate trajectories are unnecessary in obtaining the braid word; rather, \emph{only} the evolution of $\tilde{W}_{ij}(k)$ in \cref{fig:H4_twister_solomon_knot_hopf_chain_band_structure_braid_diagram_W}a(iii) and \cref{fig:H4_twister_solomon_knot_hopf_chain_band_structure_braid_diagram_W}b(iii) is necessary. The crucial step is that one must carefully keep track of the band indices after each crossing so that the correct braid generator is ascribed to it. See the section ``From the $k$-dependent Measured Winding Numbers to the Braid Word'' in Methods for the detailed procedure. 

\tocless\section{Discussion}
\label{sec:discussion}

In this work, we proposed an eigenstate-based measurement protocol and demonstrated it on a digital quantum computer for a family of non-Hermitian twister models. This is the first implementation on a quantum processor that demonstrates physical access to braided spectral structures comprising up to four strands, whose braid closures yield non-trivial links such as Solomon's knot and the Hopf chain. In particular, we developed a non-variational protocol that reconstructs the complete eigenstate manifold of multi-band non-Hermitian systems deterministically. Our approach deterministically resolves the global eigenstate structure of the Hamiltonian, enabling direct access to the complete spectral manifold across multiple bands.  This further allows us to obtain the winding number matrix by directly measuring a set of Pauli observables from the reconstructed eigenstates, which can already encode sufficient information of the spectral windings without complete spectral characterization or iterative circuit optimization procedures, thereby substantially reducing the associated computational overhead compared to alternative approaches such as quantum phase estimation.

Having obtained the $k$-dependent pairwise winding numbers, they are systematically mapped to braid words from the exchange structure of eigenstate trajectories. As such, knot invariants, including the Alexander and Jones polynomials, are computed via the extracted braid words. In principle, these ingredients establish an experimentally accessible pipeline that bridges quantum state characterization and knot theory, allowing braiding topology to be systematically measured even when direct energy spectroscopy is impractical or inaccessible. 

Looking forward, our framework opens a new route to physically exploring braid topology in regimes that remain experimentally elusive on other established platforms~\cite{PhysRevLett.127.090501,PhysRevLett.130.163001}. Several directions may further extend the scope and impact of this framework. On the algorithmic side, reducing classical post-processing overhead through techniques such as classical shadow tomography~\cite{huang2020predicting} or machine learning~\cite{kauffman2020rectangularknotdiagramsclassification,yu2021experimental,chen2024machine} could enable more scalable extractions of topological data directly from measurements on a quantum processor. Integrating these protocols with emerging quantum error correction architectures~\cite{google2023suppressing,google2025quantum} will be essential for accessing models beyond the four-band twister and/or deeper circuits with controlled fidelity. 

Beyond methodological improvements, our approach opens avenues toward qualitatively new physics, e.g., experimental access to categorified invariants such as Khovanov homology~\cite{schmidhuber2025quantumalgorithmkhovanovhomology}, and the exploration of engineering and probing non-Abelian braiding and topological structures in programmable digital quantum processors~\cite{Xu2023Digital,xu2024non,wang2025simulatingnonabelianstatisticsparafermions}. In general, these prospects suggest that quantum processors are not only capable of simulating topology, but also of discovering and manipulating new forms of topological structures beyond conventional spectral characterization.  

\vspace{0.5cm}
{\begin{center}
    \textbf{AUTHOR CONTRIBUTIONS}
\end{center}
\vspace{0.2cm}
T. Y. N and Y. W contributed equally to this work. C. H. L proposed the initial idea, advised on the implementations and is the overall supervisor. T. Y. N implemented the formalism on quantum hardware and performed the numerical simulations. Y. W solved the generalized twister model, developed the formalism to measure winding numbers and extract braid words on quantum hardware. W. J. C contributed to the code used in the numerical simulations and the computation of knot polynomials. Y. W and T. C developed the rotation method for multi-band models. All authors contributed to the writing of the manuscript. 

\vspace{0.5cm}
{\begin{center}
    \textbf{DATA AVAILABILITY}
\end{center}
\vspace{0.2cm}
The data used in this work is available from the corresponding authors upon reasonable request. 

\vspace{0.5cm}
{\begin{center}
    \textbf{CODE AVAILABILITY}
\end{center}
\vspace{0.2cm}
The code used in this work is available from the corresponding authors upon reasonable request. 

\vspace{0.5cm}
{\begin{center}
    \textbf{ACKNOWLEDGEMENTS}
\end{center}
\vspace{0.2cm}
This research is supported by the Ministry of Education, Singapore (MOE award number: MOE-T2EP50224-0021 (WBS no. A-80035050100)). T.Y.N is supported by the A*STAR Graduate Academy. T.~C. acknowledges partial support from the A*STAR Quantum
Innovation Centre (Q.~InC) Strategic Research and Translational Thrust. We acknowledge the use of IBM Quantum services in
this work. The views expressed are those of the authors
and do not reflect the official policy or position of IBM
or the IBM Quantum team.

\tocless\section{Methods}

\tocless\subsection{Obtaining the Eigenstates of Multi-Band Non-Hermitian Hamiltonians on Quantum Hardware}

To reconstruct the braid structure, we need the complete set of eigenstates of $\hat{H}^{(N)}(k)$ on a periodic grid. However, for a fixed initial state, the non-unitary time-evolution operator in \cref{eq:results/quantum_circuit_hardware_implementation/obtaining/U_H} generally converges to only one target eigenstate at each sampled $k$. Repeating the same preparation protocol is therefore insufficient to recover the complete eigenstate manifold required for the multi-strand braid.

To selectively access different eigenstates while preserving the braid structure, we introduce a global phase rotation of the Hamiltonian
\begin{equation}
\hat{H}_{\lambda}^{(N)}(k) = e^{i\lambda}\hat{H}^{(N)}(k),
\label{eq:methods/obtaining_eigenstates/rotated_Hamiltonian}
\end{equation}
where $\lambda\in[0,2\pi]$ is a rotation angle in the complex-energy plane. This transformation rotates the energy spectrum but leaves the eigenstates unchanged. It therefore changes which eigenstate is preferentially prepared under non-unitary time evolution, while preserving the underlying braid structure. 

For each sampled point $k_j$, we then consider the parameterized state
\begin{equation}
\ket{\psi(k_j,\lambda)} = e^{-i\hat{H}_{\lambda}^{(N)}(k_j)t}\ket{0}^{\otimes M},
\end{equation}
where $M = \ceil{\log_2 N}$. In principle, one could sweep $\lambda$ directly on hardware and identify distinct prepared states from the measurement outcomes. In practice, such a brute-force sweep would be experimentally costly. We therefore determine the optimal rotation angles classically and then use the resulting values in the quantum simulation. Let $\{\ket{\varphi_i(k_j)}\}_{i=1}^N$ denote the right eigenstates obtained by diagonalizing $\hat{H}^{(N)}(k_j)$. For the $i$-th eigenstate at $k_j$, we choose the optimal rotation angle $\lambda_{i,j}^{\mathrm{opt}}$ by maximizing the overlap 

\begin{equation}
\lambda_{i,j}^{\mathrm{opt}}  = \arg\max_{\lambda_i\in[0,2\pi]} \abs{\braket{\psi(k_j,\lambda_i)}{\varphi_i(k_j)}}.
\label{eq:methods/obtaining_eigenstates/maximizing_lambda}
\end{equation}
Repeating this procedure over the grid gives the set
\begin{equation}
\vb*{\lambda}_i^{\mathrm{opt}} = \{\lambda_{i,1}^{\mathrm{opt}},\lambda_{i,2}^{\mathrm{opt}},\ldots,\lambda_{i,N_k}^{\mathrm{opt}}\},
\end{equation}
for the $i$-th eigenstate, where $N_k$ is the number of sampled points on the grid. \footnote{An explicit example for Solomon's knot is shown in \cref{fig:H4_twister_solomon_knot_overlap}.}

Using these optimized angles, we define the eigenstate-resolved non-unitary evolution operator
\begin{equation}
\hat{U}_H^{(i)}(k_j)=e^{-i\hat{H}_{\lambda_{i,j}^{\mathrm{opt}}}^{(N)}(k_j)t},
\label{eq:methods/obtaining_eigenstates/U_H}
\end{equation}
which prepares the $i$-th eigenstate at $k_j$. Applying this construction to all $i = 1, \ldots, N$ at all sampled points $k_j$ yields the complete set of reconstructed eigenstates needed for the braid analysis. We remark that as $N$ increases, it becomes harder (though still possible) to find $\vb*{\lambda}_i^{\mathrm{opt}}$ as the range of valid angles is increasingly smaller.

\begin{figure}[htbp!]
\centering
\includegraphics[width=\linewidth]{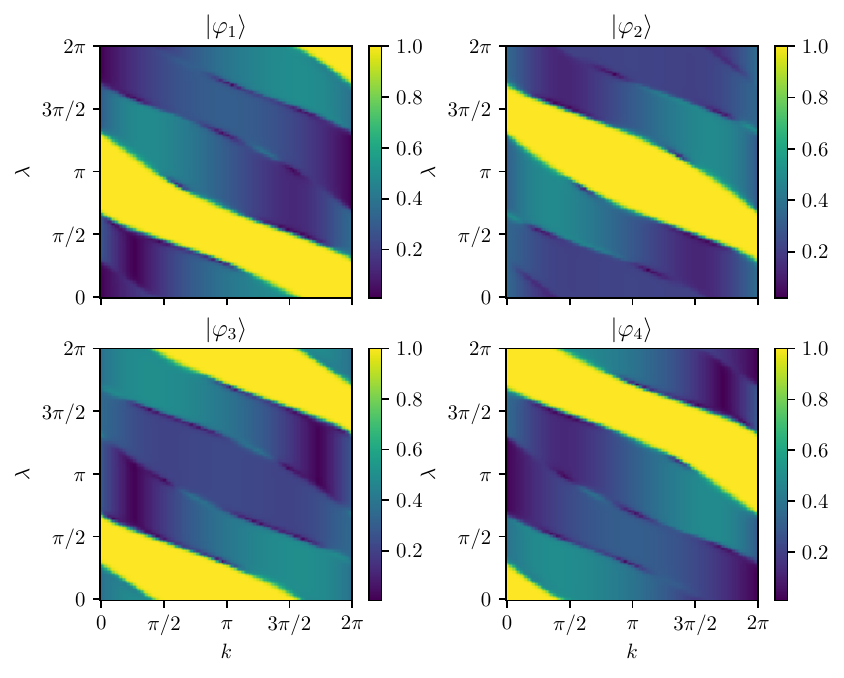}
\caption{\textbf{Determining the rotation angles for Solomon's knot.} In each plot, the yellow regions denote the angles $\lambda(k)$ that yield the largest overlap $\abs{\braket{\psi(k_j, \lambda_i^{\mathrm{opt}})}{\varphi_i(k_j)}}$ (see \cref{eq:methods/obtaining_eigenstates/maximizing_lambda}) where $\ket{\psi(k_j, \lambda_i^{\mathrm{opt}})}$ is the classically evolved state with $t = 20$ and $\ket{\varphi_i(k_j)}$ is the true eigenstate obtained by diagonalizing \cref{eq:quantum_hardware_simulation/four-band_twister_model/H4_twister_model}. The parameters used are $(m_0, m_1) = (-0.5, -0.4)$.}
\label{fig:H4_twister_solomon_knot_overlap}
\end{figure}

\tocless\subsection{Two-band Twister Model}
Here, we detail the measurement protocol that allows us to extract $E_\pm(k)$ from the set of measured observables $\{\hat{\sigma}^x, \hat{\sigma}^y, \hat{\sigma}^z\}$, and we state the boundaries of the topologically inequivalent regions. The two-band twister model is 

\begin{equation}
\hat{H}^{(2)}(k) = \mqty[
im_0 & e^{2ik} + e^{ik}m_1 \\
1 + m_1 & -im_0
],
\end{equation}
and its eigenvalues are

\begin{equation}
E_\pm(k) = \pm\sqrt{e^{2ik}(m_1+1) + m_1e^{ik}(m_1+1)-m_0^2}, 
\label{eq:methods/two-band/eigenvalues}
\end{equation}
where $\abs{e^{ik}} = 1$.

\tocless\subsubsection{Measurement Protocol}

Any normalized eigenstate of $\hat{H}^{(2)}(k)$ at $k$ can be written as a two-component spinor. Up to an overall global phase, such a spinor can be parameterized by a point on the Bloch sphere with polar angle $\theta(k)$ and azimuthal angle $\phi(k)$. This representation is particularly useful because the Bloch sphere is naturally embedded in the space of expectation values of the Pauli operators. In fact, for a pure two-level state, the expectation values of the Pauli operators form a unit vector that specifies the corresponding point on the Bloch sphere. The Bloch angles $\theta(k)$ and $\phi(k)$ are given by substituting the spinor representation \cref{eq:quantum_hardware_simulation/two-band/reconstructed_eigenstate} into $\expval{\hat{\sigma}^{x,y,z}(k)}$ and solving as shown in \cref{eq:quantum_hardware_simulation/two-band/Bloch_angles}.

For the two-band twister model, the winding number can be extracted directly from the eigenstate expectation values without reconstructing the full spectrum. Since $E_-(k)=-E_+(k)$, there is only one independent band pair 

\begin{equation}
W_{+-}(k) = \frac{1}{2\pi i}\int_0^k dk\,\partial_{k'} \ln \Lambda(k'),
\label{eq:methods/measurement_protocol/two-band/winding_number}
\end{equation}
where $\Lambda(k')$ is a complex quantity constructed from measurable observables. 

A generic non-Hermitian two-band model can be written as
\begin{equation}
\hat{H}^{(2)}_\text{generic}(k)=\vb{d}(k)\cdot\vb*{\hat{\sigma}},
\end{equation}
where $\vb{d}(k)=[d_x(k),d_y(k),d_z(k)]^T\in\mathbb{C}^3$ and
$d_+(k)=d_-^*(k)=d_x(k)+id_y(k)\in\mathbb{C}$. For each eigenstate, we reconstruct the Bloch vector from the measured Pauli expectation values

\begin{equation}
\vb{L}_\pm(k)=
\bigl[
\expval{\hat{\sigma}^x_\pm(k)},
\expval{\hat{\sigma}^y_\pm(k)},
\expval{\hat{\sigma}^z_\pm(k)}
\bigr]^T.
\end{equation}
Using the stereographic projection described in the Supplementary Material, these observables are mapped to complex trajectories for the two eigenstates. From them, we construct the symmetric and antisymmetric combinations $\mathcal{L}_\pm(k)$, which for the twister model reduce to~\footnote{The detailed derivation and more explanations are given in \cref{app:two-band/measurement_protocol}.}
\begin{equation}
\mathcal{L}_+(k)=\frac{d_z^*(k)}{d_-(k)},
\qquad
\mathcal{L}_-(k)=\frac{E^*(k)}{d_-(k)}.
\end{equation}
We then define
\begin{equation}
\Lambda(k)=\mathcal{L}_+(k)\mathcal{L}_-^*(k)
=\frac{d_z^*(k)E(k)}{d_-(k)d_+(k)}.
\end{equation}
Specializing to the parameters in $\hat{H}^{(2)}(k)$ yields $\Lambda(k) = \frac{1}{4}p_+(k)p_-(k)$ as given in \cref{eq:quantum_hardware_simulation/two-band/Lambda}. For the twister model in \cref{eq:results/knots_and_braids/braiding/twister_model}, $d_z(k)$ is a constant and $d_-(k)d_+(k)\in\mathbb{R}$, so only the phase of $\Lambda(k)$ contributes to $W_{+-}(k)$. Thus, the winding of $\Lambda(k)$ faithfully reproduces the band winding. In the simulations, \cref{eq:methods/measurement_protocol/two-band/winding_number} is discretized as described in \cref{eq:quantum_hardware_simulation/two-band_twister_model/winding_number_discretized}. 

\tocless\subsubsection{Topologically Inequivalent Regions in Parameter Space}

The degeneracy in \cref{eq:methods/two-band/eigenvalues} occurs at $E_+(k) = E_-(k)$ or equivalently, $E_+(k) = 0$ since $E_-(k) = -E_+(k)$. Solving $E_+(k) = 0$ gives the boundaries that demarcate each topologically inequivalent braid. The boundaries are defined by a special point at $(m_0, m_1) = (0, -1)$, and 

\begin{equation}
\begin{aligned}
F_1(m_0, m_1) &= (m_1+1)^2 - m_0^2 = 0, \\
F_2(m_0, m_1) &= m_1^2-1+m_0^2 = 0, \\
F_3(m_0, m_1) &= 1 + m_0^2 + m_1 = 0.
\label{eq:methods/phase_diagrams/two-band/phase_boundaries}
\end{aligned}
\end{equation}
See \cref{app:two-band/phase_diagram} in the Supplementary Material for the derivation of \cref{eq:methods/phase_diagrams/two-band/phase_boundaries}. 

\begin{figure}[htbp!]
\centering
\includegraphics[width=\linewidth]{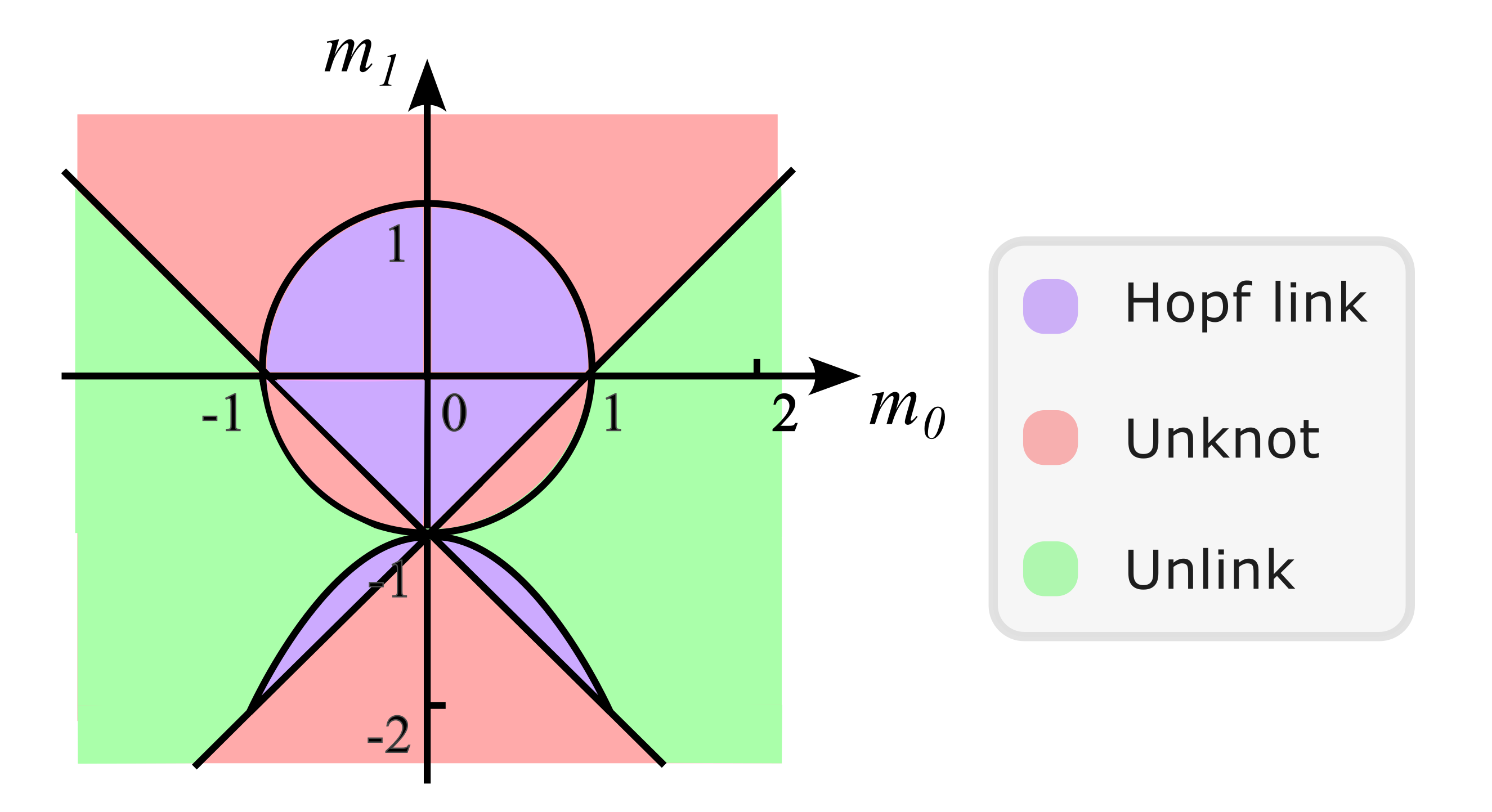}
\caption{\textbf{Variety of topologically inequivalent braids of the two-band twister model \cref{eq:quantum_hardware_simulation/two-band_twister_model/H2_twister_model}.} The boundaries are given in \cref{eq:methods/phase_diagrams/two-band/phase_boundaries}, and \cref{eq:methods/measurement_protocol/two-band/winding_number} distinguishes each braid. The analogous diagram for $N = 4$ braid strands is far more complicated and a particular parameter cross-section is presented in \cref{fig:H4_phase_diagram} of the main text.}
\label{fig:H2_phase_diagram}
\end{figure}

\tocless\subsection{Four-band Twister Model}
Unlike the two-band twister model, the set of observables to be measured (\cref{eq:quantum_hardware_simulation/four-band/measured_observables_expectation}) are not restricted to the Pauli operators only. Thereafter, we explain how to extract $E_i(k)$ from $\ket{\psi_i(k)}$ by taking expectation values of a set of observables $\{\hat{O}_b\}$ to get $\Lambda_i(k)$. Finally, we state the boundaries of the topologically inequivalent regions. Here we examine the four-band twister model 

\begin{equation}
\hat{H}^{(4)}(k) = \mqty[
im_0 & 0 & 0 & e^{2ik} + e^{ik}m_1 \\
1 + m_1 & \frac{im_0}{3} & 0 & 0 \\
0 & 1 + m_1 & -\frac{im_0}{3} & 0 \\
0 & 0 & 1 + m_1 & -im_0
],
\end{equation}
whose eigenvalues are 
\begin{equation}
E_{\pm,\pm}(k) = \pm\frac{1}{3}\sqrt{-5m_0^2 \pm S(k)},
\label{eq:methods/four-band/eigenvalues}
\end{equation}
where $S(k) = \sqrt{16m_0^4 + 81e^{ik}(m_1+1)^3(m_1+e^{ik})}$. 

\tocless\subsubsection{Measurement Protocol}

For the four-band model, the eigenstates reside in a two-qubit Hilbert space and can be written as a four-component spinor
$\ket{\psi(k)}=[\psi_A^+,\psi_A^-,\psi_B^+,\psi_B^-]^T$, where each component admits the polar form
$\psi_\alpha^{\beta} = c_\alpha^{\beta} e^{i\phi_\alpha^{\beta}}$ with $c_\alpha^{\beta}, \phi_\alpha^{\beta} \in \mathbb{R}$, $\alpha\in\{A,B\}$ and $\beta \in\{+,-\}$.
After imposing normalization and discarding an overall global phase, a generic pure state is specified by six real parameters. 

A convenient parametrization is obtained by organizing these parameters hierarchically, in direct analogy with the Bloch sphere representation for a two-level system. Concretely, we first separate the amplitudes into two sectors associated with the first qubit, and parameterize their total weights by an angle $\theta_{AB}(k)$

\begin{equation}
\begin{aligned}
\sqrt{|\psi_A^+|^2+|\psi_A^-|^2} &= \cos\frac{\theta_{AB}(k)}{2}, \\ 
\sqrt{|\psi_B^+|^2+|\psi_B^-|^2} &= \sin\frac{\theta_{AB}(k)}{2}.
\end{aligned}
\end{equation}
Within each sector, the remaining internal structure is that of an effective two-level system, so the normalized two-component spinors can be parameterized by Bloch angles $(\theta_A(k),\phi_A(k))$ and $(\theta_B(k),\phi_B(k))$

\begin{equation}
\begin{aligned}
\mqty[\psi_A^+\\ \psi_A^-]
&=
\cos\frac{\theta_{AB}(k)}{2}
\mqty[
\cos(\theta_A(k)/2)e^{i\phi_A(k)}\\[2pt]
\sin(\theta_A(k)/2)
],\\
\mqty[\psi_B^+\\ \psi_B^-]
&=
\sin\frac{\theta_{AB}(k)}{2}
\mqty[
\cos(\theta_B(k)/2)e^{i\phi_B(k)}\\[2pt]
\sin(\theta_B(k)/2)
].
\end{aligned}
\end{equation}
Finally, an additional relative phase between the two sectors is encoded by an inter-sector phase $\phi_{AB}(k)$, which we attach to the $A$-sector components. Adopting a gauge in which the final component is real-valued, this construction yields the parametric form in \cref{eq:quantum_hardware_simulation/four-band/reconstructed_eigenstate}.

The logic behind \cref{eq:quantum_hardware_simulation/four-band/Bloch_angles} closely parallels the reconstruction procedure for the two-band model, and the key idea is to apply a similar Bloch sphere reconstruction to effective two-level subsystems obtained by conditioning on the state of the other qubit. Specifically, by projecting onto the $A$ or $B$ sector of the first qubit, the remaining amplitudes $(\psi_\alpha^+,\psi_\alpha^-)$ with $\alpha\in\{A,B\}$ form an effective two-component spinor. Conditional measurements of the operators $\hat{\sigma}^{x,y,z}_{+-, \mu}$ where $\mu\in\{A, B, AB\}$, therefore reconstruct the Bloch vector of this spinor allowing one to extract the angles $(\theta_A,\phi_A)$ and $(\theta_B,\phi_B)$ exactly as in the two-band case. 

To determine the remaining inter-sector parameters, we instead condition on the spin state of the second qubit. Projecting onto $\beta\in\{+,-\}$ isolates the amplitudes $(\psi_A^\beta,\psi_B^\beta)$, which again behave as a two-component spinor. Measuring $\hat{\sigma}^{x,y,z}_{-,AB}$ therefore determines the relative phase between the $A$ and $B$ sectors, yielding the angle $\phi_{AB}$. The operator $\hat{\sigma}^{z}_{+-, AB}$ measures the population imbalance between the two sectors and fixes the polar angle $\theta_{AB}$.

Now that we can extract the eigenstates, our goal is to associate each eigenstate $\ket{\psi_i(k)}$ with a complex quantity whose winding in the Brillouin zone reproduces the winding between band energies. The key observation is that, for a broader class of Hamiltonians including the four-band twister model, one component of the eigenstate can be made affine in the energy by a convenient gauge choice. Fixing the last component of $\ket{\psi_i(k)}$ to be real, we write
\begin{equation}
\ket{\psi_i(k)}=
\frac{1}{\mathcal{N}_i(k)}
\mqty[
\cdot \\
\cdot \\
\psi_i^{(3)}(k) \\
1
],
\end{equation}
where $\mathcal{N}_i(k)$ is a normalization factor. Under the twister model constraints, the third component takes the form
\begin{equation}
\psi_i^{(3)}(k) = \frac{1}{\mathcal{N}_i(k)}\frac{3c_3(k)+E_i(k)}{2},
\end{equation}
so differences between the bands eliminate the constant factors and become directly proportional to the energy differences

\begin{equation}
\mathcal{N}_i(k)\psi_i^{(3)}(k)-
\mathcal{N}_j(k)\psi_j^{(3)}(k)
\propto
E_i(k)-E_j(k).
\end{equation}
To construct an experimentally accessible quantity free of normalization factors, we measure a set of observables $\{O_i\}$

\begin{equation}
\begin{aligned}
X_i(k) &= \mel{\psi_i(k)}{(\hat{\mathbb{I}}-\hat{\sigma}^z)\otimes\hat{\sigma}^x}{\psi_i(k)}, \\
Y_i(k) &= \mel{\psi_i(k)}{(\hat{\sigma}^z-\hat{\mathbb{I}})\otimes\hat{\sigma}^y}{\psi_i(k)}, \\
Z_i(k) &= \mel{\psi_i(k)}{(\hat{\mathbb{I}}-\hat{\sigma}^z)\otimes(\hat{\mathbb{I}}-\hat{\sigma}^z)}{\psi_i(k)}.
\end{aligned}
\end{equation}
This defines the generalized stereographic projection
\begin{equation}
\Lambda_i(k)=\frac{X_i(k)+iY_i(k)}{Z_i(k)} = \frac{3c_3(k) + E_i(k)}{2c_0(k)},
\label{eq:methods/measurement_protocol/four-band/Lambda}
\end{equation}
whose phase tracks the phase of $E_i(k)$ up to a factor that does not contribute to the winding number. Consequently, the winding of $\Lambda_i(k) - \Lambda_j(k)$ is tantamount to the winding of $E_i(k) - E_j(k)$, thus enabling the band winding numbers to be extracted directly from measurable observables. 

\tocless\subsubsection{Topologically Inequivalent Regions in Parameter Space}

There are two degeneracies in \cref{eq:methods/four-band/eigenvalues} which occur at $S(k) = 0$ and $-5m_0^2 \pm S(k) = 0$. Solving the former and latter equations yields the first and second sets of equations in \cref{eq:methods/phase_diagrams/four-band/phase_boundaries}}. Six equations define the boundaries shown in \cref{fig:H4_phase_diagram}, and a degenerate point at $(m_0, m_1) = (0, -1)$ where all six equations intersect. To summarize, these six equations are 

\begin{equation}
\begin{aligned}
F_1(m_0, m_1) &= 16m_0^4 + 81(m_1+1)^4 = 0, \\
F_2(m_0, m_1) &= 16m_0^4 + 81(m_1+1)^3(1-m_1) = 0, \\
F_3(m_0, m_1) &= 16m_0^4 - 81(m_1+1)^3 = 0, \\
G_1(m_0, m_1) &= m_0^4 - 9(m_1+1)^4 = 0, \\
G_2(m_0, m_1) &= m_0^4 - 9(m_0+1)^3(1-m_1) = 0, \\
G_3(m_0, m_1) &= m_0^4 + 9(m_1+1)^3 = 0. 
\label{eq:methods/phase_diagrams/four-band/phase_boundaries}
\end{aligned}
\end{equation}

\tocless\subsection{Extraction of the Winding Number Matrix}

Once the eigenstates have been reconstructed, the remaining task is to extract the two pieces of information needed to characterize the braid structure: the permutation of bands after one period, and the pairwise winding numbers $\bar{W}_{ij}$ (see below \cref{eq:results/quantum_circuit_hardware_implementation/characterizing_braids_and_knotted_structures/partial_winding}) between bands over the first period. Together, these determine the winding number matrix $\mathcal{W}$ \cref{eq:results/quantum_circuit_hardware_implementation/characterizing_braids_and_knotted_structures/winding_number_matrix} introduced in the main text. On quantum hardware, the permutation matrix is obtained directly from overlaps between reconstructed eigenstates, while the winding number matrix can be computed efficiently from the data within a single period.

\tocless\subsubsection{Obtaining the Braid Permutation Matrix on Quantum Hardware}

Given the reconstructed eigenstates $\ket{\psi_j(k)}$, we first form the overlap matrix between the initial and final states over one period

\begin{equation}
P'_{sj} = \abs{\braket{\psi_s(0)}{\psi_j(2\pi)}}.
\end{equation}
The permutation matrix $P$ is then obtained by retaining, for each final band index $j$, only the dominant overlap

\begin{equation}
P_{ij} = \vb{1}\!\left(P'_{ij}=\max_s P'_{sj}\right),
\label{permut_matrix}
\end{equation}
where $\vb{1}(\cdot)$ is the indicator function. This procedure suppresses spurious switching caused by small but finite overlaps between distinct eigenstates and retains only the most likely band pairing after one period. With $P$ determined, one can then organize the pairwise band windings over subsequent periods.

\tocless\subsubsection{Efficient Computation of the Winding Number Matrix}

We distinguish between three related winding quantities in this work. The first is the $k$-dependent winding matrix, $W_{ij}(k)$, which describes the accumulated winding between bands $i$ and $j$ up to the parameter value $k$; the second is the total winding over one period, $\bar{W}_{ij}$, which is a $k$-independent but perturbation-sensitive quantity partially characterizing the braid structure; the third is the topological winding number matrix $\mathcal{W}$ shown in \cref{fig:H4_phase_diagram}, and we will introduce their connections and differences below.

Considering the winding information in each period is equivalent, to reduce computational cost, we compute the pairwise winding evolution only over the first period \footnote{Replace \cref{eq:methods/extraction_of_the_winding_number_matrix/efficient_computation/winding_number_one_period} with the discretized versions \cref{eq:quantum_hardware_simulation/two-band_twister_model/winding_number_discretized,eq:quantum_hardware_simulation/four-band/winding_number_element_discretized} accordingly and set $k = 2\pi$.} (set $k = 2\pi$ in $W_{ij}(k)$ defined in \cref{eq:results/quantum_circuit_hardware_implementation/characterizing_braids_and_knotted_structures/partial_winding})

\begin{equation}
\bar{W}_{ij} = \frac{1}{2\pi i}\int_{0}^{2\pi} dk'\,\partial_{k'}\ln\!\left[\Lambda_i(k')-\Lambda_j(k')\right],
\label{eq:methods/extraction_of_the_winding_number_matrix/efficient_computation/winding_number_one_period}
\end{equation}
and generate the $k$-independent winding number matrices for the remaining $n-1$ periods by permuting the band labels according to $P$. Since all periods contain the same information up to this relabeling, the winding number matrix can be written as

\begin{equation}
\mathcal{W} = \frac{1}{n}\sum_{a=0}^{n-1}(P^{-1})^a \bar{W} P^a,
\label{eq:methods/extraction_of_the_winding_number_matrix/efficient_computation/winding_number_matrix}
\end{equation}
where $P$ is the permutation matrix defined in \cref{permut_matrix}, which satisfies $P^n = \mathbb{I}$. In this way, the winding information over all periods is obtained from the first-period data together with the permutation structure extracted from the reconstructed eigenstates. Furthermore, our approach can be adapted to more general models as long as one can extract linear functionals of $E_i(k)$ from the eigenstate components. 

\tocless\subsection{From the $k$-dependent Measured Winding Numbers to the Braid Word}
\label{sec:methods/extraction_of_braid_words}

Projecting the three-dimensional worldlines (strands) of complex braiding eigenvalues onto a particular plane gives a braid diagram. To visualize the concepts explained herein, see \cref{fig:H4_twister_solomon_knot_hopf_chain_band_structure_braid_diagram_W} for the braid diagram of Solomon's knot and the Hopf chain of the four-band twister model \cref{eq:quantum_hardware_simulation/four-band_twister_model/H4_twister_model}.

In the following, we will treat $\Lambda_i(k)$ and $\Lambda_j(k)$ either as complex numbers or real-valued 3-vectors with $k$ as the third component. At $k = 0$, we define $\omega_{ij}(0) := \Lambda_i(0) - \Lambda_j(0)$ which is a 3-vector that is parallel to the band structure. The argument of $\omega_{ij}(0)$ is defined as $\chi_{ij}(0) := \arg(\omega_{ij}(0))$ and this defines the projection plane as $\omega_{ij}(0)$ is parallel to it. In this work, we consider the $\Re(\Lambda)$-$k$ plane to be the reference for rotations. Thus, by definition, $W_{ij}(k)$ is defined according to the projection plane which has an angle $\chi_{ij}(0)$ from the $\Re(\Lambda)$ axis.

We define the phase-shifted winding number as 

\begin{equation}
\tilde{W}_{ij}(k) := W_{ij}(k) - \frac{\chi_{pq}(0) - \chi_{ij}(0)}{2\pi}
\label{eq:extraction_of_braid_words/phase_shifted_W}
\end{equation}
for a pair of eigenstates $i$ and $j$. There are two possible choices of $\chi_{pq}(0)$ in general. If $\chi_{pq}(0) = \chi$ for all pairs, a fixed projection plane is chosen. For example, we chose a fixed projection plane such that $\chi = \pi/2$ in \cref{fig:H4_twister_solomon_knot_hopf_chain_band_structure_braid_diagram_W} (c.f. \cref{eq:quantum_hardware_simulation/four-band/phase_shifted_W}). Otherwise, a pair of bands $i$ and $j$ may be chosen such that $\chi_{ij}(0)$ implicitly defines the projection plane thus reducing  \cref{eq:extraction_of_braid_words/phase_shifted_W} to $\tilde{W}_{ij}(k) = W_{ij}(k)$. However, \cref{eq:extraction_of_braid_words/phase_shifted_W} still holds for other bands $p$ and $q$ for $i \neq p$ and $j \neq q$ as the $\chi_{ij}(0)/2\pi$ factor compensates for the phase differences from different pairs of bands $i$ and $j$. We emphasize that  \emph{any} projection plane may be chosen to reconstruct a braid diagram. 

Conceptually, $\{\omega_{ij}(k)\}$ contains the vectors along $\Lambda_i(k)$ and $\Lambda_j(k)$. Since $\omega_{ij}(0)$ is parallel to the projection plane, we seek some $\omega_{ij}(k)$ whose projection on the plane is a point. This is equivalent to solving for the set of $k$ points that satisfy $\omega_{ij}(0)\cdot\omega_{ij}^\perp(k) = 0$. In other words, $\omega_{ij}^\perp(k)$ is the normal vector of the projection plane. Intuitively, the angle between $\omega_{ij}(0)$ and $\omega_{ij}^\perp(k)$ is precisely $\tilde{W}_{ij}(k)$ as it encodes the winding information between two bands. However, in practice, we only need to check the phase-shifted winding number $\tilde{W}_{ij}(k)$

\begin{equation}
\tilde{W}_{ij}(k_{\mathrm{crossing}}) = \frac{1}{2\pi}\bigg(\frac{\pi}{2} + r\pi\bigg) = \frac{1}{4} + \frac{r}{2},
\label{eq:extraction_of_braid_words/phase_shifted_W_crossing_condition}
\end{equation}
where $r\in\mathbb{Z}$, to identify the crossings. Geometrically, we have $\pi/2$ by the above discussion, $r\pi$ denotes the multiple possible normal vectors of the projection plane, and we divide by $2\pi$ due to the periodicity of the Hamiltonian. 

There is a relationship between $r$ in \cref{eq:extraction_of_braid_words/phase_shifted_W_crossing_condition} and the braid generators in a braid word. Given two bands $i$ and $i+1$, the crossing information is encoded in  $\tau_{\pi(i)\pi(i+1)}^{(-1)^r}$ and thus the braid generator $\sigma_{\pi(i)}^{\pm 1}$ is recorded as

\begin{equation}
\tau_{\pi(i)\pi(i+1)}^{\pm 1} = \tau_{\pi(i+1)\pi(i)}^{\mp 1} \Longrightarrow \sigma_{\pi(i)}^{\pm 1\cdot  (-1)^{\delta_{4N}}}, 
\label{eq:extraction_of_braid_words/braid_word_extraction}
\end{equation}
where $\delta_{4N}$ is the Kronecker delta, $N$ is the number of bands, and $\pi(\cdot)$ denotes the permutation of an index. 

We now have the necessary ingredients to extract the braid words from the $\tilde{W}(k)$ plots obtained from the simulations. First, we identify the $k$ points in $\tilde{W}_{ij}(k)$ that satisfy \cref{eq:extraction_of_braid_words/phase_shifted_W_crossing_condition} and note the bands that braid or equivalently cross each other on the braid diagram. Braiding permutes the band indices and we need to relabel the band indices after each crossing to track the braiding evolution. At each crossing, we record the braid generator associated with the crossing, and we repeat this for all crossings for $k\in[0, 2\pi]$, thus resulting in the braid word. A summary of braid words associated with the braids in the two- and four-band twister models is shown in \cref{tab:knot_link_braid_words_polynomials}. 

We now explicitly detail the formation of the braid word for Solomon's knot from \cref{fig:H4_twister_solomon_knot_hopf_chain_band_structure_braid_diagram_W}a(iii) and bold the labels that are permuted. At each crossing, we have 

\begin{enumerate}
    \item The brown and cyan curves both intersect with the black dashed line at a similar $k$ point but the order of the crossings do not matter. Let's consider the braiding between bands 1 and 2 (brown curve) first. The band labels become 
    \begin{equation*}
    \begin{aligned}
    \mqty{\vb{(1, 2)} \\ (1, 3) \\ (1, 4) \\ (2, 3) \\ (2, 4) \\ (3, 4)} \to \mqty{\vb{(2, 1)} \\ (2, 3) \\ (2, 4) \\ (1, 3) \\ (1, 4) \\ (3, 4)} \qquad &\Rightarrow \tilde{W}_{12}(k_{\mathrm{crossing}}) = -1/4 \\
    &\Rightarrow \sigma_1^{-1\cdot(-1)^{\delta_{44}}} = \sigma_1.
    \end{aligned}
    \end{equation*} \\
    All labels are permuted by swapping 1 and 2, and we used \cref{eq:results/quantum_circuit_implementation/characterizing_braids_and_knotted_structures/W_crossing_condition,eq:results/quantum_circuit_implementation/characterizing_braids_and_knotted_structures/crossing_to_braid_generator}. 
    \item Next is the cyan curve. We have 
    \begin{equation*}
    \mqty{(2, 1) \\ (2, 3) \\ (2, 4) \\ (1, 3) \\ (1, 4) \\ \vb{(3, 4)}} \to \mqty{(2, 1) \\ (2, 4) \\ (2, 3) \\ (1, 4) \\ (1, 3) \\ \vb{(4, 3)}} \qquad \Rightarrow \tilde{W}_{34}(k_{\mathrm{crossing}}) = -1/4 \Rightarrow \sigma_3.
    \end{equation*}
    \item Next is the red curve. We have 
    
    \begin{equation*}
    \mqty{(2, 1) \\ (2, 4) \\ \vb{(2, 3)} \\ (1, 4) \\ (1, 3) \\ (4, 3)} \to \mqty{(3, 1) \\ (3, 4) \\ \vb{(3, 2)} \\ (1, 4) \\ (1, 2) \\ (4, 2)} \qquad \Rightarrow \tilde{W}_{14}(k_{\mathrm{crossing}}) = -1/4 \Rightarrow \sigma_2.
    \end{equation*}
    Note that the labels $i$ and $j$ in $\tilde{W}_{ij}(k)$ are based on the original labels at $k = 0$ as shown in \cref{fig:H4_twister_solomon_knot_hopf_chain_band_structure_braid_diagram_W}a(iii). In particular, the third crossing is between the original band labels $1$ and $4$, which have been permuted such that $(1, 4) \to (2, 4) \to (2, 3)$. Thus, we know that the braid generator must be $\sigma_2$. 
    \item Next is the yellow curve. We have
    \begin{equation*}
    \mqty{(3, 1) \\ (3, 4) \\ (3, 2) \\ (1, 4) \\ \vb{(1, 2)} \\ (4, 2)} \to \mqty{(3, 2) \\ (3, 4) \\ (3, 1) \\ (2, 4) \\ \vb{(2, 1)} \\ (4, 1)} \qquad \Rightarrow \tilde{W}_{24}(k_{\mathrm{crossing}}) = -1/4 \Rightarrow \sigma_1.
    \end{equation*}
    \item Next is the blue curve. We have
    \begin{equation*}
    \mqty{(3, 2) \\ \vb{(3, 4)} \\ (3, 1) \\ (2, 4) \\ (2, 1) \\ (4, 1)} \to \mqty{(4, 2) \\ \vb{(4, 3)} \\ (4, 1) \\ (2, 3) \\ (2, 1) \\ (3, 1)} \qquad \Rightarrow \tilde{W}_{13}(k_{\mathrm{crossing}}) = -1/4 \Rightarrow \sigma_3.
    \end{equation*}
    \item Finally, it is the green curve. Note that a black dashed line intersects the green curve at $k = 0$ and $k = 2\pi$. Since the system is periodic, only one of these is counted as a crossing. Here, we choose the crossing at $k = 2\pi$. We have
    \begin{equation*}
    \mqty{(4, 2) \\ (4, 3) \\ (4, 1) \\ \vb{(2, 3)} \\ (2, 1) \\ (3, 1)} \to \mqty{(4, 3) \\ (4, 2) \\ (4, 1) \\ \vb{(3, 2)} \\ (3, 1) \\ (2, 1)} \qquad \Rightarrow \tilde{W}_{23}(k) = -1/4 \Rightarrow \sigma_2.
    \end{equation*}
\end{enumerate}
Collecting the braid generators, we form the braid word $\sigma_1\sigma_3\sigma_2\sigma_1\sigma_3\sigma_2$. 

\begin{figure}[htbp!]
\centering
\includegraphics[width=\linewidth]{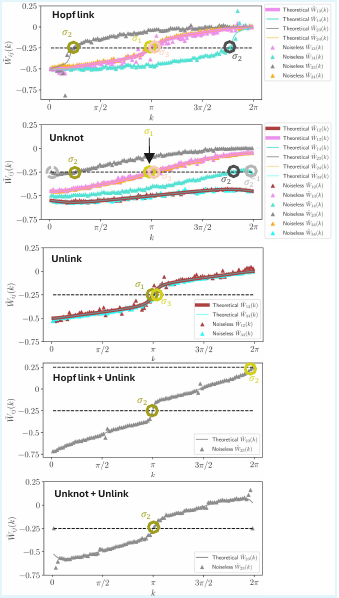}
\caption{\textbf{The winding number evolution for the Hopf link + unlink, unknot + unlink, Hopf link, unknot, and unlink of the four-band twister model \cref{eq:quantum_hardware_simulation/four-band_twister_model/H4_twister_model}.} The black dashed lines correspond to \cref{eq:extraction_of_braid_words/phase_shifted_W_crossing_condition}, which denote the crossings (solid colored circles) on the braid diagrams. It is important to note that the labels $i$ and $j$ in $\tilde{W}_{ij}(k)$ correspond to the initial labels at $k = 0$ before any permutations. The parameters used are $(m_1, m_2) = (-1, -1.5)$, $(-0.7, -1.5)$, $(-2.5, -2.4)$, $(-1.85, 0.25)$, $(-1.4, -0.25)$ for the Hopf link, unknot, unlink, Hopf link + unlink, and unknot + unlink respectively.}
\label{fig:H4_twister_five_phases_W}
\end{figure}

\clearpage
\pagebreak


\let\oldaddcontentsline\addcontentsline
\renewcommand{\addcontentsline}[3]{}
\bibliography{v1/Referencesv2}
\let\addcontentsline\oldaddcontentsline

\clearpage 
\pagebreak


\appendix
\onecolumngrid

\begin{center}
{\large\textbf{Supplementary Material for ``Digital Simulation of Non-Hermitian Knotted Bands on Quantum Hardware''} \par}
\end{center}

In this Supplementary Material, we provide additional technical details and derivations that support and extend the results presented in the main text. The contents of each section are as follows:

\begin{itemize}
    \item \cref{app:generalized_twister_models}: Definition of the generalized twister models \cref{eq:results/knots_and_braids/braiding/twister_model} and a rigorous characterization of their band structures in terms of torus knots and links. 
    \item \cref{app:characterizing_non-Hermitian_braids}: Discussion of the winding number matrix $\mathcal{W}$ \cref{eq:results/quantum_circuit_hardware_implementation/characterizing_braids_and_knotted_structures/winding_number_matrix} as a characteristic quantity for multi-band non-Hermitian twister models. 
    \item \cref{app:two-band}: Specialization to the two-band twister model $\hat{H}^{(2)}(k)$ \cref{eq:quantum_hardware_simulation/two-band_twister_model/H2_twister_model}. 
    \item \cref{app:four-band}: Specialization to the four-band twister model $\hat{H}^{(4)}(k)$ \cref{eq:quantum_hardware_simulation/four-band_twister_model/H4_twister_model}. 
    \item \cref{app:knot_polynomials}: Definition and computation of the Alexander and Jones polynomials, with Solomon's knot as an example. 
    \item \cref{app:GBBP}: Definition of the global biorthogonal Berry phase and its inadequacy in uniquely characterizing the knots and links in the family of twister models unlike the winding number matrices. 
    \item \cref{app:ibm_hardware_details}: IBM hardware details. 
\end{itemize}

\tableofcontents


\section{Generalized Twister Models}
\label{app:generalized_twister_models}

In this section, we develop a general mathematical framework for a family of non-Hermitian tight-binding models which we refer to as generalized twister models. This formalism allows us to analytically characterize the structure of the complex energy bands and reveal how their trajectories form torus knots and links in three-dimensional space. We first define the models and their spectral properties before establishing the correspondence between the band structure and torus knot embeddings. Concrete realizations of the two- and four-band twister models are discussed in more detail in \cref{app:two-band,app:four-band}. 

\subsection{Definitions}

We begin by defining the twister matrices which encode a cyclic hopping structure with a parameter-dependent phase twist in one of the components. For convenience, we reiterate the definitions from the main text here. 

\begin{definition}
We consider the family of $N \times N$ twister matrices

\begin{equation}
\hat{T}_V^{(N)}(k) \equiv 
\mqty[
0 & 0  & \cdots & 0 & e^{i V k} \\
1 & 0  & \cdots & 0 & 0 \\
0 & 1  & \cdots & 0 & 0 \\
\vdots & \vdots & \ddots & 0 & 0 \\
0 & 0  & \cdots & 1 & 0
], 
\end{equation}
where $k\in[0, 2\pi]$. The generalized twister model is defined as
\begin{equation}
\hat{H}^{(N)}(k) 
= m_0 \hat{\Sigma}^{(N)} + \sum_{v=1}^{V} m_v \hat{T}_V^{(N)}(k),
\end{equation}
where $m_0, m_1, \ldots, m_V\in\mathbb{C}$ and the constant shift diagonal matrix $\hat{\Sigma}^{(N)}$ is defined as $\hat{\Sigma}^{(N)}_{pq} = \left(1 - \frac{2(p-1)}{N - 1}\right) \delta_{pq}$ with $1 \leq p, q \leq N$.
\end{definition}

\begin{lemma}
The energies of $\hat{T}_V^{(N)}(k)$ are given by
\begin{equation}
E_j(k) = e^{\frac{i}{N}(V k + 2\pi j)}, \quad j = 0, 1, \dots, N-1,
\end{equation}
which are the $N$-th roots of $e^{i V k}$.
\label{lemma:energies_of_twister_models}
\end{lemma}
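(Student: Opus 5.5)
The plan is to compute the characteristic polynomial $\det(E\hat{\mathbb{I}} - \hat{T}_V^{(N)}(k))$ directly and show that it equals $E^N - e^{iVk}$. The eigenvalues are then exactly the $N$-th roots of $e^{iVk}$, and these can be written in the closed form stated in \cref{lemma:energies_of_twister_models}.

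To get the characteristic polynomial, I would use the fact that $\hat{T}_V^{(N)}(k)$ is a generalized cyclic shift. It sends $e_p \mapsto e_{p+1}$ for $p<N$ and $e_N \mapsto e^{iVk} e_1$. Applying it $N$ times returns every basis vector to itself, picking up the twist factor exactly once along the way. Hence $(\hat{T}_V^{(N)})^N = e^{iVk}\hat{\mathbb{I}}$. The matrix is also the companion matrix of $x^N - e^{iVk}$. Alternatively, one can expand $\det(E\hat{\mathbb{I}} - \hat{T})$ along the first row. The diagonal term gives $E\cdot E^{N-1}$, since the remaining minor is lower bidiagonal with $E$ on the diagonal. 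The corner entry $-e^{iVk}$ multiplies a minor that is upper triangular with $-1$ on its diagonal. With the cofactor sign $(-1)^{1+N}$, the total contribution is $-e^{iVk}$.

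Next I would solve $E^N = e^{iVk}$. Writing $e^{iVk} = e^{i(Vk + 2\pi j)}$ for any integer $j$, the $N$ distinct roots are $e^{\frac{i}{N}(Vk + 2\pi j)}$ with $j = 0,\dots,N-1$. As a sanity check, I would construct the eigenvectors explicitly. Take $v_p = E^{-(p-1)}$, normalized so that $v_1 = 1$. The rows $p \geq 2$ give $v_{p-1} = E v_p$, which holds by construction. The first row gives $e^{iVk} v_N = E v_1$, which is equivalent to $E^N = e^{iVk}$.

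There is no real obstacle here; the only point that needs care is the sign of the corner cofactor in the determinant. The sign $(-1)^{N+1}$ from the cofactor combines with the $(-1)^{N-1}$ coming from the triangular minor to give $+1$, so the corner term is exactly $-e^{iVk}$. Using the identity $\hat{T}^N = e^{iVk}\hat{\mathbb{I}}$ sidesteps this bookkeeping, but it needs the eigenvector check above (or the companion-matrix identification) to show that all $N$ roots actually occur. On its own, $\hat{T}^N = e^{iVk}\hat{\mathbb{I}}$ only shows that every eigenvalue is one of these roots.
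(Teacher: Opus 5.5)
Your proposal is correct and follows essentially the paper's route: the paper identifies $\hat{T}_V^{(N)}(k)$ as the companion matrix of $E^N - e^{iVk}$ and uses the cyclic-shift identity $\hat{T}^N = e^{iVk}\hat{\mathbb{I}}$ to conclude $E^N = e^{iVk}$. Your cofactor expansion (with the corner sign $(-1)^{N+1}(-1)^{N-1}=+1$ handled correctly) and the eigenvector $v_p = E^{-(p-1)}$ prove the companion-matrix fact that the paper only asserts, so they show that all $N$ roots occur and not merely that every eigenvalue is one of them.
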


\begin{proof}
Let $\hat{T} = \hat{T}_V^{(N)}(k)$. Observe that $\hat{T}$ is a companion matrix of the monic polynomial

\begin{equation}
P(E) = E^N - e^{i V k}.
\end{equation}
This follows because $\hat{T}$ acts as a cyclic shift operator with a phase at the upper-right corner. Specifically, for a vector $w = [w_1, w_2, \dots, w_N]^T$, the matrix $\hat{T}$ acts as

\begin{equation}
\hat{T}w = 
\mqty[
e^{i V k} w_N \\
w_1 \\
w_2 \\
\vdots \\
w_{N-1}
],
\end{equation}
which shifts each component down by one and sends $w_N$ to the top with a phase factor $e^{i V k}$. Applying $\hat{T}$ successively $N$ times gives
\begin{equation}
\hat{T}^N w = e^{i V k} w,
\end{equation}
so $w$ is an eigenstate of $\hat{T}^N$ with energy $e^{i V k}$. Therefore, the energies $E$ of $\hat{T}$ must satisfy
\begin{equation}
E^N = e^{i V k}.
\end{equation}
Thus, the eigenvalues of $\hat{T}$ are the $N$-th roots of $e^{i V k}$, explicitly given by
\begin{equation}
E_j(k) = e^{\frac{i}{N}(V k + 2\pi j)}, \quad j = 0, 1, \dots, N-1.
\end{equation}
The argument $\theta_j(k) = \frac{V k + 2\pi j}{N}$ varies continuously with $k$ and completes $V$ full turns around the unit circle as $k$ ranges from $0$ to $2\pi$.
\end{proof}

\subsection{Torus Knots/Links in the Twister Models}

We now show that the energy trajectories of the twister model, when regarded as curves in three-dimensional space, form torus knots or links. To make this connection precise, we first recall the standard parametrization of torus knots and their extension to torus links, and that the bands of $\hat{T}_V^{(N)}(k)$ given by \cref{lemma:energies_of_twister_models} can be mapped to these parametrized curves.

\begin{lemma}[Standard torus–knot parametrisation]
\label{lem:Tpq-knot-param}
For coprime integers $p,q\in\mathbb{Z}_{>0}$ the map
\begin{equation}
\label{eq:Tpq-param}
\Gamma(\phi)=
\left( 
(2+\cos (q\phi))\cos (p\phi),\;
(2+\cos (q\phi))\sin (p\phi),\;
-\sin (q\phi) 
\right), 
\quad \phi \in [0, 2\pi],
\end{equation}
is an embedding of the $(p,q)$-torus knot.  It lies on the torus
$\left(\sqrt{x^{2}+y^{2}}-2\right)^{2}+z^{2}=1$
and winds $p$ times longitudinally and $q$ times poloidally.
\end{lemma}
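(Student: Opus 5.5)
The plan has three parts, each checked directly from the formula in \cref{eq:Tpq-param}. First I show that $\Gamma$ lies on the stated torus. Second I show that $\Gamma$ is an embedding, that is, a smooth injective immersion of the circle $\mathbb{R}/2\pi\mathbb{Z}$. Third I identify its isotopy class as the $(p,q)$-torus knot by counting windings.

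\textbf{Torus and windings.} For the torus, write $\rho(\phi)=\sqrt{x^2+y^2}$. Since $2+\cos(q\phi)>0$, we get $\rho=2+\cos(q\phi)$. Then $(\rho-2)^2+z^2=\cos^2(q\phi)+\sin^2(q\phi)=1$.

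For the windings, introduce the standard torus coordinates $(\alpha,\beta)\in(\mathbb{R}/2\pi\mathbb{Z})^2$ via
\[
(\alpha,\beta)\mapsto\big((2+\cos\beta)\cos\alpha,\ (2+\cos\beta)\sin\alpha,\ -\sin\beta\big).
\]
This map is a diffeomorphism onto the torus, and $\Gamma$ is the image of the line $\alpha=p\phi$, $\beta=q\phi$. As $\phi$ runs over $[0,2\pi]$, the longitudinal angle $\alpha$ turns $p$ times and the meridional angle $\beta$ turns $q$ times. So $\Gamma$ represents the class $(p,q)$ in $H_1(T^2)$, and the curve closes up because $\Gamma(2\pi)=\Gamma(0)$.

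\textbf{Embedding.} For the immersion property, note that $\Gamma$ is the composition of a diffeomorphism with the linear curve $\phi\mapsto(p\phi,q\phi)$. That linear curve has nonvanishing velocity $(p,q)$, so $\Gamma'\neq0$.

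Injectivity is the only step using coprimality, and it is where care is needed. Suppose $\Gamma(\phi_1)=\Gamma(\phi_2)$. Injectivity of the torus chart gives
\[
p(\phi_1-\phi_2)\in2\pi\mathbb{Z},\qquad q(\phi_1-\phi_2)\in2\pi\mathbb{Z}.
\]
Since $\gcd(p,q)=1$, choose integers $a,b$ with $ap+bq=1$ (Bézout). Then $\phi_1-\phi_2=a\,p(\phi_1-\phi_2)+b\,q(\phi_1-\phi_2)\in2\pi\mathbb{Z}$, so $\phi_1\equiv\phi_2$ modulo $2\pi$. An injective immersion of a compact manifold is an embedding, so $\Gamma$ embeds a circle.

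\textbf{Identification.} Finally, an embedded simple closed curve on a standardly embedded torus in $\mathbb{R}^3$ whose homology class is $(p,q)$ with $p,q$ coprime is, by definition, the $(p,q)$-torus knot. One can also note directly that it is isotopic to the closure of the braid $(\sigma_1\cdots\sigma_{p-1})^q$ by viewing $\alpha$ as the braid "time" direction.

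The main obstacle is only conventional. One must confirm that the chosen orientation, with $z=-\sin(q\phi)$, matches the convention in which the standard $(p,q)$ knot is defined, rather than its mirror. I would handle this by fixing the definition of $T_{p,q}$ as "the $(p,q)$ curve on the standard torus with this parametrization". For $p,q\ge2$ the mirror is a distinct knot, so this convention should be stated explicitly, and in the braid picture it corresponds to positive generators $\sigma_i$.
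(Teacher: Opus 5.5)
Your proof is correct. The paper gives no proof of this lemma: it is quoted as the standard parametrisation and used only as a reference object for the later torus-link statements. So your argument supplies the verification the paper takes for granted, rather than competing with a different route.

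What your route adds is clarity about where each hypothesis enters. In the chart $(\alpha,\beta)$ the curve is the straight line $(p\phi,q\phi)$, and coprimality is used only in the B\'ezout step for injectivity. Run with $d=\gcd(p,q)>1$, the same step shows $\Gamma$ is $d$-to-one onto a single $(p/d,q/d)$ curve, namely the level set $\{(p/d)\beta-(q/d)\alpha\equiv 0 \bmod 2\pi\}$. This level-set description is the right tool for the paper's subsequent link lemma. It shows that shifting $q\phi$ by $2\pi m/d$ gives distinct components only when $\gcd(p/d,d)=1$; for $(p,q)=(4,2)$ the shifted curve coincides with the original.

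The chirality issue you flag can be settled rather than fixed by fiat. Orient the meridional plane by the direction of travel $\hat\alpha$, so that $(\hat z,\hat\rho)$ is a positive basis. The offset $\cos\beta\,\hat\rho-\sin\beta\,\hat z$ then has polar angle $\beta+\pi/2$, which increases with $\phi$. Hence the $p$ strands undergo a right-handed twist and every crossing is positive. So $\Gamma$ is the closure of the positive braid $(\sigma_1\cdots\sigma_{p-1})^q$, matching the paper's use of positive generators.
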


\begin{lemma}[Extension to torus links]
\label{thm:Tpq-link}
Let $d=\gcd(p,q)$ and write $p=d\,p',\;q=d\,q'$ with $\gcd(p',q')=1$.
Then the $(p,q)$-torus link has $d$ disjoint components, each of
type $(p',q')$, and can be parameterized by
\begin{equation}
\label{eq:Tpq-link-param}
\Gamma_m(\phi)=
\left( 
[2+\cos (q\phi+ \tfrac{2\pi m}{d})]\cos(p\phi),\;
[2+\cos (q\phi+\tfrac{2\pi m}{d})]\sin(p\phi),\;
-\sin (q\phi+\tfrac{2\pi m}{d}) 
\right),
\quad m=0,\dots,d-1.
\end{equation}
\end{lemma}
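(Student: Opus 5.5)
The plan is to reduce the statement to \cref{lem:Tpq-knot-param}, applied to the component knot type $(p',q')$, and then to check two things: that the $d$ curves $\Gamma_m$ are closed and pairwise disjoint, and that their union is the standard $(p,q)$-torus link. By the latter I mean the set of points $(\varphi,\vartheta)$ on the torus (longitudinal angle $\varphi$, poloidal angle $\vartheta$) satisfying $q\varphi - p\vartheta \equiv \text{const} \pmod{2\pi}$, that is, the closure of a line of slope $q/p$ on $\mathbb{T}^2$.

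\textbf{Each $\Gamma_m$ is a $(p',q')$ knot.} Every $\Gamma_m$ lies on the torus $\left(\sqrt{x^2+y^2}-2\right)^2+z^2=1$, since $\sqrt{x^2+y^2}=2+\cos(\cdot)$ and $z=-\sin(\cdot)$. In torus angles its image is
\[
(\varphi,\vartheta)=\left(p\phi,\ q\phi+\tfrac{2\pi m}{d}\right).
\]
As $\phi$ ranges over $[0,2\pi/d]$, the angles advance by $(2\pi p', 2\pi q')$, so the curve closes up. On the full interval $[0,2\pi]$ it therefore traverses the same closed curve $d$ times.

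Reparametrize by $\psi=d\phi\in[0,2\pi]$. This gives $(\varphi,\vartheta)=(p'\psi,\ q'\psi+2\pi m/d)$, which is exactly the curve of \cref{lem:Tpq-knot-param} for $(p',q')$ with the poloidal angle rotated by $2\pi m/d$. A rotation of the poloidal angle is an ambient isotopy of the solid torus: rotate each meridian disk. So $\Gamma_m$ is an embedded $(p',q')$-torus knot by \cref{lem:Tpq-knot-param}. Injectivity on $[0,2\pi/d)$ is inherited from that lemma, and it uses $\gcd(p',q')=1$.

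\textbf{The components are disjoint.} This is the key step. Suppose $\Gamma_m(\phi)=\Gamma_{m'}(\phi')$. Since the torus angle map is a bijection onto $\mathbb{T}^2$, this forces
\[
p\phi\equiv p\phi' \quad\text{and}\quad q\phi+\tfrac{2\pi m}{d}\equiv q\phi'+\tfrac{2\pi m'}{d} \pmod{2\pi}.
\]
Set $\delta=\phi-\phi'$. Then $p'd\,\delta\in 2\pi\mathbb{Z}$ and $q'd\,\delta+2\pi(m-m')/d\in2\pi\mathbb{Z}$.

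Write $d\delta=2\pi a/p'$. Substituting gives $2\pi\left(q'a/p'+(m-m')/d\right)\in 2\pi\mathbb{Z}$. I then use Bézout with $\gcd(p',q')=1$: consider the invariant $I=q\varphi-p\vartheta \bmod 2\pi$, which is constant along each component. On $\Gamma_m$ it equals $-2\pi pm/d=-2\pi p'm$. That value is $\equiv 0$, so this invariant does not separate the components, and I need a finer one.

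The finer invariant is $I'=q'\varphi-p'\vartheta \bmod 2\pi$. It is well defined on $\mathbb{T}^2$ because $p',q'$ are integers, and it is constant along each component. On $\Gamma_m$ it equals
\[
q'p\phi-p'q\phi-\frac{2\pi p'm}{d}=-\frac{2\pi p'm}{d}.
\]
Since $\gcd(p',d)$ need not be $1$, I would use instead the invariant $\vartheta$ evaluated at points with $\varphi\equiv 0$. These points occur at $\phi=2\pi j/p$, where $\vartheta\equiv 2\pi q j/p+2\pi m/d=2\pi(q'j/p'+m/d)$. Modulo $2\pi/p'$ this gives values that, by Bézout ($q'$ invertible mod $p'$), sweep all multiples of $2\pi/p'$ offset by $2\pi m/d$.

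So each component meets the meridian circle $\varphi=0$ in the $p'$ points $2\pi m/d + (2\pi/p')\mathbb{Z}$. For $0\le m<d$, the offsets $m/d$ are distinct modulo $1/p'$ exactly when $m p'/d$ are distinct mod $1$. That fails in general, and this is where I expect the real obstacle. I would resolve it by choosing the offset so that it lies in the $\vartheta$-direction scaled by $1/(dp')$: components of a $(p,q)$ link differ by a shift $2\pi/(dp')$ of the meridian intersection pattern. So I would check carefully, and if needed replace $2\pi m/d$ by the equivalent shift $2\pi m/(d p')\cdot p'$ along the curve direction, that the $dp'=p$ meridian points of the union are the $p$ equally spaced points $(2\pi/p)\mathbb{Z}$.

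\textbf{Union is the $(p,q)$ torus link.} Once that is verified, the union meets the meridian in $p$ equally spaced points, and it is the closure of lines of slope $q/p$. That is by definition the $(p,q)$-torus link, with $d$ components of type $(p',q')$.
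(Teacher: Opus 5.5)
There is a genuine gap at the disjointness step, and it cannot be closed: as written, the statement is false whenever $\gcd(p',d)>1$. The paper states this lemma without proof, so there is nothing there to fall back on.

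Your invariant $I'=q'\varphi-p'\vartheta$ is exactly the right tool, and your computation of it is correct. Since $\gcd(p',q')=1$, the image of $\psi\mapsto(p'\psi,\,q'\psi+c)$ is an entire coset of the circle $\ker I'$. So $\Gamma_m$ is precisely the level set $\{I'\equiv -2\pi p'm/d\}$. It follows that $\Gamma_m$ and $\Gamma_{m'}$ have \emph{identical} images when $d\mid p'(m-m')$ and are disjoint otherwise. Distinct $m,m'\in\{0,\dots,d-1\}$ with $d\mid p'(m-m')$ exist iff $\gcd(p',d)>1$.

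Concretely, take $(p,q)=(4,2)$, so $d=2$, $p'=2$, $q'=1$. Then $\Gamma_1(\phi)=\Gamma_0(\phi+\pi/2)$, the two ``components'' coincide, and their union is a single $(2,1)$ curve rather than the two-component $(4,2)$ torus link. This is exactly the case $(N,V)=(4,2)$ relevant to the four-band twister model.

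Your proposed repair changes nothing, because the ``equivalent shift'' $\tfrac{2\pi m}{dp'}\cdot p'$ is just $\tfrac{2\pi m}{d}$ again. Your closing claim, that the union meets $\varphi=0$ in the $p$ points $(2\pi/p)\mathbb{Z}$, is also false: those points are $2\pi(mp'+jd)/p$, all of which are multiples of $2\pi\gcd(p',d)/p$.

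The correct conclusion is that the lemma holds exactly when $\gcd(p',d)=1$ (e.g.\ $(2,2)$ or $(6,4)$). In general the poloidal offset must be $2\pi m/p$ instead of $2\pi m/d$, or equivalently a longitudinal offset of $2\pi m/q$. With that correction your own invariant finishes the argument immediately:
\begin{itemize}
\item On the $m$-th curve $I'\equiv -2\pi m/d$, which gives $d$ distinct values modulo $2\pi$, so the curves are pairwise disjoint.
\item $I=dI'\equiv 0$ on each curve, and $\{I\equiv 0\}$ is the disjoint union of the $d$ level sets $\{I'\equiv 2\pi j/d\}$. So the union of the curves is exactly the $(p,q)$-torus link $\{q\varphi-p\vartheta\equiv 0\}$.
\end{itemize}
Your first step (each $\Gamma_m$ is a $(p',q')$ knot via $\psi=d\phi$ and a meridional rotation) is fine and carries over unchanged.
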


\begin{theorem}
For fixed integers $V, N$, the collection of energy trajectories $\{E_j(k)\}_{j=0}^{N-1}$ of the Hamiltonian $\hat{T}_V^{(N)}(k)$, are embedded in $\mathbb{R}^3$ as
\begin{equation}
\gamma_j(k) \equiv 
\left( \left[2 + \cos \left( \tfrac{V k + 2\pi j}{N} \right)\right] \cos k , 
\left[2 + \cos \left( \tfrac{V k + 2\pi j}{N} \right)\right] \sin k , 
-\sin\left( \tfrac{V k + 2\pi j}{N} \right) \right),
\end{equation}
trace out a torus knot or link of type $(N, V)$ as $k$ runs from $0$ to $2\pi$.
\end{theorem}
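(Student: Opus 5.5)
The plan is to reduce the theorem to the torus-link parametrisation of \cref{thm:Tpq-link} by a reparametrisation of the curve parameter. By \cref{lemma:energies_of_twister_models}, the phase of the $j$-th band is $\theta_j(k) = (Vk+2\pi j)/N$. The embedding $\gamma_j$ places the point $e^{i\theta_j(k)}$ on the meridian circle of the standard torus $(\sqrt{x^2+y^2}-2)^2+z^2=1$ at longitudinal angle $k$. This is just the usual identification of the solid-torus closure of a braid in $[0,2\pi]\times\mathbb{C}$: the band energy supplies the poloidal coordinate and $k$ supplies the longitudinal one.

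First I would check directly that every point of every $\gamma_j$ lies on the torus, since its distance from the $z$-axis is $2+\cos\theta_j$ and $z=-\sin\theta_j$. Next I would observe that the union $\bigcup_j \gamma_j([0,2\pi])$ is a closed set of curves. Since $\theta_j(2\pi)=\theta_{j+V}(0) \pmod{2\pi}$, the endpoint of strand $j$ at $k=2\pi$ coincides with the starting point of strand $j+V \bmod N$ at $k=0$. So the strands glue according to the permutation $j\mapsto j+V \pmod N$, which is the band monodromy.

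I would then introduce $\phi = k/N$ and concatenate strands along an orbit of this permutation. Along the orbit starting at $j=m$, the concatenated curve is $k\mapsto \bigl([2+\cos((Vk+2\pi m)/N)]\cos k,\dots\bigr)$ for $k\in[0,2\pi N/d]$, where $d=\gcd(N,V)$ and each orbit has length $N/d$. Writing $k=N\phi$ gives $\cos(p\phi)$ with $p=N$ and poloidal argument $V\phi+2\pi m/N$ with $q=V$. The orbits are indexed by $m\in\{0,\dots,d-1\}$ modulo shifts by multiples of $d$. Equivalently, the phase offsets are $2\pi m/N$ with $m$ running over coset representatives of $d\mathbb{Z}_N$, so each orbit is a closed curve of length $2\pi/d$ in $\phi$.

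The main obstacle is matching the offsets to the form $2\pi m/d$ in \cref{thm:Tpq-link}. There the offset enters the $q\phi$ argument as $q\phi+2\pi m/d$, while our offset is $2\pi m/N$ with $0\le m<d$. I would resolve this by a rotation $\phi\mapsto\phi+c$ of each component. Such a shift changes $q\phi$ by $qc$ and the longitudinal angle by $pc$; choosing $c$ in multiples of $2\pi/N$ realises any offset $2\pi(m+Vs)/N$. Alternatively, one can note that the $d$ components $\Gamma_m$ of \cref{thm:Tpq-link} are simply the $d$ translates under the $\mathbb{Z}_N$ rotation symmetry of the torus, and our components are the same set, possibly relabelled. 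Finally, I would invoke \cref{lem:Tpq-knot-param} on each component, which is of type $(N/d,V/d)$ after the rescaling $\phi\mapsto d\phi$, to conclude that it is a $(p',q')$ torus knot. The union is then the $(N,V)$ torus link, and it reduces to a knot exactly when $d=1$. Disjointness and embeddedness follow because distinct strands at the same $k$ have distinct phases $\theta_j$, so the concatenated curves never meet.
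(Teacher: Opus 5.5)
Your route is the paper's route: you reparametrise by $k=N\phi$, glue strands along the monodromy $j\mapsto j+V$, and identify the resulting loops with the components $\Gamma_m$ of Lemma~\ref{thm:Tpq-link}. Your preliminary steps are correct. The gap is the step you yourself call the main obstacle, and neither of your resolutions closes it.

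The shift $\phi\mapsto\phi+2\pi s/N$ leaves $N\phi$ fixed mod $2\pi$ and adds $2\pi Vs/N=2\pi V's/N'$ to the poloidal argument. It is therefore only a reparametrisation of the \emph{same} loop: it re-proves that strands $m$ and $m+Vs$ lie on one component, but it cannot carry a curve to a different component.

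To see what can be matched, write a component as $\psi\mapsto(N'\psi,\,V'\psi+\delta)$ in (longitudinal, poloidal) angles $(\alpha,\beta)$. On it, $V'\alpha-N'\beta\equiv-N'\delta$, and since $\gcd(N',V')=1$ the level sets of this function are single circles. So offsets $\delta,\delta'$ give the same loop iff $\delta-\delta'\in\frac{2\pi}{N'}\mathbb{Z}$.

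\begin{itemize}
\item Your offsets $2\pi m/N$ give labels $2\pi m/d$, i.e.\ $d$ distinct loops.
\item The lemma's offsets $2\pi m/d$ give labels $2\pi mN'/d$, which take only $d/\gcd(N',d)$ distinct values.
\end{itemize}

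When $\gcd(N',d)=1$, your rotation works after relabelling $m\leftrightarrow mN' \bmod d$. When $\gcd(N',d)>1$, the lemma's family contains repeated curves, no relabelling exists, and your ``alternatively'' sentence is also false. The first such case is the one the paper actually uses, $(N,V)=(4,2)$. There $\Gamma_1(\phi)=\Gamma_0(\phi+\pi/2)$, so Lemma~\ref{thm:Tpq-link} outputs a single unknot twice, while your strands correctly form two loops, $\{0,2\}$ and $\{1,3\}$. The paper's own proof trips on the same point: it pairs $\Gamma_m$ with strand $mN'$, and for $(4,2)$ both $m=0,1$ land in the orbit $\{0,2\}$, so the orbit $\{1,3\}$ is never accounted for.

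The cure is to drop the $2\pi m/d$ spacing entirely; your spacing $2\pi m/N=2\pi m/p$ is the correct one. Argue directly as follows.

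\begin{enumerate}
\item The union of all strands is the closed subgroup $\{(\alpha,\beta)\in T^2: N\beta\equiv V\alpha \pmod{2\pi}\}$ of the standard torus.
\item Its $d$ cosets are the curves $\psi\mapsto(N'\psi,\,V'\psi+2\pi m/N)$, $0\le m<d$.
\item These carry the distinct, equally spaced labels $V'\alpha-N'\beta\equiv-2\pi m/d$.
\item They are therefore $d$ pairwise disjoint, equally spaced parallel copies of the $(N',V')$ curve on an unknotted torus, which is exactly the $(N,V)$ torus link.
\end{enumerate}

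This also covers a point you skipped when invoking Lemma~\ref{lem:Tpq-knot-param} on each component: each component carries a poloidal offset $2\pi m/N$ that the knot lemma's parametrisation lacks. That offset is harmless, because the poloidal rotation $(\alpha,\beta)\mapsto(\alpha,\beta+c)$ of the torus extends to an ambient isotopy.
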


\begin{proof}
Let $d = \gcd(V, N)$, and write
\begin{equation}
V = d V', 
\qquad
N = d N',
\end{equation}
with $\gcd(V', N') = 1$.  Denote
\begin{equation}
\Gamma_m(k)
= \left( 
\left[2+\cos(Vk+\tfrac{2\pi m}{d}) \right]\cos(Nk),
\left[2+\cos(Vk+\tfrac{2\pi m}{d}) \right]\sin(Nk),
-\sin \left(Vk+\tfrac{2\pi m}{d} \right) 
\right),
\ m=0,\ldots,d-1,
\end{equation}
and
\begin{equation}
\gamma_j(k) = 
\left(\left[2+\cos\!\left(\tfrac{Vk+2\pi j}{N}\right) \right]\cos k,
\left[2+\cos\!\left(\tfrac{Vk+2\pi j}{N}\right) \right]\sin k,
-\sin\!\left(\tfrac{Vk+2\pi j}{N}\right)\right),
\ j=0,\ldots,N-1. 
\end{equation}
Since $N=dN'$, observe that
\begin{equation}
\gamma_{j+N'}(k)
= \left(
\left[2+\cos\left(\tfrac{V k + 2\pi j}{N} + \tfrac{2\pi}{d}\right)\right]\cos k, 
\left[2+\cos\left(\tfrac{V k + 2\pi j}{N} + \tfrac{2\pi}{d}\right) \right]\sin k, 
-\sin\left(\tfrac{V k + 2\pi j}{N} + \tfrac{2\pi}{d}\right)\right).
\end{equation}
This implies that curves differing in index by $N^{\prime}$ are related by a constant shift $2 \pi / d$ in the internal parameterization. So the set $\{\gamma_j\}_{j=0}^{N-1}$ collapses to exactly $d$ distinct loops, namely $\gamma_{mN'}$, $m=0,\ldots,d-1$.

Similarly, we can re-parameterize $\Gamma_m$. For each $m$, set $j = m N'$. Then
\begin{equation}
\gamma_{mN'}(Nk)
=\left(
\left [2+\cos\!\left(\tfrac{V(Nk)+2\pi mN'}{N}\right) \right]\cos(Nk),\ \dots
\right) = \Gamma_m(k),
\end{equation}
since

\begin{equation}
\frac{V(Nk)+2\pi mN'}{N} = V k + \frac{2\pi m}{d}.
\end{equation}
Thus $\Gamma_m(k)=\gamma_{mN'}(Nk)$, i.e.\ each $\Gamma_m$ is exactly the loop $\gamma_{mN'}$ composed with the degree-$N$ covering map $k\mapsto Nk\pmod{2\pi}$. We define
\begin{equation}
H_m(s,k)=\gamma_{mN'} \left((1-s) \, N k + s \, k\right),
\quad s\in[0,1], \, k\in[0,2\pi].
\end{equation}
Then
\begin{equation}
H_m(0,k)=\gamma_{mN'}(Nk)=\Gamma_m(k),
\quad
H_m(1,k)=\gamma_{mN'}(k),
\end{equation}
so $H_m$ is a homotopy between $\Gamma_m$ and $\gamma_{mN'}$. Combining these steps shows that
\begin{equation}
\left\{\Gamma_m : m=0,\ldots,d-1\right\}
=
\left\{\gamma_j : j=0,\ldots,N-1\right\}
\end{equation}
are sets of loops on the torus, so the set $\{\gamma_j\}_{j=0}^{N-1}$ is isotopic to a torus knot/link.  This completes the proof.
\end{proof}

The above theorem establishes that each dominant shift term $\hat{T}_V^{(N)}(k)$ produces a torus knot or link structure of type $(m, N)$ in the bands. This observation directly leads to the following consequence for the full generalized twister model. 

\begin{corollary}
The generalized twister model $\hat{H}^{(N)}(k)$ can host phases whose band structures realize torus knots or links of the form $(1, N)$, $(2, N)$, $\ldots$, up to $(V, N)$.
\end{corollary}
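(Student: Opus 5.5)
The plan is to reduce the corollary to the preceding theorem through a dominant-harmonic argument. This is combined with the fact that the braid type of a separable (nowhere-degenerate) band structure is invariant under continuous deformation.

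First, fix $1\le v\le V$. Choose the coefficients so that $m_v=1$ and all other $m_{v'}$ together with $m_0$ are scaled by a small parameter $\epsilon$:
\begin{equation*}
\hat{H}_\epsilon^{(N)}(k)=\hat{T}_v^{(N)}(k)+\epsilon\Big(m_0\hat{\Sigma}^{(N)}+\sum_{v'\neq v} m_{v'}\hat{T}_{v'}^{(N)}(k)\Big).
\end{equation*}
At $\epsilon=0$, \cref{lemma:energies_of_twister_models} gives the eigenvalues $e^{i(vk+2\pi j)/N}$. These are $N$ distinct points on the unit circle, with pairwise separation at least $2\sin(\pi/N)$, uniformly in $k$. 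By the preceding theorem, the closure of the resulting braid in $\mathbb{C}\times S^1$ is the torus knot or link of type $(N,v)$. Equivalently, it is the closure of $(\sigma_1\cdots\sigma_{N-1})^v$, since the whole configuration rotates rigidly by $2\pi v/N$ per period.

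Second, treat the perturbation. The perturbing matrix is bounded uniformly in $k\in[0,2\pi]$. Hence, by continuity of roots of the characteristic polynomial in its coefficients (or Bauer--Fike / Rouché applied to $\det(E-\hat{H}_\epsilon)$ on discs of radius $\sin(\pi/N)$ about each unperturbed root), there is an $\epsilon_0>0$ with the following property: for all $|\epsilon|<\epsilon_0$ and all $k$, each disc contains exactly one eigenvalue. So the spectrum stays separable along the whole path $s\epsilon$, $s\in[0,1]$. The path $s\mapsto\hat{H}_{s\epsilon}$ is therefore a homotopy through separable Bloch Hamiltonians. It induces an isotopy of the eigenvalue strands in $\mathbb{C}\times S^1$, so the braid conjugacy class, and hence the closed knot or link type, is preserved. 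This shows that $(N,v)$ is realized in an open region of parameter space, which is the content of ``can host.''

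The main obstacle is the second step. It requires uniform control in $k$ and a clean argument that no exceptional point is crossed along the deformation. I expect the Rouché argument to handle it, because the unperturbed gap is $k$-independent. I would also note the paper's ordering convention $(v,N)$ versus $(N,v)$; torus links are symmetric under swapping the two indices, so the statement follows either way.
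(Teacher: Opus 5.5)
Your proposal is correct and follows the same route as the paper: let one harmonic $m_v\hat{T}_v^{(N)}(k)$ dominate so the model reduces to a single twister matrix, whose spectrum closes to the $(N,v)$ torus knot or link by the preceding theorem. The paper states this only as a formal limit $\abs{c_m/c_n}\to\infty$, whereas your $k$-independent gap $2\sin(\pi/N)$ and the Rouch\'e argument show that no band degeneracy occurs along the deformation, so the braid type actually persists on an open region of parameter space rather than only in the limit.
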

\begin{proof}
For $i, j \in \mathbb{N}_{+}$ and $m, n \in [1, V]$, consider the limit
\begin{equation}
\lim_{\forall m \neq n, \abs{c_m/c_n} 
\rightarrow \infty} 
\hat{H}^{(N)}(k) = \hat{T}_V^{(N)}(k)
\end{equation}
which yields a band structure corresponding to the $(m, N)$ torus knot or link.
\end{proof}


\section{Characterizing non-Hermitian Braids By Winding Numbers}
\label{app:characterizing_non-Hermitian_braids}

In this section, we motivate three inequivalent definitions of the winding number in various contexts as they play different roles. These are the $k$-dependent winding number $W_{ij}(k)$, and the $k$-independent winding numbers $\bar{W}_{ij}$ and $\mathcal{W}_{ij}$ that are computed over the first period and averaged over $n$ periods respectively. As written in the main text, $W_{ij}(k)$ quantifies the partial winding information. However, characterizing a multi-strand braid requires an averaged winding number matrix $\mathcal{W}$ which is computed in terms of the first-period winding number matrix $\bar{W}$ since the band indices need to be carefully tracked. Although $\mathcal{W}_{ij}$ can distinguish between different knots and links in the generalized twister models studied in this work, we emphasize that this is \textit{not} true for generic models as it is not a knot invariant like the Jones polynomial. 

In the following, we remind the reader that $\Lambda_i(k) \propto E_i(k)$ hence the spectral braiding information is captured by $\Lambda_i(k) - \Lambda_j(k)$. We define the $k$-dependent winding number between two bands as 

\begin{definition}
Given two bands $i$ and $j$ and the trajectories $\Lambda_i(k) \neq \Lambda_j(k) \, \forall k$, the $k$-dependent winding number which quantifies the partial winding between them is given by 

\begin{equation}
W_{ij}(k) := \frac{1}{2 \pi i} \int_0^k dk' \, \partial_{k'} \ln [\Lambda_i(k') - \Lambda_j(k')]. 
\label{eq:characterizing_non-Hermitian_braids/winding_number_element_definition}
\end{equation}
\end{definition}
Generalizing to multiple bands, we have the winding number matrix whose definition is 
\begin{definition}
The winding number matrix $\mathcal{W}$ of a $N$-band Hamiltonian is a $N \times N$ symmetric matrix 

\begin{equation}
\mathcal{W} := \frac{1}{n}\sum_{a=0}^{n-1}(P^{-1})^a \bar{W} P^a,
\end{equation}
where $n$ is determined by the minimal power of the permutation matrix $P$ such that $P^n = \mathbb{I}$ and each element is the $k$-independent pairwise winding number of two bands up to $k = 2\pi$

\begin{equation}
\bar{W}_{ij} := \frac{1}{2\pi i}  \int_0^{2\pi}dk' \, \partial_{k'} \ln [\Lambda_i(k') - \Lambda_j(k')].
\label{eq:characterizing_non-Hermitian_braids/winding_number_element}
\end{equation}
\end{definition}
From the definition, it is obvious that $\mathcal{W}$ has vanishing diagonal terms. Note that for multi-band systems, there is a subtlety in defining the winding number between an arbitrary pair of bands. Consider a system with period $2\pi$. For $\mathcal{W}_{ij}$ between two bands $E_i(k)$ and $E_j(k)$ to be topological and thus well-defined, the sets of energies must satisfy $\{E_i(0), E_j(0)\} = \{E_i(2n\pi), E_j(2n\pi)\}$. This condition is automatically satisfied in two-band systems but may fail in multi-band systems since the permutation matrix of eigenstates $P$ does not have to be diagonal or off-diagonal. For example, in the unknot of $\hat{H}^{(4)}(k)$, we have $\{E_i(0), E_j(0)\} = \{E_i(8\pi), E_j(8\pi)\}$ only after $k$ evolves for four periods. To address this, we identify the smallest $n \in \mathbb{Z}_+$ such that $\{E_i(0), E_j(0)\} = \{E_i(2n\pi), E_j(2n\pi)\}$. $n$ always exists because the permutation group is finite-dimensional. Determining $n$ typically involves tracking sets of bands $\{i_1, i_2, \dots, i_n\}$ and $\{j_1, j_2, \dots, j_n\}$ that form closed cycles over $k \in [0, 2n\pi]$. This ensures that the winding numbers remain topological, although they may now take fractional values like $\mathbb{Z}/n$, which we observe in the geometric significance of $\mathcal{W}_{ij}$ below. 

\begin{figure}[htbp!]
\centering
\includegraphics[width= 0.8\linewidth]{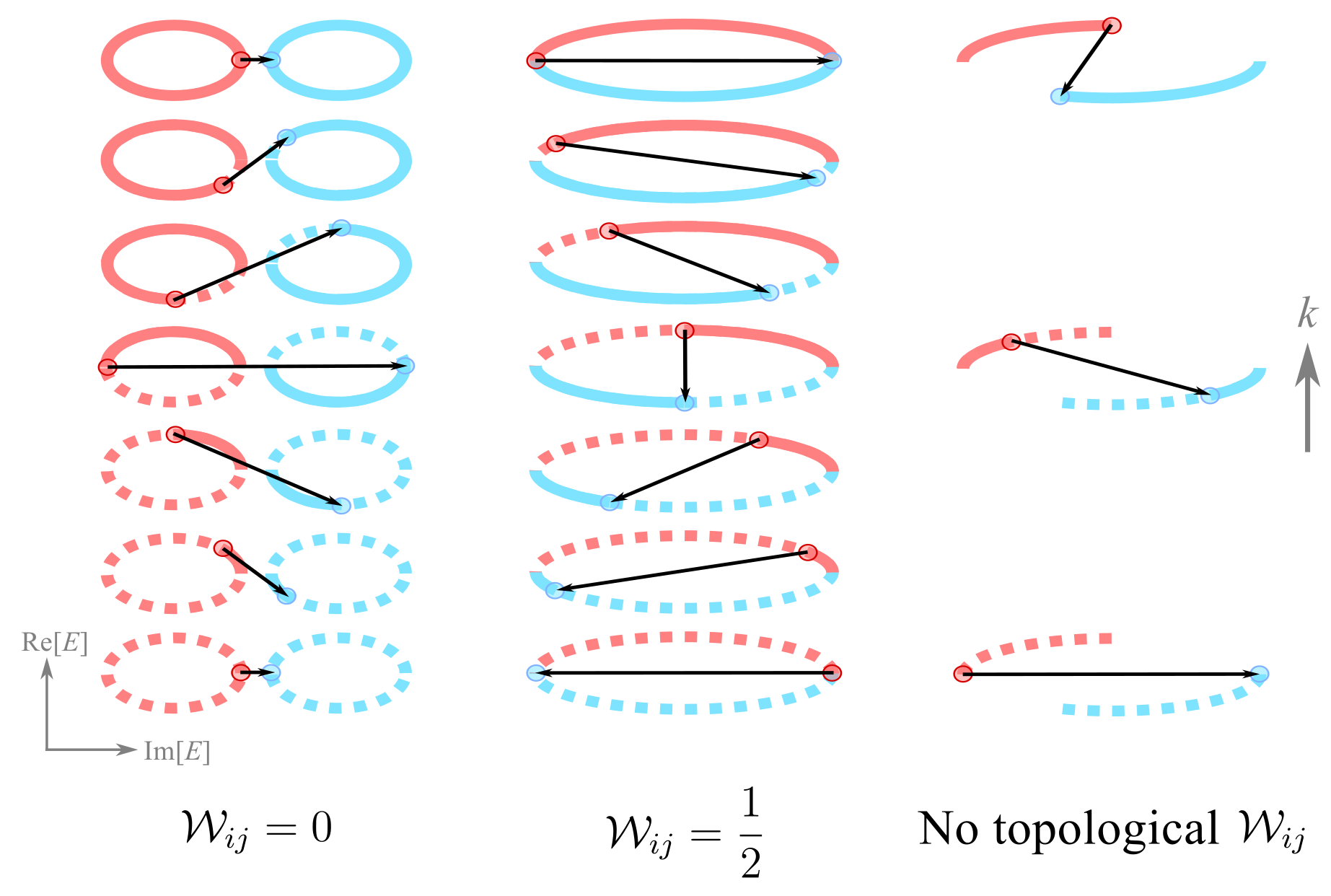}
\caption{\textbf{Geometric significance of the winding number between two bands.} Given two bands $i$ (red) and $j$ (blue), the winding number $\mathcal{W}_{ij}$ is topologically well defined only when the trajectories form closed paths in the complex plane, namely when the initial and final points coincide as the same set of points. This condition is satisfied in the leftmost and middle columns, whereas the rightmost column provides a counterexample in which $\mathcal{W}_{ij}$ is not well defined, and becomes sensitive to perturbations. For multiple bands, it may require multiple periods for all the winding numbers to be topological.}
\label{fig:w_geometric_meaning}
\end{figure}

If the two bands do not wind around each other, $\mathcal{W}_{ij} = 0$; if they only interchange positions once after evolving for $n$ periods, $\mathcal{W}_{ij} = 1/2$; if they wind around each other once or equivalently, interchange twice, $\mathcal{W}_{ij}(k) = 1$. So $\mathcal{W}_{ij}$ can only be $\mathbb{Z}/n$, and is insensitive to the exact values of $\Lambda_i(k)$ and $\Lambda_j(k)$. In multi-band models, the winding may be rational numbers as a band may not evolve to itself within one period. Geometrically, $\mathcal{W}_{ij}$ is the net phase accumulated by the vector $\Lambda_i(k) - \Lambda_j(k)$ over $2\pi$ if one writes the real and imaginary parts of $\Lambda_i(k)$ and $\Lambda_j(k)$ explicitly. 

\cref{fig:w_geometric_meaning} illustrates how the $k$-independent winding number $\mathcal{W}_{ij}$ can be determined without performing an explicit integration by leveraging the geometric interpretation introduced above. Consider the leftmost column as an example: the phase traced by the vector $E_i(k) - E_j(k)$ initially increases, subsequently decreases, and ultimately returns to zero. Here, we treat complex numbers in $\mathbb{C}$ and their corresponding two-dimensional real vectors in $\mathbb{R}^2$ interchangeably for illustrative purposes. This behavior indicates that $\mathcal{W}_{ij} = 0$. A similar analysis applies to the middle panel, which corresponds to the unknot. However, such an approach is valid only when the sets of eigenvalues at $k = 0$ and $k = 2\pi$ (boundaries of the Brillouin zone) coincide. If this condition is not satisfied, as demonstrated in the rightmost column of \cref{fig:w_geometric_meaning}, the resulting net phase becomes sensitive to the precise values of the energies, rendering it non-topological by definition. 

It is worth noting that other equivalent definitions of the $k$-independent winding number $\bar{W}_{ij}$ have been proposed in the context of multi-band systems to render it a topological invariant, as demonstrated in earlier literature \cite{long2024unsupervised}. However, the formulation introduced in this work is particularly well-suited for the generalized twister models and implementation on quantum hardware, as it avoids the need to directly access energies. 

\begin{table}[h]
\centering
\renewcommand{\arraystretch}{1.5}
\begin{tabular}{c|c||c|c||c|c}
\hline
\textbf{Knot/Link} & $\mathcal{W}$ & \textbf{Knot/Link} & $\mathcal{W}$ & \textbf{Knot/Link} & $\mathcal{W}$ \\ 
\hline
\hline
Unknot & $\mqty[
0 & \tfrac{1}{4} & \tfrac{1}{4} & \tfrac{1}{4}\\
\tfrac{1}{4} & 0 & \tfrac{1}{4} & \tfrac{1}{4}\\
\tfrac{1}{4} & \tfrac{1}{4} & 0 & \tfrac{1}{4}\\
\tfrac{1}{4} & \tfrac{1}{4} & \tfrac{1}{4} & 0
]$ 
& Hopf chain & $\mqty[
0 & \tfrac{1}{2} & \tfrac{1}{2} & 0 \\
\tfrac{1}{2} & 0 & \tfrac{1}{2} & \tfrac{1}{2} \\
\tfrac{1}{2} & \tfrac{1}{2} & 0 & \tfrac{1}{2} \\
0 & \tfrac{1}{2} & \tfrac{1}{2} & 0
]$ 
& Solomon's knot &$\mqty[
0 & \tfrac{1}{2} & \tfrac{1}{2} & \tfrac{1}{2}\\
\tfrac{1}{2} & 0 & \tfrac{1}{2} & \tfrac{1}{2}\\
\tfrac{1}{2} & \tfrac{1}{2} & 0 & \tfrac{1}{2}\\
\tfrac{1}{2} & \tfrac{1}{2} & \tfrac{1}{2} & 0
]$  \\ \hline
Hopf link + Unlink & $\mqty[
0 & 0 & 0 & 0 \\
0 & 0 & 1 & 0 \\
0 & 1 & 0 & 0 \\
0 & 0 & 0 & 0
]$ 
& Unknot + Unlink & $\mqty[
0 & 0 & 0 & 0 \\
0 & 0 & \tfrac{1}{2} & 0 \\
0 & \tfrac{1}{2} & 0 & 0 \\
0 & 0 & 0 & 0
]$ 
& Double Unlink & $\mqty[
0 & 0 & 0 & 0\\
0 & 0 & 0 & 0\\
0 & 0 & 0 & 0\\
0 & 0 & 0 & 0
]$\\ \hline
Hopf link &  $\mqty[
0 & 0 & \tfrac{1}{2} & \tfrac{1}{2} \\
0 & 0 & \tfrac{1}{2} & \tfrac{1}{2} \\
\tfrac{1}{2} & \tfrac{1}{2} & 0 & 0 \\
\tfrac{1}{2} & \tfrac{1}{2} & 0 & 0
]$  
& Unlink & $\mqty[
0 & \tfrac{1}{2} & 0 & 0\\
\tfrac{1}{2} & 0 & 0 & 0 \\
0 & 0 & 0 & \tfrac{1}{2}\\
0 & 0 & \tfrac{1}{2} & 0
]$  & &  \\
\hline
\end{tabular}
\caption{\textbf{$\mathcal{W}$ of various knots and links in the spectrum of $\hat{H}^{(4)}(k)$}.}
\label{tab:four-band_winding_number_matrix}
\end{table}

By combining \cref{eq:characterizing_non-Hermitian_braids/winding_number_element} with the geometric perspective discussed above, and utilizing the band structures extracted from numerical simulations (\cref{fig:H4_phase_diagram}), one can derive the winding number matrix $\mathcal{W}$ for all knots and links of $\hat{H}^{(4)}(k)$ as summarized in \cref{tab:four-band_winding_number_matrix}. Notably, in certain cases, the elements of $\mathcal{W}$ can be determined directly by identifying the number of bands that form closed structures such as unknots, unlinks, or Hopf links without requiring detailed phase tracking. For instance, for the Hopf chain, three closed loops are clearly visible: one formed each by the purple and blue bands, and another by the red and green bands together. The winding number between the red and green bands corresponds to an unlink yielding $\mathcal{W}_{ij} = 0$. The purple and blue bands form an unknot yielding $\mathcal{W}_{ij} = 1/2$. The remaining band pairs form Hopf links where two full periods are required to complete the closed loop in the middle, leading to $\mathcal{W}_{ij} = 1/2$ for these cases as well.

\section{Two-band Twister Model}
\label{app:two-band}

The two-band twister model is 

\begin{equation}
\hat{H}^{(2)}(k) = \mqty[
im_0 & e^{2ik} + e^{ik}m_1 \\
1 + m_1 & -im_0
],
\label{eq:two-band/H2_twister_model}
\end{equation}
and its eigenvalues are

\begin{equation}
E_\pm(k) = \pm\sqrt{e^{2ik}(m_1+1) + m_1e^{ik}(m_1+1)-m_0^2}, 
\label{eq:two-band/energies}
\end{equation}
where $\abs{e^{ik}} = 1$.

\subsection{Reconstruction of Eigenstates}
\label{app:two-band/reconstruction_of_eigenstates}

For two-band systems, the eigenstates can be parameterized on the Bloch sphere, which allows us to reconstruct them from a set of Pauli operator measurements. To compute quantities of interest such as the winding number, it is necessary to obtain the right eigenstates of the model since they are the key ingredients as given in \cref{eq:quantum_hardware_simulation/two-band/p}. Here, we show how the eigenstates of two-band models can be reconstructed from the measured observables on quantum hardware by computing the corresponding parameters on the Bloch sphere. Although the models we study are one-dimensional, we remark that the expressions to reconstruct the eigenstate \cref{eq:two-band/reconstruction_of_eigenstates/reconstructed_eigenstate} are also valid for higher-dimensional models. 

An arbitrary eigenstate corresponding to a two-band model can be parameterized on the Bloch sphere by two angles $\theta(k)$ and $\phi(k)$ as \cite{nielsen2010quantum} 
\begin{equation}
\ket{\psi(k)} = \mqty[
\psi_{+}(k)\\ \psi_{-}(k)
]
= \mqty[
c_{+}(k) e^{i \phi_{+}(k)} \\
c_{-}(k)]
= \mqty[
\cos\frac{\theta(k)}{2} e^{i \phi(k)} \\ 
\sin \frac{\theta(k)}{2}
],
\label{eq:two-band/reconstruction_of_eigenstates/reconstructed_eigenstate}
\end{equation}
where we implicitly adopt a gauge in which the second component of the spinor is taken to be real ($\phi_{-}(k) = 0$). Moreover, all the observables are rational functions of the spinor components $\psi_{\pm}$ and their conjugates $\psi^*_{\pm}$, and the components of $\ket{\psi(k)}$ are related to the Pauli operators as \cite{PhysRevB.97.195422} 

\begin{equation}
\begin{aligned}
& \expval{\hat{\sigma}^z(k)} = \frac{\abs{\psi_{+}(k)}^2-\abs{\psi_{-}(k)}^2}{\abs{\psi_{+}(k)}^2+\abs{\psi_{-}(k)}^2}, 
\quad 
\expval{\hat{\sigma}^x(k)} = \frac{\psi_{+} (k)\psi^*_{-}(k)+\psi^*_{+}(k) \psi_{-}(k)}{2\abs{\psi_{+} (k)\psi^*_{-}(k)}^2},
\quad 
\expval{\hat{\sigma}^y(k)} = \frac{i[ \psi_{+} (k)\psi^*_{-}(k)-\psi^*_{+}(k) \psi_{-} (k)]}{2\abs{\psi_{+}(k) \psi^*_{-}(k)}^2}. 
\label{eq:two-band/reconstruction_of_eigenstates/observables_expectation}
\end{aligned}
\end{equation}
Since the Bloch sphere lives in the space spanned by $(\expval{\hat{\sigma}_x(k)}, \expval{\hat{\sigma}_y(k)}, \expval{\hat{\sigma}_z(k)})$, the polar angle $\theta(k)$ and the azimuthal angle $\phi(k)$ are obtainable on quantum hardware by measuring the Pauli operators. 

\begin{theorem}
The angles to reconstruct the two-band eigenstate are 
\begin{equation}
\theta(k) = \cos^{-1}\expval{ \hat{\sigma}^z(k)}, \qquad \phi(k) = \tan^{-1} \frac{\expval{ \hat{\sigma}^y(k)}}{\expval{\hat{\sigma}^x(k)}}.
\end{equation}
With $\theta(k)$ and $\phi(k)$, $\ket{\psi(k)}$ can be reconstructed. 
\end{theorem}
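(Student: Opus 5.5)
The plan is to substitute the Bloch-sphere parametrization \cref{eq:two-band/reconstruction_of_eigenstates/reconstructed_eigenstate} directly into the Pauli expectation values and invert the resulting trigonometric relations. Working with the normalized spinor $\ket{\psi(k)} = [\cos\frac{\theta}{2}e^{i\phi}, \sin\frac{\theta}{2}]^T$ in the gauge where the second component is real, I would first compute $\expval{\hat{\sigma}^z} = |\psi_+|^2-|\psi_-|^2 = \cos^2\frac{\theta}{2}-\sin^2\frac{\theta}{2} = \cos\theta$. Next I would compute the off-diagonal quantity $\psi_+^*\psi_- = \cos\frac{\theta}{2}\sin\frac{\theta}{2}e^{-i\phi} = \tfrac12\sin\theta\, e^{-i\phi}$. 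From this, $\expval{\hat{\sigma}^x} = 2\Re(\psi_+^*\psi_-) = \sin\theta\cos\phi$, and $\expval{\hat{\sigma}^y} = 2\Im(\psi_+^*\psi_-)$. With this component ordering the latter equals $-\sin\theta\sin\phi$. That sign is a convention issue, since the paper's four-band formula \cref{eq:quantum_hardware_simulation/four-band/Bloch_angles} carries an explicit minus sign. So I would either absorb it into the orientation of $\phi$, which is equivalent to placing the phase on the lower component as in the main-text form \cref{eq:quantum_hardware_simulation/two-band/reconstructed_eigenstate}, or state it explicitly.

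The inversion then proceeds in two steps. Since $\theta\in[0,\pi]$ and $\cos$ is a bijection from $[0,\pi]$ onto $[-1,1]$, we get $\theta = \cos^{-1}\expval{\hat{\sigma}^z}$ uniquely. For $\theta\in(0,\pi)$ we have $\sin\theta>0$, so the ratio $\expval{\hat{\sigma}^y}/\expval{\hat{\sigma}^x} = \tan\phi$ is independent of $\theta$, which gives the stated $\phi$. Finally, substituting $(\theta,\phi)$ back into the spinor reproduces $\ket{\psi(k)}$ up to the global phase that has already been fixed by the gauge choice. This completes the reconstruction claim. It also uses the fact that a pure two-level state has a unit Bloch vector, so three expectation values overdetermine the two angles consistently.

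The main obstacle is making the $\tan^{-1}$ step rigorous. Here $\tan^{-1}$ only determines $\phi$ modulo $\pi$, so I would interpret it as the two-argument $\operatorname{atan2}(\expval{\hat{\sigma}^y},\expval{\hat{\sigma}^x})$. This works because $\sin\theta>0$ makes the signs of $\expval{\hat{\sigma}^x}$ and $\expval{\hat{\sigma}^y}$ fix the quadrant. The poles $\theta\in\{0,\pi\}$ must be treated separately: there $\phi$ is undefined, but it is also irrelevant, because one spinor component vanishes and the state is determined by $\theta$ alone. I would also remark that the normalization factors appearing in \cref{eq:two-band/reconstruction_of_eigenstates/observables_expectation} cancel in the ratio, so the result holds for unnormalized right eigenvectors too. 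Continuity of $\phi(k)$ along $k$ can then be chosen by unwrapping, which is what the later winding computations require.
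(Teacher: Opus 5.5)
Your proposal is correct and takes the same route as the paper: substitute the Bloch-sphere spinor into the Pauli expectation values and invert. The paper states the result with no argument beyond that. Your sign remark is also right: the stated $\phi$, with no minus sign, matches the main-text ordering that puts the phase on the lower component. The Supplementary ordering, with the phase on the upper component, instead gives $\expval{\hat{\sigma}^y}=-\sin\theta\sin\phi$, which is consistent with the minus sign in the four-band formulas. Your use of the two-argument arctangent and your separate treatment of the poles make explicit steps the paper leaves implicit.
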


\subsection{Measurement Protocol}
\label{app:two-band/measurement_protocol}

In this subsection, we show that for generic two-band non-Hermitian models, the winding between the energy bands can be reconstructed from the measured Bloch vectors of the right eigenstates on quantum hardware even if the exact representation of the Hamiltonian is unknown.

We consider a family of non-Hermitian $2 \times 2$ Hamiltonians parameterized by a parameter $k \in [0, 2\pi]$ 

\begin{equation}
\hat{H}(k) = d_x(k)\hat{\sigma}^x + d_y(k)\hat{\sigma}^y + d_z(k)\hat{\sigma}^z,
\label{eq:two-band/measurement_protocol/generic_two-band}
\end{equation}
where $d_\mu(k) \in \mathbb{C}$ for $\mu\in\{x, y, z\}$, and $\hat{\sigma}^{x,y,z}$ are the Pauli operators. Since the identity operator does not change the eigenstate or the winding number between two bands (as shown below), we can ignore terms like $d_0\hat{\mathbb{I}}_2$ without any loss of generality. The energies of $\hat{H}(k)$ are $E_{\pm}(k) = \pm \sqrt{d_x(k)^2 + d_y(k)^2 + d_z(k)^2}$. We assume there is no exceptional point to ensure that $\hat{H}(k)$ is diagonalizable for all $k$, so
\begin{equation}
E_\pm(k) \neq 0, \quad \forall k \in [0,2\pi].
\end{equation}
Let $|\psi_\pm(k)\rangle$ denote the normalized right eigenstates of $\hat{H}(k)$ corresponding to $E_{\pm}(k)$: $\hat{H}(k) |\psi_\pm(k)\rangle = E_{\pm}(k) |\psi_\pm(k)\rangle$. We define the complex Bloch vector associated with the eigenstate $\ket{\psi_\pm(k)}$ as

\begin{equation}
\vb{L}_\pm(k) = [\mel{\psi_\pm(k)}{\hat{\sigma}^x}{\psi_\pm(k)},
\mel{\psi_\pm(k)}{\hat{\sigma}^y}{\psi_\pm(k)},
\mel{\psi_\pm(k)}{\hat{\sigma}^z}{\psi_\pm(k)}]^T.
\end{equation}
This can be explicitly written as 
\begin{equation}
\vb{L}_\pm(k) = \frac{1}{\mathcal{N}^2(k)}\mqty[
\pm 2 \text{Re} \left( \frac{\pm d_z(k) + E(k)}{  d_{+}(k)} \right) \\
\mp 2 \text{Im} \left( \frac{\pm d_z(k) + E(k)}{  d_{+}(k)} \right)\\
\mathcal{N}^2(k) - 2
].
\end{equation}
where we have defined $d_{+}(k) = d^*_{-}(k) = d_x (k) + i d_y (k) \in \mathbb{C}$ and the normalization factor $\mathcal{N}(k) = \sqrt{1+ \left|\frac{E(k) - d_z(k)}{d_+(k)} \right|^2} \in \mathbb{R}$. Note that $\vb{L}_\pm(k)$ does not necessarily have unit norm since it is constructed from right eigenstates in a non-Hermitian system. What matters for our analysis is the evolution of its complex phase rather than its magnitude.

For the energy bands of $\hat{H}(k)$ in \cref{eq:two-band/measurement_protocol/generic_two-band}, the winding number can be written as
\begin{equation}
W_{+-}(k) = \frac{1}{2 \pi i} \int_0^k dk' \partial_{k'} \ln E(k'),
\end{equation}
since $E_+(k')$ and $E_-(k')$ are symmetric about the origin so we set $E(k') = E_+(k')$. It is important to note, however, that this winding number is defined using the logarithmic function, which is inherently multi-valued, thus differing by integer multiples of $2\pi i$ depending on the branch cut choice. Thus, to obtain a meaningful and continuous phase, the logarithm must be properly unwrapped. This contrasts with the conventional definition of a winding number, which is strictly integer-valued. The distinction arises because in the definition of $W_{+-}(k)$, we treat the unordered set $\{E_+(k), E_-(k)\}$ as periodic, effectively counting the number of Riemann sheets of the logarithmic function traversed by $E_+(k)$ as it evolves. In contrast, the conventional definition of the winding number assumes that $E_+(k)$ always satisfies $E_+(0) = E_+(2\pi)$ however, that might not be true here as $E_\pm(0) = E_\mp(2\pi)$ instead. 

The question thus becomes: how much of $W_{+-}(k)$ can be recovered from the Bloch vectors $\vb{L}_\pm(k)$? In the following, we assume that $d_z(k)$ is constant so that it does not contribute to $W_{+-}(k)$, and demonstrate that under this condition, the winding number $W_{+-}(k)$ can be extracted. Physically, this assumption is well motivated in many models, as $d_z(k)$ typically represents a constant term and is often taken to be constant.

To extract the winding of $\vb{L}_\pm(k)$, we consider a homeomorphic mapping preserving topological properties, given by $\mathcal{P}: S^2 \times S^1 \rightarrow \mathbb{C} \times S^1$. A commonly used example is the stereographic projection, which requires carefully removing one point (typically a pole mapped to infinity) from the sphere. Specifically, this implies $L^z_\pm(k) \neq 1$. In practice, this condition is satisfied unless the eigenstate coincides with the projection pole for all $k$. If necessary, the pole can be shifted by a global unitary rotation of the measurement basis. In the following, we adopt the explicit form of the mapping as

\begin{equation}
\mathcal{P}: \vb{L}_\pm(k) \rightarrow \mathcal{P}\vb{L}_\pm(k) = \frac{\expval{\hat{\sigma}^x_\pm(k)} + i\expval{\hat{\sigma}^y_\pm(k)}}{1 - \expval{\hat{\sigma}^z_\pm(k)}} = \frac{\pm E^*(k) + d^*_z(k)}{d_{-}(k)}.
\end{equation}
Then we define
\begin{equation}
\begin{aligned}
\mathcal{L}_{+}(k) = \frac{\mathcal{P}\vb{L}_+(k)  + \mathcal{P}\vb{L}_-(k)}{2} = \frac{d^*_z(k)}{d_{-}(k)}, \\
\mathcal{L}_{-}(k) = \frac{\mathcal{P}\vb{L}_+(k)  - \mathcal{P}\vb{L}_-(k)}{2} = \frac{E^*(k)}{d_{-}(k)}, 
\end{aligned}
\end{equation}
such that
\begin{equation}
\Lambda(k) = \mathcal{L}_+(k)\mathcal{L}^*_-(k) = \frac{d^*_z(k)E(k)}{d_-(k)d_+(k)} = \frac{1}{4}p_+(k)p_-(k), 
\end{equation}
where 

\begin{equation}
p_\pm(k) = \frac{\expval{\hat{\sigma}^x_+(k)} \pm i\expval{\hat{\sigma}^y_+(k)}}{1 - \expval{\hat{\sigma}^z_+(k)}} \pm \frac{\expval{\hat{\sigma}^x_-(k)} \pm i\expval{\hat{\sigma}^y_-(k)}}{1 - \expval{\hat{\sigma}^z_-(k)}}.
\end{equation}
Since $d_z(k)$ is assumed to be a constant, and $d_-(k)d_+(k) \in \mathbb{R}$ (a periodic path on the real axis cannot contribute to the winding because the phase is a periodic step function of $k$), the winding number of $\mathcal{L}_{+}(k)\mathcal{L}^*_{-}(k)$ can be written as 
\begin{equation}
W_{+-}(k) = \frac{1}{2 \pi i} \int_0^k dk' \, \partial_{k'} \ln \Lambda(k').
\label{eq:two-band/measurement_protocol/winding_number}
\end{equation}
Specializing to the two-band twister model yields

\begin{equation}
\Lambda(k) = \frac{-im_0}{(1+m_1)^2}E(k). 
\end{equation}
\cref{eq:two-band/measurement_protocol/winding_number} is a specialization of \cref{eq:characterizing_non-Hermitian_braids/winding_number_element} as the energies are inversion-symmetric about the origin. This proves that one can \textit{fully restore} the winding information of the energy bands from the Bloch sphere trajectories of the right eigenstates without requiring knowledge of the explicit form of the Hamiltonian. Therefore, $W_{+-}(k)$ provides a complete topological classification of the band structure for any two-band non-Hermitian model with constant $d_z(k)$. We note that the condition can be relaxed to the zero winding of $d_z(k)$, although this is less practical as it still requires access to the Hamiltonian. Nevertheless, it remains a \textit{sufficient} condition for extracting $W_{+-}(k)$. Alternative mappings may be explored to uncover additional sufficient conditions.

\subsection{Topologically Inequivalent Regions in Parameter Space}
\label{app:two-band/phase_diagram}

The degeneracy in \cref{eq:two-band/energies} occurs at $E_+(k) = E_-(k)$ or equivalently, $E_+(k) = 0$ since $E_-(k) = -E_+(k)$. Let $e^{ik} = x + iy$ where $x, y\in[-1, 1]$ so that we can deal with the real and imaginary parts separately. As such, we have 

\begin{equation}
\begin{aligned}
\Re(E^2_+(k)) &= (m_1x+x^2-y^2)(m_1+1)-m_0^2, \\
\Im(E^2_+(k)) &= y(2x+m_1)(m_1+1).
\label{eq:two-band/phase_diagram/Re_Im_E}
\end{aligned}
\end{equation}
There are several cases to consider for $E^2_\pm(k) = 0$:

\begin{enumerate}
    \item If $m_0 = 0$ and $m_1 = -1$, then $\Re(E_+(k)) = \Im(E_+(k)) = 0$. 
    \item $(x, y) = (1, 0) \Rightarrow \Re(E^2_+(k)) = (m_1+1)^2 - m_0^2 = 0$ and $\Im(E^2_+(k)) = 0$. 
    \item $(x, y) = (-1, 0) \Rightarrow \Re(E^2_+(k)) = m_1^2-1+m_0^2 = 0$ and $\Im(E^2_+(k)) = 0$. 
    \item $(x, y) = (-m_1/2, \sqrt{1-m_1^2/4}) \Rightarrow \Re(E^2_+(k)) = 1+m_0^2+m_1 = 0$ and $\Im(E^2_+(k)) = 0$ with $-2 \leq m_1 \leq 2$ since $\abs{x} = \abs{\cos k} \leq 1$. 
\end{enumerate}
To summarize, the boundaries in \cref{fig:H2_phase_diagram} are defined by a special point at $(m_0, m_1) = (0, -1)$ and  

\begin{equation}
\begin{aligned}
F_1(m_0, m_1) &= (m_1+1)^2 - m_0^2 = 0, \\
F_2(m_0, m_1) &= m_1^2-1+m_0^2 = 0, \\
F_3(m_0, m_1) &= 1 + m_0^2 + m_1 = 0.
\label{eq:two-band/phase_diagram/phase_boundaries}
\end{aligned}
\end{equation}

\section{Four-band Twister Model}
\label{app:four-band}

The four-band twister model is defined as

\begin{equation}
\hat{H}^{(4)}(k) = \mqty[
im_0 & 0 & 0 & e^{2ik} + e^{ik}m_1 \\
1 + m_1 & \frac{im_0}{3} & 0 & 0 \\
0 & 1 + m_1 & -\frac{im_0}{3} & 0 \\
0 & 0 & 1 + m_1 & -im_0
], 
\end{equation}
and its eigenvalues are 

\begin{equation}
E_{\pm,\pm}(k) = \pm\frac{1}{3}\sqrt{-5m_1^2 \pm S(k)}, 
\label{eq:four-band/energies}
\end{equation}
where $S(k) = \sqrt{16m_1^4 + 81e^{ik}(m_2+1)^3(m_2+e^{ik})}$.

\subsection{Reconstruction of Eigenstates}
\label{app:four-band/reconstruction_of_eigenstates}

In a four-band model where the eigenstates are encoded in a two-qubit Hilbert space, our goal is to reconstruct the eigenstate up to a gauge from conditional measurements on individual qubits, in direct analogy with the Bloch sphere reconstruction used for the two-band model.

For brevity, we omit the explicit $k$ dependence in the following. An arbitrary eigenstate for a four-band model encoded in two qubits can be written as 

\begin{equation}
\ket{\psi(k)} = 
\mqty[
\psi_A^+(k) \\
\psi_A^-(k) \\
\psi_B^+(k) \\
\psi_B^-(k)
] = 
\mqty[
c_A^+(k) e^{i\phi_A^+(k)} \\
c_A^-(k) e^{i\phi_A^-(k)} \\
c_B^+(k) e^{i\phi_B^+(k)} \\
c_B^-(k) 
] = 
\mqty[
\cos\frac{\theta_A(k)}{2}\cos\frac{\theta_{AB}(k)}{2}e^{i\phi_A(k)}e^{i\phi_{AB}(k)} \\
\sin\frac{\theta_A(k)}{2}\cos\frac{\theta_{AB}(k)}{2}e^{i\phi_{AB}(k)} \\
\cos\frac{\theta_B(k)}{2}\sin\frac{\theta_{AB}(k)}{2}e^{i\phi_B(k)} \\
\sin\frac{\theta_B(k)}{2}\sin\frac{\theta_{AB}(k)}{2}
], 
\label{eq:four-band/measurement_protocol/reconstructed_eigenstate}
\end{equation}
where $\pm$, $A$ and $B$ denote the spin degrees of freedom belonging to the two qubits which can be treated equally. We remind the reader that we adopt a gauge such that the final component is real-valued. We seek analogous expressions to \cref{eq:two-band/reconstruction_of_eigenstates/observables_expectation}, and the goal is to choose the correct set of observables to reproduce 

\begin{equation}
\begin{aligned}
\expval{\hat{\sigma}^z(k)}_{+-, a} &= \frac{\abs{\psi_a^+(k)}^2 - \abs{\psi_a^-(k)}^2}{\abs{\psi_a^+(k)}^2 + \abs{\psi_a^-(k)}^2}, \\
\expval{\hat{\sigma}^x(k)}_{+-, a} &= \frac{\psi_a^{+}(k) (\psi^*)_a^{-}(k)+(\psi^*)_a^{+}(k) \psi_a^{-}(k)}{2\abs{\psi_a^{+}(k) (\psi^*)_a^{-}(k)}^2}, \\
\expval{\hat{\sigma}^y}_{+-, a} &= \frac{i[\psi_a^{+}(k) (\psi^*)_a^{-}(k) -(\psi^*)_a^{+}(k) \psi_a^{-}(k)]}{2\abs{\psi_a^{+}(k) (\psi^*)_a^{-}(k)}^2}. 
\end{aligned}
\end{equation}
It turns out that the observables are 
\begin{equation}
\begin{aligned}
\hat{I}_{+-, a} = \hat{P}_{a} \otimes \hat{\mathbb{I}}_2, 
\qquad 
\hat{\sigma}_{+-, a}^\alpha = \hat{P}_{a} \otimes \hat{\sigma}^\alpha, 
\label{eq:four-band/reconstruction_of_eigenstates/first_set_of_observables}
\end{aligned}
\end{equation}
where $a \in \{A, B\}$ and $\hat{\sigma}^\alpha$ with $\alpha \in\{x, y, z\}$ are the Pauli operators, and the projection operators are given by
\begin{equation}
\hat{P}_{A} = 
\mqty[
1 & 0\\
0 & 0 
], \qquad
\hat{P}_{B} = \mqty[
0 & 0\\
0 & 1 
].
\end{equation}
Similarly, we also have the expectation values of the other qubit as 
\begin{equation}
\begin{aligned}
\expval{\hat{\sigma}^z(k)}_{-, AB} &= \frac{\abs{\psi_A^{-}(k)}^2-\abs{\psi_B^{-}(k)}^2}{\abs{\psi_A^{+}(k)}^2+\abs{\psi_B^{-}(k)}^2}, \\
\expval{\hat{\sigma}^x(k)}_{-, AB} &= \frac{\psi_A^{-}(k) (\psi^*)_B^{-}(k) +(\psi^*)_A^{-}(k) \psi_B^{-}(k)}{2\abs{\psi_A^{+}(k) (\psi^*)_B^{-}(k)}^2}, \\
\expval{\hat{\sigma}^y(k)}_{-, AB} &= \frac{i[\psi_A^{-}(k) (\psi^*)_B^{-}(k) -(\psi^*)^{-}_{A}(k) \psi_B^{-}(k)]}{2\abs{\psi_A^{-}(k) (\psi^*)_B^{-}(k)}^2}, 
\end{aligned}
\end{equation}
corresponding to the operators 

\begin{equation}
\begin{aligned}
\hat{I}_{\beta, AB} = \hat{\mathbb{I}}_2 \otimes
\hat{P}^{\beta} , 
\qquad 
\hat{\sigma}_{\beta, AB}^\alpha = \hat{\sigma}^\alpha \otimes
\hat{P}^{\beta}, 
\label{eq:four-band/reconstruction_of_eigenstates/second_set_of_observables}
\end{aligned}
\end{equation}
where $\beta \in \{+, -\}$, and the projection operators are defined as 

\begin{equation}
\hat{P}^{+} = \mqty[
1 & 0\\
0 & 0 
], \qquad
\hat{P}^{-} = \mqty[
0 & 0\\
0 & 1 
].
\end{equation}
In essence, the protocol performs conditional tomography: we reconstruct the Bloch vector of one qubit while postselecting on the spin state of the other qubit. Finally, we define an additional operator

\begin{equation}
\hat{\sigma}_{+-, AB}^z = \hat{\sigma}^z \otimes \hat{\mathbb{I}}_2. 
\label{eq:four-band/reconstruction_of_eigenstates/third_set_of_observables}
\end{equation}
The expectation value is the total intensity difference in one of the spin degrees of freedom

\begin{equation}
\expval{\hat{\sigma}^z(k)}_{+-, AB} =  \abs{\psi_A^{+}(k)}^2 + \abs{\psi_A^{-}(k)}^2 - \abs{\psi_B^{+}(k)}^2 - \abs{\psi_B^{-}(k)}^2
\end{equation}
Then we can define the angles analogous to the two-band model as 

\begin{theorem}
The angles to reconstruct the four-band eigenstate include the three polar angles: 
\begin{equation}
\theta_A(k) = \cos^{-1} \frac{\expval{\hat{\sigma}^z(k)}_{+-, A}}{\expval*{\hat{I}(k)}_{+-, A}}, \quad
\theta_B(k) = \cos^{-1} \frac{\expval{\hat{\sigma}^z(k)}_{+-, B}}{\expval*{\hat{I}(k)}_{+-, B}}, \quad
\theta_{AB}(k) = \cos^{-1} \expval{\hat{\sigma}^z(k)}_{+-, AB}. 
\label{eq:four-band/reconstruction_of_eigenstates/theta}
\end{equation}
and the three azimuthal angles:
\begin{equation}
\phi_A(k) = -\tan^{-1}\frac{\expval{\hat{\sigma}^y(k)}_{+-, A}}{\expval{\hat{\sigma}^x(k)}_{+-, A}}, 
\quad 
\phi_B(k) = -\tan^{-1}\frac{\expval{\hat{\sigma}^y(k)}_{+-, B}}{\expval{\hat{\sigma}^x(k)}_{+-, B}}, 
\quad 
\phi_{AB}(k) = -\tan^{-1}\frac{\expval{\hat{\sigma}^y(k)}_{-, AB}}{\expval{\hat{\sigma}^x(k)}_{-, AB}}, 
\label{eq:four-band/reconstruction_of_eigenstates/phi}
\end{equation}
\end{theorem}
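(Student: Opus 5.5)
The plan is to substitute the hierarchical parametrization \cref{eq:four-band/measurement_protocol/reconstructed_eigenstate} directly into the expectation values of the operators \cref{eq:four-band/reconstruction_of_eigenstates/first_set_of_observables,eq:four-band/reconstruction_of_eigenstates/second_set_of_observables,eq:four-band/reconstruction_of_eigenstates/third_set_of_observables}. Each quantity should reduce to a trigonometric expression in exactly one of the six angles, times a common positive prefactor. I would then invert these expressions, in the same way as the two-band Bloch-sphere theorem of \cref{app:two-band/reconstruction_of_eigenstates}. Throughout, I would use the ordering in which the first tensor factor labels the sector $A/B$ and the second labels $+/-$, so that $\ket{\psi}=(\psi_A^+,\psi_A^-,\psi_B^+,\psi_B^-)^T$. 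I would also abbreviate $c_{AB}=\cos(\theta_{AB}/2)$ and $s_{AB}=\sin(\theta_{AB}/2)$.

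\textbf{Polar angles.} First, $\langle \hat P_A\otimes\hat{\mathbb{I}}\rangle=|\psi_A^+|^2+|\psi_A^-|^2=c_{AB}^2$. Next, $\langle \hat P_A\otimes\hat\sigma^z\rangle=|\psi_A^+|^2-|\psi_A^-|^2=c_{AB}^2\cos\theta_A$, so their ratio is $\cos\theta_A$. Inverting with $\theta_A\in[0,\pi]$ gives the first formula in \cref{eq:four-band/reconstruction_of_eigenstates/theta}. The $B$ sector works identically, with $s_{AB}^2$ as the prefactor. Finally, $\langle\hat\sigma^z\otimes\hat{\mathbb{I}}\rangle=c_{AB}^2-s_{AB}^2=\cos\theta_{AB}$. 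Because the state is normalized, no denominator is needed here.

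\textbf{Azimuthal angles.} I would compute $\langle \hat P_A\otimes\hat\sigma^x\rangle=2\,\mathrm{Re}(\psi_A^{+*}\psi_A^-)$ and $\langle \hat P_A\otimes\hat\sigma^y\rangle=2\,\mathrm{Im}(\psi_A^{+*}\psi_A^-)$. In the product $\psi_A^{+*}\psi_A^-=c_{AB}^2\cos\frac{\theta_A}{2}\sin\frac{\theta_A}{2}e^{-i\phi_A}$ the inter-sector phase $\phi_{AB}$ cancels. The ratio of the two expectation values is therefore $-\tan\phi_A$, which gives the stated formula. The $B$ sector is analogous, using the gauge in which $\psi_B^-$ is real.

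For $\phi_{AB}$, I would condition instead on the second qubit being in the $-$ state, using $\hat\sigma^\alpha\otimes\hat P^-$. This gives $2\,\mathrm{Re}/\mathrm{Im}$ of $\psi_A^{-*}\psi_B^-=c_{AB}s_{AB}\sin\frac{\theta_A}{2}\sin\frac{\theta_B}{2}e^{-i\phi_{AB}}$. The reason for choosing the $-$ component is that both $\psi_A^-$ and $\psi_B^-$ then carry no $\phi_A$ or $\phi_B$, so the phase of the product isolates $\phi_{AB}$ alone and the ratio is $-\tan\phi_{AB}$. Once all six angles are known, substituting them back into \cref{eq:four-band/measurement_protocol/reconstructed_eigenstate} reconstructs $\ket{\psi(k)}$ up to the fixed gauge. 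This holds because the parametrization is surjective onto normalized states whose last component is real and non-negative.

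\textbf{Main obstacle.} The difficulty is not the algebra but well-definedness. First, $\tan^{-1}$ determines each phase only modulo $\pi$. I would resolve this by reading it as the two-argument arctangent, taking the sign of the $\hat\sigma^x$ expectation value into account, since both sine and cosine are available with a common positive prefactor. Second, the formulas break down on degenerate loci such as $\theta_{AB}\in\{0,\pi\}$ or $\sin\frac{\theta_A}{2}\sin\frac{\theta_B}{2}=0$. There a denominator vanishes and the corresponding phase is gauge-redundant, exactly as at the poles of the Bloch sphere. I would state the theorem generically away from these loci and remark that on them the undefined angle does not affect $\ket{\psi}$. Finally, I would check carefully that the qubit ordering and the sign conventions of $\hat R^{x,y}$ used on hardware match the tensor ordering assumed above, because the displayed intermediate formulas contain index inconsistencies.
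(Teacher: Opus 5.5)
Your proposal is correct and takes the same route as the paper, which likewise treats each conditioned pair of amplitudes as an effective two-level spinor, reads off its Bloch angles from the conditional Pauli expectations, and inverts; your explicit computations (the division by $\langle\hat P_a\otimes\hat{\mathbb{I}}\rangle$, the origin of the minus signs, the choice of the $-$ component for $\phi_{AB}$) simply make that argument concrete. One correction to your side remark: at $\theta_A=0$ or $\theta_B=0$ with $0<\theta_{AB}<\pi$ the undetermined phase is not gauge-redundant, because the physical inter-sector phase $\phi_A+\phi_{AB}$ (respectively $\phi_{AB}-\phi_B$) drops out of every expectation value entering the six formulas, so there the state is genuinely not reconstructed without an extra measurement such as $\hat\sigma^{x,y}\otimes\hat P^{+}$.
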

\noindent Note that upon postselection, $\expval*{\hat{I}(k)}_{+-, A}$ and $\expval*{\hat{I}(k)}_{+-, B}$ cancel out. With these six angles, \cref{eq:four-band/measurement_protocol/reconstructed_eigenstate} can be reconstructed by measuring \cref{eq:four-band/reconstruction_of_eigenstates/first_set_of_observables,eq:four-band/reconstruction_of_eigenstates/second_set_of_observables,eq:four-band/reconstruction_of_eigenstates/third_set_of_observables}.  

The commonly used one-argument $\tan^{-1}$ function can lead to a phase shift, which matters if we consider wave function components because the shift may not be consistent for each pair. Physically, these quantities are given by tracing out the degrees of freedom from the other spin as shown above. The six parameters are sufficient because the overall normalization and a global phase are fixed (we choose the last component to be real), thus removing two redundant degrees of freedom. This mirrors the two-band model where a normalized spinor in $\mathbb{C}^2$ is described by two independent angles. A straightforward extension suggests that an $N$-qubit system requires $(2^{N+1}-2)$ real parameters to specify one normalized eigenstate in a fixed gauge. 

\subsection{Measurement Protocol}
\label{app:four-band/measurement_protocol}

In this subsection, rather than presenting a comprehensive method, we offer a general recipe for reconstructing topological invariants from the measured observables within a family of models with more details than the main text, including the twister models constructed above. While we illustrate the approach using a simple example of $\hat{H}^{(4)}(k)$, our primary aim is to demonstrate the feasibility of measuring topological invariants for distinguishing non-trivial band structures on quantum hardware, even for multi-band systems. This methodology can be readily generalized to systems with more qubits, more intricate phase diagrams, or other types of topological invariants. However, generalizations to more qubits require different mathematical tools since generically there will be no analytical expressions for the energies anymore.

We start from a generic four-band model
Hamiltonian where we use the Pauli operators instead of the Gell-Mann operators as the basis. This choice is not essential at the level of topology but it makes the mapping between the Hamiltonian and measurable operators on quantum hardware transparent. Thus, we have

\begin{equation}
\begin{aligned}
\hat{H}(k) &= c_0 (k) \ \hat{\mathbb{I}}_2 \otimes (\hat{\sigma}_{x} - i \hat{\sigma}_{y}) 
+ c_{1a} (k) \ \hat{\sigma}_{x} \otimes \hat{\sigma}_{x}
+ c_{1b} (k)\ \hat{\sigma}_{x} \otimes \hat{\sigma}_{y} \\
&\qquad + c_{2a}(k) \ \hat{\sigma}_{y} \otimes \hat{\sigma}_{x}
 + c_{2b}(k) \ \hat{\sigma}_{y} \otimes \hat{\sigma}_{y} 
 + c_{3} (k) \ (\hat{\mathbb{I}}_2 \otimes \hat{\sigma}_{z} 
 + 2 \hat{\sigma}_{z} \otimes \hat{\mathbb{I}}_2),
\label{eq:H_4b}
\end{aligned}
\end{equation}
where we assume $c_0 (k) \in \mathbb{R}$ and all the other coefficients are complex. To simplify the expressions in the following discussions, we define some auxiliary functions as follows: 
\begin{equation}
\begin{aligned}
A (k) &= \ c_{1a}(k)^2 + c_{1b}(k)^2 + c_{2a}(k)^2 + c_{2b}(k)^2 + 5 c_3(k)^2, \\
B (k) &= \ \sqrt{
   c_0^2 [(c_{1a}(k) - i \ c_{1b}(k))^2 + (c_{2a}(k) - i \ c_{2b}(k))^2] + (c_{1b}(k) c_{2a}(k) - 
       c_{1a}(k) c_{2b}(k) + 2 c_3(k)^2)^2}, \\
c (k) &= \ -c_{1a}(k) + i \ (c_{1b}(k) + c_{2a}(k)) + c_{2b}(k), \\
e (k) &= \ c_{1a}(k) + i \ (c_{1b}(k) + c_{2a}(k) + i \ c_{2b}(k)), \\
f (k) &= \ c_{1a}(k)^2 + c_{1b}(k)^2 + c_{2a}(k)^2 + c_{2b}(k)^2, \\
g (k) &= \ c_{1a}(k) - i \ c_{1b}(k) + i \ c_{2a}(k) + c_{2b}(k), \\
h (k) &= \ c_{1b}(k) c_{2a}(k) - c_{1a}(k) c_{2b}(k) + 2 c_3(k)^2.
\end{aligned}
\end{equation}
The energies of $\hat{H}(k)$ can thus be written as

\begin{equation}
E_i(k) = \pm \sqrt{A(k) \pm 2B(k)},
\end{equation}
with the corresponding eigenstate

\begin{equation}
\ket{\psi_i(k)} = \frac{1}{\mathcal{N}_i(k)}
\mqty[\frac{c(k)}{3 c_3(k)-E_i(k)} \\ 
\frac{c_0(k)\left(\pm 2 B(k)+f(k)+4 c_3(k)\left(2 c_3(k)+E_i(k)\right)\right)}{2 c_0^2(k) g(k)+e(k)(h(k) \pm B(k))} \\
\frac{c_0(k) g(k) \left(3 c_3 (k) +E_i (k) \right)}{2 c_0^2(k) g(k)+e(k) (h(k) \pm B(k))} \\
1
],
\end{equation}
where the normalization factor $\mathcal{N}_i(k) \in \mathbb{C}$, and the sign in front of $B(k)$ is consistent with the corresponding energy $E_i(k)$ inside the square root. We adopted the gauge that sets the last component as a real number, and used the relation

\begin{equation}
\sqrt{-c(k) \ c_0^2(k) \ g(k)+h^2(k)}=B(k).
\end{equation}
It is important to note that the following derivation does not rely on the exact values of the eigenvalues. Instead, they only require a one-to-one correspondence between the energies $E_i(k)$ and their associated eigenstates $\ket{\psi_i(k)}$, which remains unchanged even in the presence of branch cuts. Thus, even though individual eigenvalue branches may exchange across Riemann sheets, the one-to-one correspondence between $E_i(k)$ and $\ket{\psi_i(k)}$ remains well defined under continuous tracking along $k$, which is precisely the situation in quantum simulations.

Since the eigenstate is still too complicated to extract the winding numbers, we further adopt the assumption

\begin{equation}
c_{1 b}(k)=i \ c_{1 a}(k), \quad c_{2 b}(k)=i \ c_{2 a}(k), \quad c_{1 a}(k), c_{2 a}(k) \in \mathbb{C}; \quad c_3(k) \in i \mathbb{R},
\label{eq:four-band/measurement_protocol/assumption}
\end{equation}
which also applies to the four-band generalized twister model $\hat{H}^{(4)}(k)$ as a special case. The eigenstates are simplified to 

\begin{equation}
\ket{\psi_i(k)} = \frac{1}{\mathcal{N}_i(k)} \mqty[
\frac{-2 c_{1 a} (k) +2 i c_{2 a} (k)}{3 c_3(k)-E_i(k)} \\
\frac{c_3(k) E_i(k)+2 c_3^2(k)\pm \sqrt{c_0^2(k)\left(c_{1 a}^2(k)+c_{2 a}^2(k)\right)+c_3^4(k)}}{c_0(k) \left(c_{1 a} (k) +i c_{2 a} (k) \right)} \\
\frac{3 c_3 (k) + E_i (k)}{2 c_0 (k)} \\
1
].
\end{equation}
Although these expressions appear algebraically complicated, what matters for our purposes is not their explicit form, but the fact that the energies and eigenstates share common functional dependencies which can be accessed through the measured observables. Notice that the third component is a linear function of $E_i(k)$, the difference of which, belonging to two eigenstates without the normalization factor, can lead to their winding number. This suggests a definition of a generalized stereographic projection built from observables that isolate this dependence 

\begin{equation}
\begin{aligned}
X_i(k) &= \mel*{\psi_i(k)}{(\hat{\mathbb{I}} - \hat{\sigma}^z)\otimes\hat{\sigma}^x}{\psi_i(k)}, \\
Y_i(k) &= \mel*{\psi_i(k)}{(\hat{\sigma}^z - \hat{\mathbb{I}})\otimes\hat{\sigma}^y}{\psi_i(k)}, \\
Z_i(k) &= \mel*{\psi_i(k)}{(\hat{\mathbb{I}} - \hat{\sigma}^z)\otimes(\hat{\mathbb{I}} - \hat{\sigma}^z)}{\psi_i(k)}, 
\label{eq:gen_stereo_projection}
\end{aligned}
\end{equation}
yielding 

\begin{equation}
\Lambda_i(k) = \frac{X_i(k) + iY_i(k)}{Z_i(k)} = \frac{3c_3(k) + E_i(k)}{2c_0(k)}.
\end{equation}
Then we can extract the information of the winding between two bands by noticing that

\begin{equation}
\Lambda_i(k) - \Lambda_j(k) = \frac{E_i(k) - E_j(k)}{2 c_0(k)}. 
\label{eq:four-band/measurement_protocol/integrand}
\end{equation}
Hence, the difference $\Lambda_i(k)-\Lambda_j(k)$ reproduces the same $k$-dependent phase evolution as the corresponding eigenvalue difference, up to a real prefactor that does not contribute to the winding.

For multi-band systems, as discussed in the previous section, two bands can evolve to distinct eigenstates after traversing the Brillouin zone. To address this issue in quantum simulations, it is unnecessary to extend the scan beyond a single Brillouin zone since different zones correspond to the same physical points and the required permutation information can already be inferred from continuity at $k=0$ and $k=2\pi$. Once the closed loops formed by the bands are identified, the winding number can be computed as

\begin{equation}
W_{ij}(k) = \frac{1}{2\pi i} \int_0^k dk' \, \partial_{k'} \ln [\Lambda_i(k') - \Lambda_j(k')],
\label{eq:four-band/measurement_protocol/winding_number_element}
\end{equation}
where each term corresponds to a distinct pair of bands, and all such pairs form a closed loop. \cref{eq:four-band/measurement_protocol/winding_number_element} is the analogue of \cref{eq:characterizing_non-Hermitian_braids/winding_number_element} since $\Lambda_i(k) \propto E_i(k)$. This method applies to all models that satisfy \cref{eq:H_4b} and \cref{eq:four-band/measurement_protocol/assumption}, which includes \textit{all} four-band twister models.

\subsection{Topologically Inequivalent Regions in Parameter Space}
\label{app:four-band/phase_diagram}

The first degeneracy in \cref{eq:four-band/energies} occurs when $S^2(k) = 0$. Once again, we deal with the real and imaginary parts separately. As such, we have 

\begin{equation}
\begin{aligned}
\Re(S^2(k)) &= 16m_0^4 + 81(m_1+1)^3(m_1x + x^2 - y^2) = 0, \\
\Im(S^2(k)) &= 81(m_1+1)^3(m_1 + 2x)y = 0.
\end{aligned}
\end{equation}
There are several cases to consider for $S^2(k) = 0$:
\begin{enumerate}
    \item If $m_0 = 0$ and $m_1 = -1$, then $\Re(S^2(k)) = \Im(S^2(k)) = 0$.
    \item $(x, y) = (1, 0) \Rightarrow \Re(S^2(k)) = 16m_0^4 + 81(m_1+1)^4 = 0$ and $\Im(S^2(k)) = 0$. \\ 
    \item $(x, y) = (-1, 0) \Rightarrow \Re(S^2(k)) = 16m_0^4 + 81(m_1+1)^3(1-m_1) = 0$ and $\Im(S^2(k)) = 0$. 
    \item $(x, y) = (-m_1/2, \sqrt{1-m_1^2/4}) \Rightarrow \Re(S^2(k)) = 16m_0^4 - 81(m_1+1)^3 = 0$ and $\Im(S^2(k)) = 0$ with $-2 \leq m_1 \leq 2$ since $\abs{x} = \abs{\cos k} \leq 1$. 
\end{enumerate}
The second degeneracy occurs at $S^2(k) - 5m_0^2 = 0$. Note that this is equivalent to $\tilde{S}^2(k) = S(k)^2 - 25m_0^4 = 0$. As such, we have  

\begin{equation}
\begin{aligned}
\Re(\tilde{S}^2(k)) &= m_0^4 - 9(m_1+1)^3(m_1x + x^2 - y^2) = 0, \\
\Im(\tilde{S}^2(k)) &= 81(m_1+1)^3(m_1 + 2x)y = 0. 
\end{aligned}
\end{equation}
There are several cases to consider:

\begin{enumerate}
    \item $(x, y) = (1, 0) \Rightarrow \Re(\tilde{S}^2(k)) = m_0^4 - 9(m_1+1)^4 = 0$ and $\Im(\tilde{S}^2(k)) = 0$.
    \item $(x, y) = (-1, 0) \Rightarrow \Re(\tilde{S}^2(k)) = m_0^4 - 9(m_1+1)^3(1-m_1) = 0$ and $\Im(\tilde{S}^2(k)) = 0$. 
    \item $(x, y) = (-m_1/2, \sqrt{1-m_1^2/4} \Rightarrow \Re(\tilde{S}^2(k)) = m_0^4 + 9(m_1+1)^3 = 0$ and $\Im(\tilde{S}^2(k)) = 0$. 
\end{enumerate}
Six equations define the phase boundaries shown in \cref{fig:H4_phase_diagram}, and a degenerate point at $(m_0, m_1) = (0, -1)$ where all six equations intersect. To summarize, these six equations are 

\begin{equation}
\begin{aligned}
F_1(m_0, m_1) &= 16m_0^4 + 81(m_1+1)^4 = 0, \\
F_2(m_0, m_1) &= 16m_0^4 + 81(m_1+1)^3(1-m_1) = 0, \\
F_3(m_0, m_1) &= 16m_0^4 - 81(m_1+1)^3 = 0 \qquad \text{with} \qquad \abs{m_1} \leq 2, \\
G_1(m_0, m_1) &= m_0^4 - 9(m_1+1)^4 = 0, \\
G_2(m_0, m_1) &= m_0^4 - 9(m_1+1)^3(1-m_1) = 0, \\
G_3(m_0, m_1) &= m_0^4 + 9(m_1+1)^3 = 0, \qquad \text{with} \qquad -2 \leq m_1 \leq -1. 
\end{aligned}
\end{equation}

\section{Knot Polynomials}
\label{app:knot_polynomials}

To further characterize the topological link structure, one may also compute knot polynomials using the braid words extracted from the band structure. 
In particular, the Alexander and Jones polynomials (as shown in \cref{tab:knot_link_braid_words_polynomials}) provide useful information about the link type associated with a given braid word. 
However, for multi-component links, these polynomials are not complete invariants and may depend on the projection used to define the braid diagram in contrast to the winding number matrix which is insensitive to the choice of a projection plane. 
To see this, we also give the Kauffman brackets for the closed braid diagrams. 
Here, $A = s^{-1/4}$, where $s$ is a nonzero monomial that characterize the Alexander and Jones polynomials. 
For compactness we write $\langle \beta\rangle_{\mathrm{cl}}$ for the Kauffman bracket of the closed braid diagram obtained from the braid word \(\beta\):
\begin{itemize}
    \item Solomon's knot:
    \begin{equation}
        \left\langle
        \sigma_1 \sigma_3 \sigma_2 \sigma_1 \sigma_3 \sigma_2
        \right\rangle_{\mathrm{cl}}
        =
        -\frac{A^{16}+A^8-A^4+1}{A^4}
        =
        -A^{12}-A^4+1-A^{-4},
    \end{equation}
    with writhe \(w=6\).

    \item Hopf chain:
    \begin{equation}
        \left\langle
        \sigma_1 \sigma_3 \sigma_1 \sigma_3 \sigma_2
        \right\rangle_{\mathrm{cl}}
        =
        -\frac{(A^8+1)^2}{A^5}
        =
        -A^{11}-2A^3-A^{-5},
    \end{equation}
    with writhe \(w=5\).

    \item Hopf link:
    \begin{equation}
        \left\langle
        \sigma_2\sigma_1\sigma_3\sigma_2
        \right\rangle_{\mathrm{cl}}
        =
        -A^2(A^8+1)
        =
        -A^{10}-A^2,
    \end{equation}
    with writhe \(w=4\).

    \item Unknot:
    \begin{equation}
        \left\langle
        \sigma_2 \sigma_1 \sigma_3
        \right\rangle_{\mathrm{cl}}
        =
        -A^9,
    \end{equation}
    with writhe \(w=3\).

    \item Unlink:
    \begin{equation}
        \left\langle
        \sigma_1 \sigma_3
        \right\rangle_{\mathrm{cl}}
        =
        -A^4(A^4+1)
        =
        -A^8-A^4,
    \end{equation}
    with writhe \(w=2\).

    \item Hopf link + unlink:
    \begin{equation}
        \left\langle
        \sigma_2^2
        \right\rangle_{\mathrm{cl}}
        =
        -\frac{(A^4+1)^2(A^8+1)}{A^8}.
    \end{equation}
    Equivalently,
    \begin{equation}
        \left\langle
        \sigma_2^2
        \right\rangle_{\mathrm{cl}}
        =
        -A^8-2A^4-2-2A^{-4}-A^{-8},
    \end{equation}
    with writhe \(w=2\).

    \item Unknot + unlink:
    \begin{equation}
        \left\langle
        \sigma_2
        \right\rangle_{\mathrm{cl}}
        =
        -\frac{(A^4+1)^2}{A}
        =
        -A^7-2A^3-A^{-1},
    \end{equation}
    with writhe \(w=1\).

    \item Double unlinks:
    \begin{equation}
        \left\langle
        \mathrm{Empty}_{B_4}
        \right\rangle_{\mathrm{cl}}
        =
        -\frac{(A^4+1)^3}{A^6}
        =
        -A^6-3A^2-3A^{-2}-A^{-6},
    \end{equation}
    with writhe \(w=0\).
\end{itemize}
Here, the Kauffman bracket of the closed braid is $\langle \cdot \rangle_{\mathrm{cl}}$.

A summary of the Alexander and Jones polynomials for the knots/links is shown in \cref{tab:knot_link_braid_words_polynomials}. 

\begin{table*}[htbp!]
    \centering
\begin{tabular}{c|c|c|c|c}
    \toprule 
    \textbf{Twister Model} & \textbf{Knot/Link} & \textbf{Braid Word} 
    & \textbf{Alexander Polynomial} $\Delta_L(s)$ 
    & \textbf{Jones Polynomial} $V_L(s)$ \\
    \midrule 
    \multirow{3}{*}{Two-band}
    & Hopf link 
    & $\sigma_1^2$ 
    & $1-s$ 
    & $-s^{1/2}-s^{5/2}$ \\
    
    & Unknot 
    & $\sigma_1$ 
    & $1$ 
    & $1$ \\
    
    & Unlink 
    & Empty 
    & $0$ 
    & $-s^{-1/2}-s^{1/2}$ \\
    
    \midrule
    
    \multirow{8}{*}{Four-band}
    & Solomon's knot 
    & $\sigma_1\sigma_3\sigma_2\sigma_1\sigma_3\sigma_2$ 
    & $(1-s)(1+s^2)$ 
    & $-s^{3/2}-s^{7/2}+s^{9/2}-s^{11/2}$ \\
    
    & Hopf chain 
    & $\sigma_1\sigma_3\sigma_1\sigma_3\sigma_2$ 
    & $(1-s)^2$ 
    & $s(1+s^2)^2$ \\
    
    & Hopf link 
    & $\sigma_2\sigma_1\sigma_3\sigma_2$ 
    & $1-s$ 
    & $-s^{1/2}-s^{5/2}$ \\
    
    & Unknot 
    & $\sigma_2\sigma_1\sigma_3$ 
    & $1$ 
    & $1$ \\
    
    & Unlink 
    & $\sigma_1\sigma_3$ 
    & $0$ 
    & $-s^{-1/2}-s^{1/2}$ \\
    
    & Hopf link + unlink 
    & $\sigma_2^2$ 
    & $0$ 
    & $-s^{-1/2}(1+s)^2(1+s^2)$ \\
    
    & Unknot + unlink 
    & $\sigma_2$ 
    & $0$ 
    & $s^{-1}(1+s)^2$ \\
    
    & Double unlinks 
    & Empty 
    & $0$ 
    & $-s^{-3/2}(1+s)^3$ \\ 
    \bottomrule
\end{tabular}
    \caption{\textbf{Summary of braid words and knot polynomials for various knots/links in the two- and four-band twister models \cref{eq:quantum_hardware_simulation/two-band_twister_model/H2_twister_model,eq:quantum_hardware_simulation/four-band_twister_model/H4_twister_model} respectively.} The braid words are obtained in the discussion leading to \cref{eq:results/quantum_circuit_implementation/characterizing_braids_and_knotted_structures/W_crossing_condition} from which the winding number matrices are computed from the measured observables on quantum hardware as depicted in \cref{fig:methodology_illustration}.}
    \label{tab:knot_link_braid_words_polynomials}
\end{table*}

\subsection{Computation of the Alexander Polynomial}
\label{app:knot_polynomials/computation_of_the_Alexander_polynomials}

The Alexander polynomial $\Delta_L(s)$ of an oriented link $L$ is defined via its Seifert surface ~\cite{Alexander1928,Seifert1935,Lickorish1997,Li2019c,VanWijk2006}. One chooses an oriented basis of the first homology group of the surface with integer coefficients, which is the abelian group generated by one-dimensional cycles on the surface modulo boundaries. Using this basis, one constructs a square Seifert matrix $M$ whose elements are integers~\cite{Seifert1935,Lickorish1997}. 
The Alexander polynomial $\Delta_L(s)$ is defined up to an overall rescaling by a nonzero monomial $\pm s^a$ with integer $a$ as 
\begin{equation}
\Delta_L(s) \doteq \det(M - sM^T).
\label{eq:knot_polynomials/computation_of_knot_polynomials/Alexander_polynomial}
\end{equation}
Equivalently, it can be characterized as a unique (up to $\pm s^a$) link invariant satisfying the skein relation
\begin{equation}
\Delta_{L_+}(s) - \Delta_{L_-}(s) = \bigl(s^{1/2}-s^{-1/2}\bigr)\,\Delta_{L_0}(s),
\end{equation}
for three oriented link diagrams $L_+$, $L_-$, and $L_0$ (planar projections with local crossing information) that agree everywhere except in a small region, where $L_+$ has a positive crossing, $L_-$ the corresponding negative crossing, and $L_0$ the oriented smoothing obtained by reconnecting the strands without a crossing~\cite{Conway1970,Lickorish1997}. These conditions are supplemented by the normalization $\Delta_{\mathrm{unknot}}(s)=1$~\cite{Conway1970,Lickorish1997}. 

For links presented as closures of $N$-strand braids, we compute $\Delta_L(s)$ directly from the braid word using the reduced Burau representation~\cite{Lickorish1997}. The reduced Burau representation is a homomorphism

\begin{equation}
\bar\rho: B_N \longrightarrow \mathrm{GL}_{N-1}\bigl(\mathbb{Z}[s^{\pm 1}]\bigr),
\end{equation}
which assigns an explicitly known $(N-1) \times (N-1)$ matrix over the Laurent polynomial ring $\mathbb{Z}[s^{\pm 1}]$ to each Artin generator $\sigma_i$. Given a braid word

\begin{equation}
\beta = \sigma_{i_1}^{\epsilon_1}\sigma_{i_2}^{\epsilon_2}\cdots\sigma_{i_m}^{\epsilon_m}, \qquad \epsilon_k\in\{\pm 1\},
\end{equation}
we obtain a matrix representative
\begin{equation}
B_\beta(s) = \bar\rho(\sigma_{i_1})^{\epsilon_1}\,\bar\rho(\sigma_{i_2})^{\epsilon_2}\cdots \bar\rho(\sigma_{i_m})^{\epsilon_m}
\in \mathrm{GL}_{n-1}\bigl(\mathbb{Z}[s^{\pm 1}]\bigr)
\end{equation}
by multiplying the Burau matrices in the order prescribed by the word. The Alexander polynomial of the oriented link $L$ given by the closure of $\beta$ is then extracted from the determinant
\begin{equation}
\det\bigl(\mathbb{I}_{N-1} - B_\beta(s)\bigr),
\end{equation}
where $\mathbb{I}_{N-1}$ is the $(N-1) \times (N-1)$ identity matrix. More precisely, one has

\begin{equation}
\Delta_L(s) \doteq \frac{\det\bigl(\mathbb{I}_{N-1} - B_\beta(s)\bigr)}{1 - s},
\label{appxeqn:Alexpoly_final}
\end{equation}
where $\doteq$ denotes equality up to multiplication by a unit $\pm s^a$ in $\mathbb{Z}[s^{\pm 1}]$~\cite{Alexander1928,Lickorish1997}. The factor $1 - s$ removes the trivial eigenvalue associated with the overall longitudinal direction around the braid axis and ensures that the resulting Laurent polynomial is invariant under Markov moves on the braid, and hence depends only on the isotopy class of the closed link.

\subsubsection{Computing the Alexander Polynomial for Solomon's knot}

Consider the braid word
\begin{equation}
    \beta=\sigma_1\sigma_3\sigma_2\sigma_1\sigma_3\sigma_2
    =(\sigma_1\sigma_3\sigma_2)^2
    \in B_4 .
\end{equation}
Since this is a four-strand braid, the reduced Burau representation is
three-dimensional. We use the convention

\begin{equation}
\bar{\rho}(\sigma_1) =
\mqty[-s & 1 & 0 \\
0 & 1 & 0 \\
0 & 0 & 1
],
\qquad
\bar{\rho}(\sigma_2) =
\mqty[
1 & 0 & 0 \\
s & -s & 1 \\
0 & 0 & 1
],
\qquad
\bar{\rho}(\sigma_3)=
\mqty[
1 & 0 & 0 \\
0 & 1 & 0 \\
0 & s & -s
]. 
\end{equation}
First, we define
\begin{equation}
    C(s) := \bar{\rho}(\sigma_1)\bar{\rho}(\sigma_3)\bar{\rho}(\sigma_2).
\end{equation}
A direct multiplication gives

\begin{equation}
C(s) =
\mqty[
0 & -s & 1 \\
s & -s & 1 \\
s^2 & -s^2 & 0
].
\end{equation}
Therefore
\begin{equation}
    B_\beta(s)=\bar{\rho}(\beta)=C(s)^2.
\end{equation}
Since $\beta$ comprises two copies of $\sigma_1\sigma_3\sigma_2$, squaring $C(s)$ gives 

\begin{equation}
B_\beta(s) =
\mqty[
0 & 0 & -s \\
0 & -s^2 & 0 \\
-s^3 & 0 & 0
]. 
\end{equation}
Hence

\begin{equation}
\mathbb{I}_3-B_\beta(s) = 
\mqty[
1 & 0 & s \\
0 & 1+s^2 & 0 \\
s^3 & 0 & 1
].
\end{equation}
The determinant is
\begin{equation}
\begin{aligned}
\det\!\left(\mathbb{I}_3-B_\beta(s)\right)
&=
(1+s^2)
\det
\mqty[
1 & s \\
s^3 & 1
] \\
&= (1+s^2)(1-s^4) \\
&= (1+s^2)(1-s)(1+s)(1+s^2) \\
&= (1-s)(1+s)(1+s^2)^2.
\end{aligned}
\end{equation}
Equivalently,
\begin{equation}
\det\!\left(\mathbb{I}_3-B_\beta(s)\right)
=
-(s-1)(s+1)(s^2+1)^2 .
\end{equation}
For the closure of a four-strand braid, the standard reduced Burau formula is
\begin{equation}
    \Delta_{\beta_\mathrm{cl}}(s)
    \doteq
    \frac{1-s}{1-s^4}
    \det\!\left(\mathbb{I}_3-B_\beta(s)\right),
\end{equation}
where $\doteq$ denotes equality up to multiplication by a unit
\(\pm s^a\). Substituting the determinant gives

\begin{equation}
\begin{aligned}
\Delta_{\beta_\mathrm{cl}}(s)
&\doteq
\frac{1-s}{1-s^4}
(1+s^2)(1-s^4) \\
&= (1-s)(1+s^2).
\end{aligned}
\end{equation}
Thus
\begin{equation}
    \boxed{
    \Delta_{\beta_\mathrm{cl}}(s)
    \doteq
    (1-s)(1+s^2)
    =
    1-s+s^2-s^3 .
    }
\end{equation}

Conversely, if one considers the alternative normalization

\begin{equation}
    \Delta_{\beta_\mathrm{cl}}(s)
    \doteq
    \frac{\det(\mathbb{I}_3-B_\beta(s))}{1-s}, 
\end{equation}
then one obtains

\begin{equation}
    \Delta_{\beta_\mathrm{cl}}(s)
    \doteq
    (1+s)(1+s^2)^2 .
\end{equation}
This is not equivalent to the standard closed-\(B_4\) normalization above as it corresponds to a different normalization convention.

\subsection{Computation of the Jones Polynomial}
\label{app:knot_polynomials/computation_of_the_Jones_polynomials}

The Jones polynomial $V_L(s)$ is defined in terms of the Kauffman bracket of an unoriented link diagram which is a Laurent polynomial in a variable $A$ determined by the local relation
\begin{equation}
    \langle D_+ \rangle = A\,\langle D_0 \rangle + A^{-1}\,\langle D_\infty \rangle,
\end{equation}
where $D_+$ is a link diagram with a specified crossing, and $D_0$ and $D_\infty$ are the link diagrams obtained from $D_+$ by replacing the chosen crossing with the two possible smoothings in which the four incident strand endpoints are reconnected in one of the two non-intersecting pairings~\cite{Kauffman1987,Lickorish1997}. In addition, one imposes the loop relations 
\begin{equation}
    \langle D\cup\bigcirc \rangle = -(A^2 + A^{-2})\,\langle D \rangle, \qquad \langle \bigcirc \rangle = 1,
\end{equation}
where $D\cup\bigcirc$ denotes the disjoint union of the diagram $D$ with a simple closed curve ($\bigcirc$) in the plane. For an oriented diagram of a link with writhe $w(D)$, defined as the sum of the signs of all crossings, the Jones polynomial is then
\begin{equation}
    V_L(s) = (-A^3)^{-w(D)}\,\langle D \rangle,
\end{equation}
normalized such that $V_{\mathrm{unknot}}(s) = 1$~\cite{JonesFields,Kauffman1987}. 

\subsubsection{Computing the Jones Polynomial for Solomon's knot}

We now compute the Jones polynomial using the Kauffman bracket convention
\begin{equation}
    V_L(s)=(-A^3)^{-w(D)}\langle D\rangle,
\end{equation}
where $w(D)$ is the writhe of the oriented braid diagram. Since the braid
\begin{equation}
    \beta=\sigma_1\sigma_3\sigma_2\sigma_1\sigma_3\sigma_2
\end{equation}
contains six positive crossings; its writhe is
\begin{equation}
    w(\beta_{\mathrm{cl}})=6 .
\end{equation}
To evaluate the Kauffman bracket, we use the Temperley--Lieb representation
of the braid generators

\begin{equation}
    g_i=A\,\mathbb{I}+A^{-1}e_i ,
\end{equation}
where the Temperley--Lieb generators satisfy

\begin{equation}
    e_i^2=\delta e_i,
    \qquad
    e_i e_{i\pm1} e_i=e_i,
    \qquad
    e_i e_j=e_j e_i \qquad (|i-j|\geq2),
\end{equation}
with
\begin{equation}
    \delta=-(A^2+A^{-2}).
\end{equation}
The braid word in the Temperley-Lieb representation corresponds to

\begin{equation}
    g_\beta = g_1g_3g_2g_1g_3g_2 .
\end{equation}
Expanding this product in the Temperley--Lieb algebra $TL_4$, closing the
resulting diagrams, and grouping terms according to the number of closed
loops gives
\begin{equation}
\langle \beta_{\mathrm{cl}}\rangle
=
A^6\delta^3
+
(3A^4+1)\delta^2
+
(5A^2+2A^{-2})\delta
+
4 .
\end{equation}
Substituting
\begin{equation}
    \delta=-(A^2+A^{-2})
\end{equation}
yields

\begin{equation}
\begin{aligned}
\langle \beta_{\mathrm{cl}}\rangle
&=
A^6[-(A^2+A^{-2})]^3
+
(3A^4+1)[-(A^2+A^{-2})]^2 \\ 
&\qquad
+
(5A^2+2A^{-2})[-(A^2+A^{-2})]
+
4 \\
&=
-A^{12}-A^4+1-A^{-4} \\
&=
-\frac{A^{16}+A^8-A^4+1}{A^4}.
\end{aligned}
\end{equation}
Thus, the Jones polynomial is
\begin{equation}
\begin{aligned}
V_{\beta_{\mathrm{cl}}}(s)
&=
(-A^3)^{-6}\langle \beta_{\mathrm{cl}}\rangle \\
&=
A^{-18}
\left(
-\frac{A^{16}+A^8-A^4+1}{A^4}
\right) \\
&=
-\frac{A^{16}+A^8-A^4+1}{A^{22}} \\
&=
-A^{-6}-A^{-14}+A^{-18}-A^{-22}.
\end{aligned}
\end{equation}
Finally, using $A=s^{-1/4}$, we obtain

\begin{equation}
\begin{aligned}
V_{\beta_{\mathrm{cl}}}(s)
&=
-s^{3/2}-s^{7/2}+s^{9/2}-s^{11/2} \\
&=
-s^{3/2}\left(1+s^2-s^3+s^4\right).
\end{aligned}
\end{equation}
Thus, we have 

\begin{equation}
    \boxed{
    V_{\beta_{\mathrm{cl}}}(s)
    =
    -s^{3/2}-s^{7/2}+s^{9/2}-s^{11/2}
    =
    -s^{3/2}\left(1+s^2-s^3+s^4\right).
    }
    \label{appxeqn:Jonespoly_final}
\end{equation}

\subsection{Comparison Between Alexander and Jones Polynomials for Two- and Four-band Braid Closures}

In the two-band and four-band twister models, the relevant topological object is obtained by taking the closure of a braid word. We write \(L_{\mathrm{cl}}(\beta)\) for the closed braid link obtained from a braid word \(\beta\). Thus, for \(\beta\in B_2\), the closure may contain either one or two components, while for \(\beta\in B_4\), the closure may contain one, two, three, or four components. The number of closed components is determined by the cycle structure of the braid permutation and is therefore already encoded in the braid word.

The one-variable Alexander polynomial and the Jones polynomial treat the split multi-component closures differently. With the Jones polynomial convention used in the main text, we have

\begin{equation}
    V_L(s)=(-A^3)^{-w(D)}\langle D\rangle,
    \qquad
    A=s^{-1/4},
\end{equation}
the Kauffman-bracket loop relation is
\begin{equation}
    \langle D\cup \bigcirc\rangle
    =
    -(A^2+A^{-2})\langle D\rangle .
\end{equation}
Therefore, after substituting \(A=s^{-1/4}\), the Jones polynomial satisfies
\begin{equation}
    V_{L\sqcup \bigcirc}(s)
    =
    -\left(s^{1/2}+s^{-1/2}\right)V_L(s).
\end{equation}
Thus each additional split trivial component contributes an explicit multiplicative factor
\begin{equation}
    -\left(s^{1/2}+s^{-1/2}\right).
\end{equation}
If \(U_m\) denotes a \(m\)-component unlink, then

\begin{equation}
    V_{U_m}(s)
    =
    (-1)^{m-1}
    \left(s^{1/2}+s^{-1/2}\right)^{m-1}.
\end{equation}
For example,
\begin{align}
    V_{U_1}(s)&=1,\\
    V_{U_2}(s)&=-\left(s^{1/2}+s^{-1/2}\right),\\
    V_{U_3}(s)&=\left(s^{1/2}+s^{-1/2}\right)^2,\\
    V_{U_4}(s)&=-\left(s^{1/2}+s^{-1/2}\right)^3.
\end{align}
This behavior is already visible in the \(B_2\) examples. The closure of \(\sigma_1\in B_2\) is a single unknot, whereas the closure of the empty two-strand braid is the two-component unlink. Hence
\begin{align}
    V_{L_{\mathrm{cl}}(\sigma_1)}(s)&=1,\\
    V_{L_{\mathrm{cl}}(\mathrm{Empty}_{B_2})}(s)
    &=
    -\left(s^{1/2}+s^{-1/2}\right).
\end{align}
The Jones polynomial also distinguishes the two-strand Hopf link,
\begin{equation}
    V_{L_{\mathrm{cl}}(\sigma_1^2)}(s)
    =
    -s^{1/2}-s^{5/2}.
\end{equation}
Therefore, for the \(B_2\) examples considered here, the Jones polynomial distinguishes the Hopf link, the unknot, and the two-component unlink.

For the \(B_4\) examples, the same mechanism becomes more pronounced because more split components can occur. For instance,
\begin{align}
    V_{L_{\mathrm{cl}}(\sigma_1\sigma_3)}(s)
    &=
    -\left(s^{1/2}+s^{-1/2}\right),\\
    V_{L_{\mathrm{cl}}(\sigma_2)}(s)
    &=
    \left(s^{1/2}+s^{-1/2}\right)^2,\\
    V_{L_{\mathrm{cl}}(\mathrm{Empty}_{B_4})}(s)
    &=
    -\left(s^{1/2}+s^{-1/2}\right)^3.
\end{align}
These are unlink-type closures with different numbers of components, and the Jones polynomial retains this component information through the Kauffman-bracket loop factor.

By contrast, the standard one-variable Alexander polynomial vanishes for split multi-component links

\begin{equation}
    \Delta_{L_1\sqcup L_2}(s)=0,
\end{equation}
provided both \(L_1\) and \(L_2\) are non-empty links. Hence, split unlink-type closures collapse to the same Alexander polynomial. In the \(B_2\) case,

\begin{equation}
    \Delta_{L_{\mathrm{cl}}(\mathrm{Empty}_{B_2})}(s)=0,
\end{equation}
whereas
\begin{align}
    \Delta_{L_{\mathrm{cl}}(\sigma_1)}(s)&=1,\\
    \Delta_{L_{\mathrm{cl}}(\sigma_1^2)}(s)&\doteq 1-s.
\end{align}
Thus the Alexander polynomial still distinguishes the listed \(B_2\) examples.

In the \(B_4\) case, however, several different split closures have the same one-variable Alexander polynomial:
\begin{align}
    \Delta_{L_{\mathrm{cl}}(\sigma_1\sigma_3)}(s)&=0,\\
    \Delta_{L_{\mathrm{cl}}(\sigma_2)}(s)&=0,\\
    \Delta_{L_{\mathrm{cl}}(\sigma_2^2)}(s)&=0,\\
    \Delta_{L_{\mathrm{cl}}(\mathrm{Empty}_{B_4})}(s)&=0.
\end{align}
The Jones polynomial separates these cases because it retains the additional unlink-component factors whereas the one-variable Alexander polynomial collapses them to zero.

Therefore, for the two-band examples, both the Alexander and Jones polynomials distinguish the listed closures. 
For the four-band examples, the Jones polynomial provides a finer distinction among split unlink-type closures because it keeps track of the Kauffman-bracket loop factors associated with additional trivial components. 
This does not imply that the Jones polynomial is a complete invariant; distinct knots or links can still share the same Jones polynomial. 
Rather, for the braid closures considered here, the Jones polynomial retains component information that is lost by the one-variable Alexander polynomial.

\section{Global Biorthogonal Berry Phase}
\label{app:GBBP}

In this section, we discuss two issues regarding the global biorthogonal Berry phase. Firstly, we show that branch cuts are not taken properly into account in their usual definition, and we show how to remedy this. Next, we show that the remedied global biorthogonal Berry phase is still incapable of uniquely characterizing each knot/link by using the two-band twister model as an example, thus motivating the winding number \cref{eq:characterizing_non-Hermitian_braids/winding_number_element}.

\subsection{Branch Cut in the Global Biorthogonal Berry Phase}
\label{app:GBBP/branch_cut}

Since the global biorthogonal Berry phase is a $\mathbb{Z}_2$ invariant (taking only the values $0$ or $1$), it is insufficient to distinguish topologically inequivalent objects whose band structures contain more than two knots or links. Nevertheless, this quantity has been widely used as a topological invariant in the literature. In this section, we revisit its definition and computation, and highlight the branch cut that was often overlooked in previous works. We then provide the correct treatment so that the global biorthogonal Berry phase can be evaluated in a mathematically consistent manner. 

\begin{definition}
The global biorthogonal Berry phase is given by \cite{PhysRevLett.126.010401}

\begin{equation}
\gamma = \sum_m \gamma^{(m)} = -\frac{i}{\pi} \sum_m \int_0^{2\pi} dk' \, \mel{\psi_m^L(k')}{\partial_{k'}}{\psi_m^R(k')} \mod 2, 
\label{eq:GBBP/GBBP}
\end{equation}
where $m$ denotes the band index and $\mel{\psi_m^L(k)}{\partial_k}{\psi_m^R(k)}$ is the diagonal element of the non-abelian Berry connection matrix.
\end{definition}
\cref{eq:GBBP/GBBP} is a global quantity as it is summed over all bands. A two-band non-Hermitian Hamiltonian can be written as 

\begin{equation}
\hat{H}(k) = \vb{d}(k)\cdot\vb*{\sigma} =  \mqty[
d_z(k) & d_x(k) - id_y(k) \\
d_x(k) + id_y(k) & -d_z(k)
],  
\end{equation}
where $\vb{d}(k) = [d_x(k), d_y(k), d_z(k)] \in \mathbb{C}^3$ and $\vb*{\sigma} = [\sigma^x, \sigma^y, \sigma^z]$ are the Pauli operators. Thus, its energies are

\begin{equation}
E_\pm(k) = \pm\sqrt{d_x(k)^2 + d_y(k)^2 + d_z(k)^2}, 
\label{eq:Non-Hermitian_knots_and_links/twister_family_of_Hamiltonians/GBBP/energies}
\end{equation}
and its right and left eigenstates are \cite{PhysRevA.109.053311}

\begin{equation}
\begin{aligned}
\ket{\psi_\pm^R(k)} &= \frac{1}{\sqrt{2}}\mqty[
\pm R(k)\sqrt{1 \pm D(k)} \\
\sqrt{1 \mp D(k)}
] = \frac{1}{\sqrt{2}}\mqty[a_\pm(k) \\ b_\pm(k)], \\
\bra{\psi_\pm^L(k)} &= \frac{1}{\sqrt{2}}\mqty[
\pm \frac{\sqrt{1 \pm D(k)}}{R(k)} & \sqrt{1 \mp D(k)}
], 
\label{eq:Non-Hermitian_knots_and_links/twister_family_of_Hamiltonians/GBBP/eigenstates}
\end{aligned}
\end{equation}
where 

\begin{equation}
D(k) = \frac{d_z(k)}{E_+(k)}, \qquad R(k) = \sqrt{\frac{d_x(k)-id_y(k)}{d_x(k)+id_y(k)}}. 
\end{equation}
Note that this parametrization is useful for computing the usual (LL or RR) or biorthogonal (LR or RL) quantities not limited to 1D winding numbers, such as 2D Chern invariants \cite{shen2018topological, yao2018non} or even their higher-dimensional counterparts \cite{ghorashi2021non, gu2016holographic,zhu2023higher}.

Let $\Psi(k) = \mqty[\ket{\psi_+^R(k)} & \ket{\psi_-^R(k)}] = \mqty[a_+(k) & a_-(k) \\ b_+(k) & b_-(k)]$ then we can rewrite \cref{eq:GBBP/GBBP} in a slightly different way as such \cite{PhysRevLett.126.010401}

\begin{equation}
\begin{aligned}
\gamma &= -\frac{i}{\pi} \int_0^{2\pi} dk' \, \Tr[\Psi^{-1}(k')\partial_{k'}\Psi(k')] 
= -\frac{i}{\pi} \int_0^{2\pi} dk' \, \partial_{k'} \Tr[\ln \Psi(k')] 
= i\ln \frac{\det\Psi(2\pi)}{\det\Psi(0)}. 
\end{aligned}
\end{equation}
The global biorthogonal Berry phase is related to the permutation of the bands by 
\begin{equation}
e^{i\gamma} = (-1)^P.
\label{eq:GBBP/GBBP/permutation}
\end{equation}
Since there are only two bands, there are only two possible permutations, even or odd. Using Jacobi's formula 
\begin{equation}
\partial_k \det\Psi(k) = \det\Psi(k)\Tr[\Psi^{-1}(k)\partial_k\Psi(k)], 
\end{equation}
we get 
\begin{equation}
\begin{aligned}
\gamma &= -\frac{i}{\pi} \int_0^{2\pi} dk' \, \Tr[\Psi^{-1}(k')\partial_{k'}\Psi(k')] 
= -\frac{i}{\pi} \int_0^{2 \pi} dk' \, \frac{\partial_{k'} \det\Psi(k')}{\det\Psi(k')} 
= -\frac{i}{\pi} \int_0^{2 \pi} dk' \, \partial_{k'} \ln \det\Psi(k'). 
\label{eq:GBBP/GBBP_det_form}
\end{aligned}
\end{equation}
If the permutation is even, then $P = \gamma = 0$ in \cref{eq:GBBP/GBBP/permutation} which means that $a_\pm(0) = a_\pm(2\pi)$ and $b_\pm(0) = b_\pm(2\pi)$. The non-trivial case is when the permutation is odd, then $P = 1$ and $\gamma = \pi$ in \cref{eq:GBBP/GBBP/permutation}. As such, we have $a_\pm(0) = b_\pm(2\pi)$ and $b_\pm(0) = a_\pm(2\pi)$ instead which corresponds to a band swap. Since $\hat{H}(k) = \hat{H}(k + 2\pi)$, its right eigenstates $\ket*{\psi_\pm^R(k)}$ must also satisfy $\ket*{\psi_\pm^R(k)} = \ket*{\psi_\pm^R(k + 2\pi)}$ which implies that $D(k)$ and $R(k)$ are also periodic in $2\pi$. To ensure that the periodicity is satisfied, only an even permutation is allowed which is a contradiction. To see this, we evaluate $\det\Psi(0)$ and $\det\Psi(2\pi)$ explicitly and get  
\begin{equation}
\det\Psi(0) = \mqty|
a_+(0) & a_-(0) \\
b_+(0) & b_-(0)
| = 
\mqty|
b_+(2\pi) & b_-(2\pi) \\
a_+(2\pi) & a_-(2\pi)
| = -\det\Psi(2\pi). 
\end{equation}
This is possible as $\det\Psi(0) = -\det\Psi(2\pi)$ physically describes a band swap. However, \cref{eq:Non-Hermitian_knots_and_links/twister_family_of_Hamiltonians/GBBP/eigenstates} does not account for this possibility and implies that only an even permutation is allowed. The origin of this contradiction lies in the branch cut in $E_\pm(k)$ \cref{eq:Non-Hermitian_knots_and_links/twister_family_of_Hamiltonians/GBBP/energies} and $\sqrt{1 \pm D(k)}$ of the eigenstates in \cref{eq:Non-Hermitian_knots_and_links/twister_family_of_Hamiltonians/GBBP/eigenstates}. If one uses \cref{eq:Non-Hermitian_knots_and_links/twister_family_of_Hamiltonians/GBBP/eigenstates} to evaluate $\det\Psi(k)$ directly, one gets $\det\Psi(k) = R(k)$ thus implying that only $d_x(k)$ and $d_y(k)$ matter. However, this cannot be true. Due to the branch cut which involves $\vb{d}(k)$, $D(k)$ is also relevant. 
To resolve this, we need to account for the branch cut or, equivalently, the band swaps so that the global biorthogonal Berry phase reflects this correctly. 

Conventionally, the branch cut is chosen to be the negative real axis. Since there exists a branch cut in $k\in[0, 2\pi]$, there is a discontinuity in \cref{eq:GBBP/GBBP_det_form}. If $E_\pm(k)$ are smooth for $k\in[0, 2\pi]$ except at $k = k_0$, then the band swap can be formalized as 
\begin{equation}
\lim_{k \to k_0^-} \Psi(k) = \lim_{k \to k_0^+} \Psi(k)S, 
\end{equation}
where $S = \mqty[0 & 1 \\ 1 & 0]$. Thus, there is a correction factor such that 
\begin{equation}
\ln\det\Psi(k) = \ln\det\Psi_{\text{smooth}}(k) - i\pi\Theta(k - k_0), 
\end{equation}
where $\Theta(k - k_0)$ is the Heaviside step function, and the integrand of \cref{eq:GBBP/GBBP_det_form} should be 
\begin{equation}
\partial_k\ln\det\Psi(k) = \partial_k\ln\det\Psi_{\text{smooth}}(k) - i\pi\delta(k - k_0). 
\end{equation}
Integrating $i\pi\delta(k - k_0)$ yields 
\begin{equation}
\Delta \gamma = -i \int_0^{2\pi} dk' \; i\pi \, \delta(k' - k_0) = \pi.
\end{equation}
Evaluating \cref{eq:GBBP/GBBP_det_form} yields
\begin{equation}
\gamma = \left(-\frac{i}{\pi}\int_0^{2\pi} dk' \, \partial_{k'} \ln R(k') + \nu_E \right) \mod 2, 
\label{eq:GBBP/correct_GBBP}
\end{equation}
where $\nu_E$ is the number of band swaps of a band $E_+(k)$ or $E_-(k)$ for $k\in[0, 2\pi]$. Numerically, we can compute \cref{eq:GBBP/correct_GBBP} as \cite{yu2021experimental}
\begin{equation}
\gamma = \frac{1}{\pi} \sum_{i, n} \Im\bigg(\ln \frac{\braket{\psi_n^L(k_{i+1})}{\psi_n^R(k_i)}}{\braket{\psi_n^L(k_i)}{\psi_n^R(k_i)}}\bigg), 
\label{eq:GBBP/correct_GBBP_numerical}
\end{equation}
if the eigenstates are sorted correctly by accounting for the band swaps. 
\begin{figure}[t]
\centering
\includegraphics[width=0.95 \linewidth]{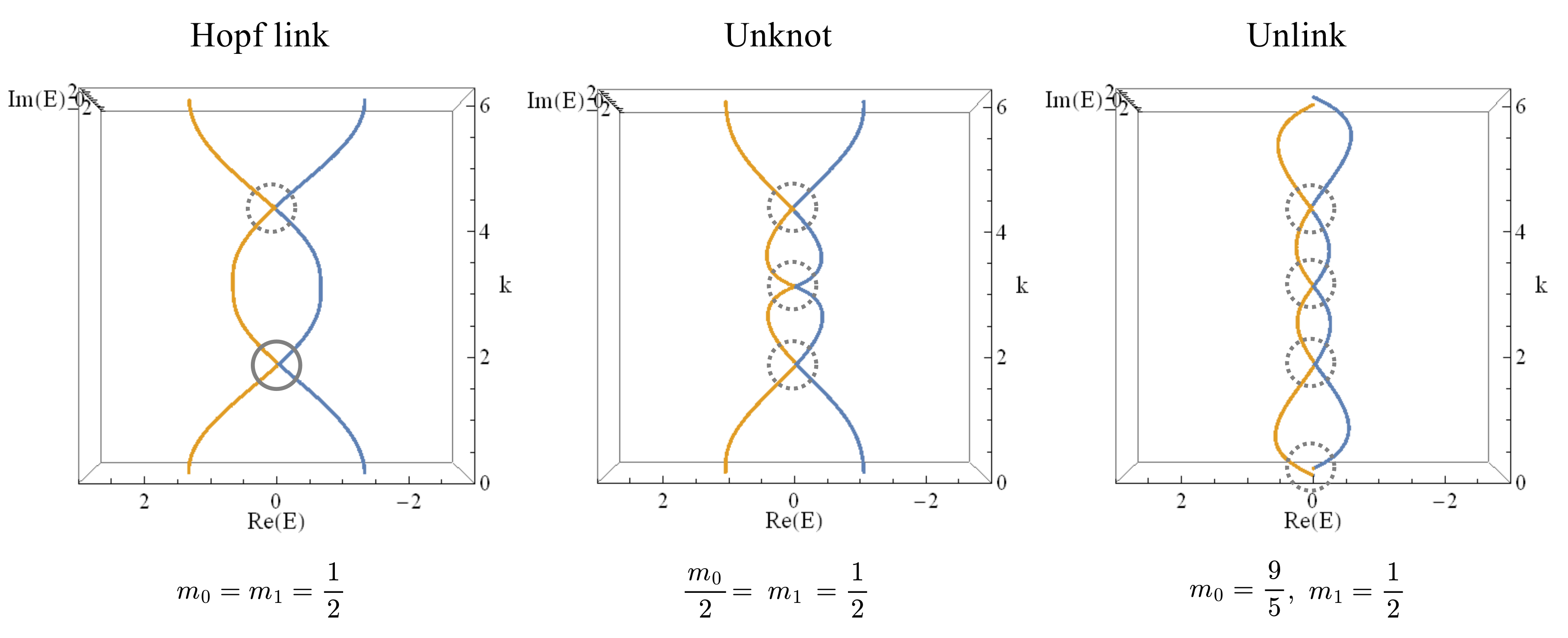}
\caption{\textbf{$\nu_E$ for the Hopf link, unknot, and unlink of \cref{eq:quantum_hardware_simulation/two-band_twister_model/H2_twister_model}.} The blue/orange plot denotes $E_{+}(k) = \sqrt{d_x^2(k)+d_y^2(k)+d_z^2(k)}$ or $E_{-}(k) = -\sqrt{d_x^2(k)+d_y^2(k)+d_z^2(k)}$. Gray circles give the position of the swapping points, and a dashed/solid circle denotes an over-/under-crossing. } 
\label{fig: NH_E}
\end{figure}

\subsection{Inadequacy of the Global Biorthogonal Berry Phase for Determining Knots/Links}
\label{app:GBBP/computing_nu_E}

With the definition above, we now analyze the global biorthogonal Berry phase \cref{eq:GBBP/correct_GBBP} for $\hat{H}^{(2)}(k)$ in \cref{eq:quantum_hardware_simulation/two-band_twister_model/H2_twister_model}. This disagrees with Ref. \cite{PhysRevLett.127.090501}, which computed the winding between two bands using the global biorthogonal Berry phase. First, we calculate the contribution from $R(k)$ to the global biorthogonal Berry phase \cref{eq:GBBP/correct_GBBP}. Here, $R(k) = \sqrt{\frac{e^{i k}\left(e^{i k}+m_1\right)}{m_1+1}}$ and $m_1 \in \mathbb{R}\setminus\{-1\}$. Since the period of the Hamiltonian is $2\pi$, we have
\begin{equation}
\gamma = \frac{1}{2 \pi i} \ln \left[\frac{e^{i 2 \pi}\left(e^{i 2 \pi}+m_1\right)}{m_1+1}\right] - \frac{1}{2 \pi i} \ln \left[\frac{e^{i 0}\left(e^{i 0}+m_1\right)}{m_1+1}\right] + \nu_E  \mod  2 = \nu_E \mod  2.
\end{equation}
So the \textit{only} contribution to the global biorthogonal Berry phase $\gamma$ is from $\nu_E$.

We now show that $\gamma$ is unable to uniquely characterize each knot/link in the family of twister models, thus motivating the winding number defined in \cref{eq:characterizing_non-Hermitian_braids/winding_number_element}. Since $\gamma$ is fully determined by $\nu_{E}$, we aim to find the $k$ points where 
\begin{equation}
E_+^2(k) = \left( e^{2ik} + m_1 e^{ik} \right)(m_1 + 1) - m_0^2
\end{equation}
hits the branch cut, $\mathbb{R_{-}}$, i.e. find real values of $k \in [0, 2\pi)$ for which $\Im (E_+^2(k)) = 0$ and $\Re (E_+^2(k)) < 0$, since $E_+^2(k)$ is a smooth function of $k \in \mathbb{R}$ with $m_0, m_1 \in \mathbb{R}$. Using $e^{ik} = \cos k + i \sin k$, we have
\begin{equation}
\begin{aligned}
\Re (E_+^2(k)) &= (m_1 + 1) \left( 2 \cos^2 k + m_1 \cos k - 1 \right) - m_0^2, \\
\Im (E_+^2(k)) &= (m_1 + 1) \sin k \left( 2 \cos k + m_1 \right).
\end{aligned}
\end{equation}
Thus $\Im (E_+^2(k)) = 0$ holds if
\begin{itemize}
  \item $\sin k = 0 \quad \Rightarrow \quad k = 0, \pi$,
  
  \begin{itemize}
  \item At $k = 0$: $\Re (E_+^2(0)) = (m_1 + 1)^2 - m_0^2 < 0 \quad \Rightarrow \quad m_0 > |m_1 + 1| \ \vee \ m_0 < -|m_1 + 1|$,
  \item At $k = \pi$: $\Re (E_+^2(\pi)) = 1 - m_1^2 - m_0^2 < 0 \quad \Rightarrow \quad m_0 > \sqrt{1 - m_1^2} \ \vee \ m_0 < - \sqrt{1 - m_1^2}$.
\end{itemize}

  \item $2\cos k + m_1 = 0 \quad \Rightarrow \quad \cos k = -\frac{m_1}{2}$, which is only valid for $|m_1| \leq 2$.

  To enforce $\Re(E_+^2(k)) < 0$, we substitute $\cos k = -\frac{m_1}{2}$ into $\Re(E_+^2(k))$ to get
  
  \begin{equation}
  \Re(E_+^2(k)) = -(m_1 + 1 + m_0^2) < 0.
  \end{equation}
  Thus, these solutions always satisfy the condition.

  If $-2 \le m_1 < -1$, then $m_1 + 1 < 0$, and we have the inequality:
  \begin{equation}
     m_1>- m_0^2 - 1.
  \end{equation}
    
  In summary, for $|m_1| \le 2$ and $m_1 \ne -1$, the solutions with $2\cos k + m_1 = 0$ contribute to $\nu_E$ for all $m_0$ if $m_1 > -1$, while for $-2 \le m_1 < -1$ they contribute only when $m_1 > -m_0^2 - 1$.

  \item $m_1 + 1 =0 \quad \Rightarrow \quad m_1 = -1$

  In this case, $E_+^2(k) = -m_0^2$ is real and independent of $k$:
\begin{equation}
\Im (E_+^2(k)) \equiv 0, \quad
\Re (E_+^2(k)) = -m_0^2 < 0. 
\end{equation}

Therefore, all $k \in [0, 2\pi)$ are valid solutions except at $m_0 = 0$. However, when $m_1 = -1$, the function $R(k)$ becomes divergent, and we can no longer directly apply \cref{eq:GBBP/correct_GBBP_numerical} to compute the global biorthogonal Berry phase. In this special case, the Hamiltonian becomes upper-triangular, and its eigenvalues are simply given by its diagonal entries, independent of the off-diagonal terms. As a result, the energy spectrum no longer depends on $k$, and the global biorthogonal Berry phase becomes trivially zero.
\end{itemize}

\begin{figure}[t]
\centering
\includegraphics[width=0.88 \linewidth]{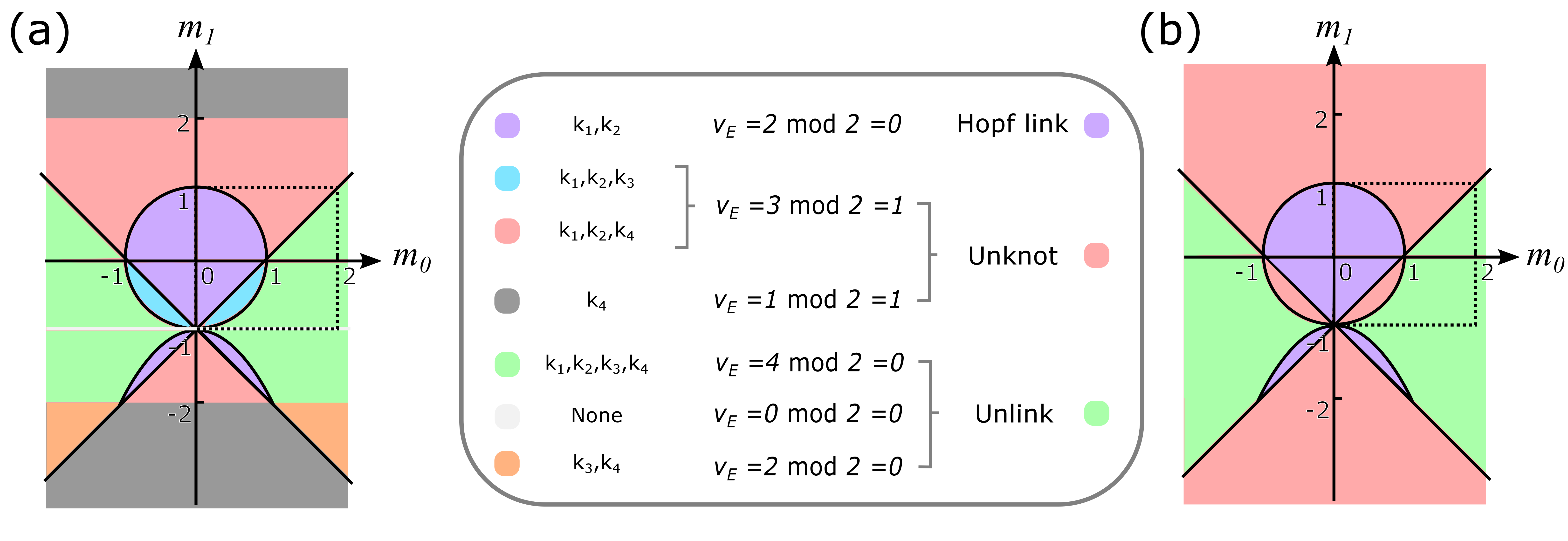}
\caption{\textbf{(a) Different regions denote the number of swapping points $\nu_E$ in $\hat{H}^{(2)}(k)$.} \textbf{(b) Complete phase diagram of $\hat{H}^{(2)}(k)$}. The dashed square denotes the phase diagram in Ref.~\cite{PhysRevLett.126.010401}. Observe that there are three topologically distinct knots/links, but the Hopf link and unlink both have $\nu_E = 0$ thus \cref{eq:GBBP/correct_GBBP} cannot distinguish them. Note that the biorthogonal Berry phase is given by $\gamma = \nu_E \mod 2$, shown as the numbers on the right side of the equations in each row.}
\label{fig:two-band_twister_model_phase_diagram_comparison}
\end{figure}

The full solution set for real $k$ satisfying $\Im(E(k)) = 0$ and $\Re(E(k)) < 0$ is:
\begin{itemize}
  \item For $|m_1| \leq 2$ and $m_1 \neq -1$:
  
  There are always two solutions, $k_1 = \cos^{-1}\left(-\frac{m_1}{2}\right)$, $k_2 = 2\pi - \cos^{-1}\left(-\frac{m_1}{2}\right)$;

  Except when $m_1 \le - m_0^2 - 1$:
  \begin{itemize}
    
  \item if $m_0 > |m_1 + 1| \ \vee \ m_0 < -|m_1 + 1| $, there is another solution $k_3 = 0$;
  
  \item if $m_0 > \sqrt{1 - m_1^2} \ \vee \ m_0 < - \sqrt{1 - m_1^2}$, there is another solution $k_4 = \pi$. 
  \end{itemize}

  \item For $m_1 = -1$: all $k \in [0, 2\pi]$ are solutions if $m_0 > 0$ (but this cannot provide the correct $\gamma$).
\end{itemize}

Furthermore, the following equations

\begin{equation}
\begin{aligned}
F'_1(m_0, m_1) &= (m_1 + 1)^2 - m_0^2 < 0, \\
F'_2(m_0, m_1) &= 1 - m_1^2 - m_0^2 < 0, \\
F'_3(m_0, m_1) &= -(m_1 + 1 + m_0^2) < 0,
\label{eq:GBBP/computing_nu_E/phase_boundaries}
\end{aligned}
\end{equation}
correspond to the regions as shown in \cref{fig:two-band_twister_model_phase_diagram_comparison}a. Note that even though this is a re-derivation of the boundaries in parameter space, it was done so to compute $\nu_E$. As such, the boundaries of the regions given by \cref{eq:GBBP/computing_nu_E/phase_boundaries} are exactly \cref{eq:two-band/phase_diagram/phase_boundaries}. 

The value of $\nu_E$ for different $(m_0, m_1)$ pairs can be directly inferred from the number of distinct $k$ points where the band swapping occurs, as shown in \cref{fig:two-band_twister_model_phase_diagram_comparison} which is in good agreement with \cref{fig: NH_E}. Here, we elaborate on how \cref{fig:two-band_twister_model_phase_diagram_comparison}(b) is constructed from \cref{fig:two-band_twister_model_phase_diagram_comparison}(a): the colored regions in (a) correspond to different sets of $k$ points at which the energy bands exchange positions. The topological invariant $\nu_E$ is simply the number of such $k$ points in each case. The solid lines in the diagram denote the boundaries, i.e., the loci where exceptional points emerge. If two regions in parameter space have different colors (indicating different sets of $k$ values) but share the same value of $\nu_E$, and one can move from one region to the other without crossing a boundary, then these regions belong to the same knot/link. An exception arises in the blue region; although it shares $\nu_E$ with other areas, it cannot be continuously connected to them without crossing a boundary. However, since it has an odd value of $\nu_E$, it must be the unknot. 

We emphasize that $\nu_E$ has information when a boundary is crossed but it is unable to characterize a knot/link since it is a $\mathbb{Z}_2$ quantity. As such, the winding number \cref{eq:characterizing_non-Hermitian_braids/winding_number_element} is necessary. 

\section{Estimating the Fidelity in Quantum Simulations}
\label{app:ibm_hardware_details}

Here, we estimate the hardware usage time and fidelity following the IBM Quantum workload usage guide. The baseline estimate is
\begin{equation}
t_{\mathrm{use}}\approx t_{0}+(\tau_{\mathrm{rep}}+\tau_{\mathrm{circ}})\,N_{\mathrm{exec}},
\end{equation}
where $t_{0}$ is the per-sub-job overhead, $\tau_{\mathrm{rep}}$ is the repetition delay, $\tau_{\mathrm{circ}}$ is the circuit duration, and
\begin{equation}
N_{\mathrm{exec}}=N_{\mathrm{circ}}\times N_{\mathrm{shots}}
\end{equation}
is the total number of circuit executions. For jobs without advanced mitigation and with the default repetition delay, IBM further provides a quick estimate
\begin{equation}
t_{\mathrm{use}}\approx 2+0.00035\,N_{\mathrm{exec}},
\end{equation}
with $t_{\mathrm{use}}$ in seconds \cite{IBM_workload_usage}. For the simulations reported in this work, each circuit is sampled with
$N_{\mathrm{shots}} = 40000$. For a single circuit per data point ($N_{\mathrm{circ}}=1$), this gives
\begin{equation}
t_{\mathrm{use}}\approx 2 + 0.00035\times 40000 \simeq 16~{\mathrm{s}}.
\end{equation}
Summing over all hardware executions reported in this work, we estimate a total device usage of approximately 8 hours.

In this work, all hardware executions are performed on the \texttt{ibm\_marrakesh} processor, and the relevant calibration metrics are summarized in \cref{tab:ibm_calibration_summary}. To obtain a depth budget for our circuits, we adopt a simple error accumulation model. Specifically, we take the median CZ process error to be $e_{\mathrm{CZ}}=2.508\times 10^{-3}$ and use a representative median single-qubit gate error $e_{1q}\approx 10^{-4}$. Let $n_{\mathrm{CZ}}$ and $n_{1q}$ denote the number of CZ and single-qubit gates applied \emph{per layer} in the transpiled circuit. In the weak-noise regime, the circuit fidelity after $L$ repeated layers can be approximated as an exponential decay

\begin{equation}
F(L)\approx \exp\!\Big[-L\big(n_{\mathrm{CZ}}e_{\mathrm{CZ}}+n_{1q}e_{1q}\big)\Big].
\end{equation}
For the $(2+1)$-qubit circuit (see \cref{fig:methodology_illustration}a in the main text), we have 3 effective layers of CZ gates in total (with 1 CZ gate in each layer) and 16 layers of single-qubit gates (with 2 single-qubit gates in each layer), so that the corresponding accumulated error exponent is
\begin{equation}
3e_{\mathrm{CZ}}+32e_{1q}
=
3\times 2.508\times 10^{-3}
+
32\times 10^{-4}
=
1.0724\times 10^{-2}.
\end{equation}
Accordingly, the estimated high fidelity is
\begin{equation}
F\approx
\exp\!\left[-\left(3e_{\mathrm{CZ}}+32e_{1q}\right)\right]
=
\exp\!\left(-1.0724\times 10^{-2}\right)
\approx 0.989.
\end{equation}

\begin{table}[t]
\centering
\begin{tabular}{lcccc}
\toprule
Metric & Median & Min & Max & Unit / Notes \\
\midrule
Readout length & 2584 & 2584 & 2584 & ns \\
Single-qubit gate length & 36 & 36 & 36 & ns \\
Two-qubit gate length (CZ connection) & 68 & 68 & 80 & ns \\
CZ error (median) & $2.508\times10^{-3}$ & -- & -- & two-qubit error \\
Readout error (median) & $1.12\times10^{-2}$ & -- & -- & assignment error \\
$T_1$ (median) & 187.71 & -- & -- & $\mu$s \\
$T_2$ (median) & 110.45 & -- & -- & $\mu$s \\
\bottomrule
\end{tabular}
\caption{\textbf{\texttt{ibm\textunderscore marrakesh} hardware details.} Calibration metrics and operation durations.}
\label{tab:ibm_calibration_summary}
\end{table}

\end{document}